\numberwithin{equation}{section}
\theoremstyle{plain}
\newtheorem{thm}{Theorem}[section]
\newtheorem{prop}{Proposition}[section]
\newtheorem{lm}{Lemma}[section]
\newtheorem{claim}{Claim}
\numberwithin{claim}{section}
\numberwithin{cor}{section}
\newtheorem{example}{Example}[section] 
\theoremstyle{definition} 
\newtheorem{definition}{Definition} [section]
\numberwithin{conjecture}{section}
\theoremstyle{remark} 
\newtheorem{remark}{Remark}
\numberwithin{remark}{section}
\newcommand{\RomanNumeralCaps}[1]
    {\MakeUppercase{\romannumeral #1}}
\newcommand{\Var}{\mathrm{Var}}
\newcommand{\Cov}{\mathrm{Cov}}
\title{Systemic Risk Models for Disjoint and Overlapping Groups with Equilibrium Strategies}
\author{
\and  Yichen Feng\thanks{Department of Statistics and Applied Probability, University of California, Santa Barbara, CA 93106, USA (E-mail: \href{mailto:feng@pstat.ucsb.edu}{feng@pstat.ucsb.edu}).}
\and Jean-Pierre Fouque\thanks{Department of Statistics and Applied Probability, University of California, Santa Barbara, CA 93106, USA (E-mail: \href{mailto:fouque@pstat.ucsb.edu}{fouque@pstat.ucsb.edu}).}
  \and Ruimeng Hu\thanks{Department of Mathematics and Department of Statistics and Applied Probability, University of California, Santa Barbara, CA 93106, USA (E-mail: \href{mailto:rhu@ucsb.edu}{rhu@ucsb.edu}).}
\and Tomoyuki Ichiba\thanks{Department of Statistics and Applied Probability, University of California, Santa Barbara, CA 93106, USA (E-mail: \href{mailto:ichiba@pstat.ucsb.edu}{ichiba@pstat.ucsb.edu}).}
  }
\date{\today}
\begin{document}
\maketitle

\begin{abstract}

\end{abstract}
We analyze the systemic risk for disjoint and overlapping groups ({\it e.g.}, central clearing counterparties (CCP)) by proposing new models with realistic game features. Specifically, we generalize the systemic risk measure proposed in
[F. Biagini, J.-P. Fouque, M. Frittelli, and T. Meyer-Brandis, Finance and Stochastics, 24(2020), 513--564]
by allowing individual banks to choose their preferred groups instead of being assigned to certain groups. We introduce the concept of Nash equilibrium for these new models, and analyze the optimal solution under Gaussian distribution of the risk factor. We also provide an explicit solution for the risk allocation of the individual banks, and study the existence and uniqueness of Nash equilibrium both theoretically and numerically. The developed numerical algorithm can simulate scenarios of equilibrium, and we apply it to study the bank-CCP structure with real data and show the validity of the proposed model.

\section{Introduction}

Financial institutions are increasingly and tightly connected together at an unprecedented scale, and the complex dynamics of the inter-connectedness aggregate their idiosyncratic risks within the financial system. Consequently, failures of individual institutions due to excessive risk-taking may quickly propagate throughout the entire financial network and systemically cause cascading disasters. Such financial crises ({\it e.g.}, \cite{financial2011financial,haldane2011systemic,asmild2016controlling}) have dramatically demonstrated the importance of understanding the nature of systemic risk and designing models and methods to capture and analyze it. A large part of the current literature on systemic financial risk is concerned with the modeling structure of financial networks and the analysis of the contagion and the spread of a potential exogenous
shock into the system, {\it e.g.}, \cite{eisenberg2001systemic, gai2010liquidity, gai2010contagion, cifuentes2005liquidity, cherny2009new}. We refer interested readers to the book \cite{hurd2016contagion} for an exhaustive review. For a given financial network and a given random shock, one then determines the ``cascade'' mechanism, which generates many defaults. This mechanism often requires a detailed description of the balance sheet of each institution; assumptions on the interbank network and exposures, on the recovery rate at default, on the liquidation policy; the analysis of direct liabilities, bankruptcy costs, cross-holdings, leverage structures, fire sales, and liquidity freezes. Meanwhile, central clearing counterparties (CCPs) are introduced to the financial markets to mitigate the cascade defaults. They require default funds from their members to absorb the cost of clearing member's defauls, and several mechanisms of default funds have been designed \cite{albanese2020xva}.

%the overall risk allocation among the different members of a financial system has caught great attention. For example, a central clearing counterparty (CCP, see \cite{albanese2020xva}) is concerned about deciding the size of its default fund and allocating it in a effective way among the different clearing members.

In the previous works \cite{biagini2019unified, biagini2020fairness}, one of the authors (J.-P. Fouque) and his collaborators introduced a general class of systemic risk measures that allow for random allocations to individual banks before aggregation of their risks. They also proved the dual representation of a particular subclass of such systemic risk measures and the existence and uniqueness of the optimal allocation. They interpreted the systemic risk measures as the minimal amount of cash that secures the aggregated system by allocating capital to the single institutions before aggregating the individual risks, which allows for a possible ranking of the institutions in terms of systemic risk measured by the optimal allocations. \cite{armenti2018multivariate} developed an approach in a similar spirit, covering allocation first followed by aggregation frameworks.

In this paper, we generalize the systemic risk measure under exponential utility functions proposed in \cite{biagini2019unified, biagini2020fairness} by allowing individual banks to choose their allocations of the risks into different groups instead of being assigned to specific groups. This brings game features into modeling, making it more realistic and providing baselines for a CCP to design its default fund mechanism from rational banks. 
%and allowing for analysis of systemic risk for disjoint and overlapping groups. 
To solve the new models with game features, we first define the concept of Nash equilibrium based on banks' fair systemic risk allocation, a concept introduced in \cite{biagini2020fairness}, and discuss the existence and uniqueness of equilibrium strategies. Then some explicit formulas are derived under the Gaussian distribution of the risk factor. 
%the equilibrium solution under the Gaussian distribution of the risk factor. 
In the overlapping group case, we still focus on the risk measure produced by exponential utility functions and first provide explicit expressions for the systemic risk measure and fair risk allocation of each bank under general risk factor. Sensitivity and monotonicity properties are also established. The concept of Nash equilibrium is then extended to the overlapping group case, whose existence and uniqueness are discussed theoretically and numerically under Gaussian assumptions and with two groups. 
%an explicit solution for the individual banks' risk allocation and study the existence and uniqueness of Nash equilibrium both theoretically and numerically.
In particular, we propose a numerical algorithm based on fictitious play to identify the Nash equilibrium, and use it to study synthetic examples and the bank-CCP structure with real data, showing the validity of the proposed model.
 
%Particularly, we use the developed numerical algorithm to simulate scenarios of equilibrium, study the bank-CCP structure with real data, and show the validity of the proposed model. 

%{\bf Related literature.} There have been extensive studies on the systemic risk, and we refer interested readers to the book \cite{hurd2016contagion} for an exhaustive review of models, for example, the classical contagion model 

The rest of the paper is organized as follows. In Section~\ref{sec:DisGroups}, we first describe the systemic risk models with game features for disjoint groups. Then we introduce the concept of group formation and Nash equilibrium, and analyze the optimal solution under Gaussian distribution of the risk factor. In Section~\ref{sec:Overlapping}, we generalize the model and concept of Nash equilibrium to the overlapping group case. We propose numerical methods for computing Nash equilibrium, and give several examples in Section~\ref{sec:algo}. We make conclusive remarks in Section~\ref{sec:conclusion}.

\section{Fair Systemic Risk Measure on Disjoint Groups}
\label{sec:DisGroups}
\subsection{Review on systemic risk measure}
%\begin{figure}[ht]
%\centering
%  \includegraphics[width=0.9\linewidth, height = 10cm]{thm6-2}
%\caption{Biagini \cite{biagini2020fairness} - Theorem 6.2}
%\label{fig: Thm6.2}
%\end{figure}

A concept of systemic risk measure was proposed in \cite{biagini2019unified, biagini2020fairness}, where the authors considered the following problem of risk allocations of $\,N\,$ individuals labelled as $\,\{1, \ldots , N\}\,$. Given a joint distribution of an $\,N\,$-dimensional, real-valued random vector $\, {\mathbf X} \, :=\, (X^{1}, \ldots , X^{N})\,$ on a probability space $\, (\Omega, \mathcal F , \mathbb P) \,$, the risk sensitivity vector $\, {\bm \alpha} \, :=\, (\alpha^{1}, \ldots , \alpha^{N}) \in (0, \infty)^{N}\,$, the risk tolerance value $\, B < 0 \,$ and the partition set $\,\{I_{m} \, :=\, \{n_{m-1}+1 , \ldots , n_{m}\}, m \, =\, 1, \ldots , h\}\,$ (indexed by a vector $\,{\bm n} \, :=\, (n_{1}, \ldots , n_{h})\,$ with $\,0 \, =\, n_{0}<n_{1} < \cdots < n_{h} \, =\,  N\,$ for some $\,h \ge 1\,$) of the  $\,N\,$ elements $\,\{1, \ldots , N\}\,$, one defines the aggregate risk 
\begin{equation} \label{eq: AgRisk}
{\bm \rho} ( \mathbf X ) \, :=\,  \inf \Big \{ \sum_{n=1}^{N} Y^{n} : \mathbf Y \, =\, (Y^{1}, \ldots , Y^{N}) \in \mathcal C^{({\bf n})}_{0} , \mathbb E \Big[  \sum_{n=1}^{N} u_n ( X^{n} + Y^{n} ) \Big] \, =\,  B \Big \} \, 
\end{equation}
where we take $u_n = -\frac{1}{\alpha_n}e^{-\alpha_n\,x}$ as exponential utility functions, $\alpha_n$ could be interpreted as the risk aversion of individual $n$, and the random allocation $\, \mathcal C^{( \bf n )}_{0}\,$ of partition index vector $\, {\bm n} \,$ and the associated partition $\, \{ I_{m}, m \, =\,  1, \ldots , h\}\,$ is given by 
\begin{equation}
\begin{split}
\mathcal C^{( \bf n )}_{0} \, :=\,  \{ \mathbf Y \in \mathbb L^{0} ( \mathbb R^{N}) :  & \text{ there exists a real vector } {\bf d} \, :=\, (d_{1}, \ldots , d_{h}) \in \mathbb R^{h} \\
& \text{ such that } \sum_{i \in I_{m}} Y^{i} \, =\,  d_{m} \text{ for every } m \, =\,  1, \ldots , h \} \, . 
\end{split}
\end{equation}
The partition set $\, \{I_{m}, m \, =\, 1, \ldots  , h\}\,$ represents the grouping among the individuals and determined by the vector $\, {\bm n }\,$. 
Here $\, \mathcal C^{ ({\bm n})}_{0}\,$ is a subfamily of random vectors $\,\mathbb L^{0} (\mathbb R^{N}) \,$ associated with $\, {\bm n} \,$ such that all the partial sums of elements divided by the partition are deterministic real numbers. 

Theorem 6.2 in \cite{biagini2020fairness} shows that the infimum of \eqref{eq: AgRisk} is attained by %shown in fig.\ref{fig: Thm6.2}, 
\begin{equation}\label{eq: Thm-DisjointGroups}
\begin{split}
Y^{i}_{\mathbf X} \, :=\, - X^{i} + \frac{\, S_{m}  + d_{m}\,}{\,\alpha_{i} \, \beta_{m} \,}   \, , \quad S_{m} \, :=\,  \sum_{k\in I_{m}} X^{k} \, , \quad \beta_{m} \, :=\,  \sum_{k \in I_{m}} \frac{\,1\,}{\,\alpha_{k}\,} \, ,  \\
d_{m} \, :=\, \beta_{m} \log \Big( - \frac{\,\beta\,}{\,B\,} \mathbb E \Big[ \exp \Big ( - \frac{\,S_{m}\,}{\,\beta_{m}\,} \Big) \Big] \Big) \,, \quad \beta \, :=\, \sum_{n=1}^{N} \frac{\,1\,}{\,\alpha_{n}\,} \, =\,  \sum_{m=1}^{h} \beta_{m}, \, 
\end{split}
\end{equation}
for $\,m \, =\,  1, \ldots , h\,$ and $i \in I_m$; and 
\begin{equation}
{\bm \rho} ( \mathbf X ) \, =\, \sum_{n=1}^{N} Y^{n}_{\mathbf X} \, =\,  \sum_{m=1}^{h} d_{m} \, . 
\end{equation} 
Moreover, the systemic risk allocation $\,\rho^{i, ({\bm n})}(\mathbf X)\,$ of individual $\,i\,$ is given by 
\begin{equation} \label{eq: SysRAlloc}
\rho^{i, ({\bm n})}(\mathbf X) \, :=\, \mathbb E _{\mathbb Q^{m}_{\mathbf X}} [ Y^{i}_{\mathbf X} ] \, =\,  \Big( \mathbb E \Big[ \exp \Big( - \frac{\,S_{m}\,}{\,\beta_{m}\,}\Big) \Big]\Big)^{-1}  \mathbb E \Big [ Y^{i}_{\mathbf X}  \exp \Big( - \frac{\,S_{m}\,}{\,\beta_{m}\,}\Big) \Big]   \, ; \quad i \in I_{m},
\end{equation}
for $\, m \, =\,  1, \dots , h\,$, where $\, \mathbb Q_{\mathbf X}^{m}\,$ is a tilted probability measure, absolutely continuous with respect to $\,\mathbb P\,$, determined by the Radon-Nikodym derivative 
\begin{equation}\label{eq: QmX}
\frac{\,{\mathrm d} \mathbb Q_{\mathbf X}^{m}\,}{\,{\mathrm d} \mathbb P\,} \, :=\, \Big( \mathbb E \Big[ \exp \Big( - \frac{\,S_{m}\,}{\,\beta_{m}\,}\Big) \Big]\Big)^{-1}   \exp \Big( - \frac{\,S_{m}\,}{\,\beta_{m}\,}\Big) \, , \quad m \, =\,  1, \ldots , h \, . 
\end{equation}
By the construction, one has
\begin{equation}
 {\bm \rho} ( \mathbf X ) \, =\,  \sum_{m=1}^{h} d_{m} \, =\,  \sum_{m=1}^{h} \sum_{i\in I_{m}} \rho^{i, ({\bm n})} (\mathbf X ) \, =\, \sum_{n=1}^{N} \rho^{n, ({\bm n})} (\mathbf X)  \, . 
\end{equation} 

\subsection{Groups formation and Nash equilibrium}
In this section, we generalize the systemic risk measure to a game setup. For a game with $N$ individuals, we assume there are $N$ buckets $(B_1,\ldots, B_N)$ for each individual to choose which one she belongs to. The choice of individual $n$, $n=1,\ldots,N$, is denoted by $a_n$ and $a_n=j$ means individual $n$ chooses the bucket $j$. We call $\, \mathscr A\,$ the set of all strategies. A set of strategies ${\bm a}\,:=\, (a_1,\ldots, a_N)\in \mathscr A$ generates $m$ groups by considering only the non-empty buckets for some $1\leq m\leq N$.  Different sets of strategies may generate the same groups denoted by $\mathcal{C}({\bm a})$. We say two strategies $\, {\bm a}^{(1)} \,$ and $\, {\bm a}^{(2)}\,$ are equivalent, if the partitions $\, \mathcal C ({\bm a}^{(1)}) \,$ and $\, \mathcal C ( {\bm a}^{(2)}) \,$ are equivalent. With the individual systemic risk allocation in \eqref{eq: SysRAlloc}, the objective function of individual $n$ under the partition $\mathcal C ({\bm a})$ is defined by
\begin{align}
\rho^n(\mathcal{C}({\bm a})) \, =\, \rho^n (\mathbf X; \mathcal {C} ({\bm a})) \,=\,\mathbb{E}_{{\mathbb Q}_\mathbf{X}^m}[Y_\mathbf{X}^n] ,
\end{align}
where $n\in I_m$ for some $m\in \{1,\ldots,h\}$, $Y_\mathbf{X}^n$
 and ${\mathbb Q}_\mathbf{X}^m$ depend on the partition $\mathcal{C}(\bm a)$.
 
Let $\widehat{\bm a}=(\hat{a}_1,\ldots,\hat{a}_N)\in \mathscr A$ and $(\widehat{\bm a}^{-n},a^n)=(\hat{a}_1,\ldots,\hat{a}_{n-1},a^n,\hat{a}_{n+1},\ldots,\hat{a}_N)\in \mathscr A$.
\begin{definition}
With the systemic risk allocation map $\, \mathcal C \mapsto \rho^{\cdot} (\mathbf X; \mathcal C ) \,$ in \eqref{eq: SysRAlloc} and the above definitions, the configuration $\mathcal{C}(\widehat{\bm a})$ is a Nash Equilibrium if for every $n=\,  1, \ldots , N \,$ and $a^n$,
\begin{align}\label{eq: Nash net}
\rho^n(\mathbf X; \mathcal{C}(\widehat{\bm a}))\,\leq\,\rho^n(\mathbf X;\mathcal{C}(\widehat{\bm a}^{-n},a^n)), 
\end{align}
{\it i.e.}, the systemic risk allocation of individual $n$ is minimized under grouping $\mathcal{C}(\widehat{\bm a})$, given other individuals' choices are $\widehat {\bm a}^{-n}$. If there are multiple Nash equilibrium strategies  satisfying \eqref{eq: Nash net} and all the partitions associated with these Nash equilibrium strategies are equivalent, we say the Nash equilibrium strategy is unique up to equivalence relation. 
\end{definition}

In this paper, we shall consider the following questions: 
\begin{itemize}
\item Does a Nash equilibrium exist? 
\item If it exists, is it unique?
\end{itemize}

We view the equilibrium as a network of risk-sharing. It is easy to show that a single group with all the individuals, called full risk-sharing, is a Nash equilibrium. We call this the trivial Nash equilibrium. This follows from the fact that a configuration with a group having only one individual is never a Nash equilibrium; see Section 6.2 ``Monotonicity'' in \cite{biagini2020fairness}. For simplicity, we take Gaussian distribution for the risk factors and discuss the proposed model in detail with some examples.

\subsection{Extreme examples}

Here we discuss some extreme cases. Let $\,\lvert I_{m}\rvert ( \ge 1)\,$ be the number of elements in $\,I_{m}\,$ for $\,m \ge 1\,$. Under the exchangeability assumption on the joint distribution of $\, \mathbf X\,$ and the identical exponential utility functions with $\,\alpha_{1} \, =\,  \cdots \, =\,  \alpha_{N} \, =\,  \alpha > 0 \,$,  
the marginal distributions of $\, X^{i} e^{-S_{m}/\beta_{m}}\,$, $\,i \in I_{m}\,$ are identical with the same expectation, that is, 
\begin{equation} \label{eq: expectation of Xe}
\mathbb E \Big[ - {X^{i}} e^{-S_{m}/\beta_{m}} \Big] \, =\, \frac{\,- \beta_{m}\,}{\, \lvert I_{m}\rvert\,} \mathbb E \Big[ \frac{\, S_{m}\, }{\beta_{m}} e^{-S_{m}/\beta_{m}}  \Big] \, ; \quad i \in I_{m},
\end{equation}
and hence, with $\, \beta_{m} \, =\,  \alpha^{-1} \, \lvert I_{m} \rvert \, $, $\,\beta \, =\,  \alpha^{-1} \, N \,$ for $\,m \, =\,  1, \ldots , h \,$, 
\[
\mathbb E \Big[ Y^{i}_{\mathbf X} e^{-S_{m}/\beta_{m}} \Big] \, =\, \mathbb E \Big[ \Big( - X^{i} + \frac{\,S_{m}\,}{\,\alpha_{k} \beta_{m}\,} + \frac{\,d_{m}\,}{\,\alpha_{i} \beta_{m}\,} \Big) e^{-S_{m}/\beta_{m}}  \Big] 
\, =\,  \frac{\,d_{m}\,}{\, \lvert I_{m}\rvert\,} \mathbb E [ e^{-S_{m}/ \beta_{m}} ] \, . 
\]
Then substituting it into \eqref{eq: SysRAlloc} gives the systemic risk allocation of individual $i$ 
\[
\rho^{i} ( \mathbf X; \{I_{\cdot}\} ) \, =\, \frac{\,d_{m}\,}{\, \lvert I_{m}\rvert\,} \, =\, \frac{\,1\,}{\,\alpha\,} \log \Big(  \frac{\,-N\,}{\,\alpha \cdot (-B)\,} \mathbb E \Big[ e^{- \alpha S_{m}/ \lvert I_{m}\rvert\, }\Big] \Big)  \, ; \quad i \in I_{m} \, . 
\]

%In addition, 

Under the i.i.d. Gaussian distribution assumptions for $\, \mathbf X\,$, {\it i.e.},  $\, X^{i}\,$, $\,i \, =\, 1, \ldots , N \,$ are independent, identically distributed Gaussian random variables with mean $\,\mu \in \mathbb R\,$ and variance $\, \sigma^{2} > 0  \,$, $\,S_{m} \, =\, \sum_{i\in I_{m}} X^{i}\,$ is distributed in normal with mean $\,\mu \lvert I_{m}\rvert\,$ and variance $\, \sigma^{2}\, \lvert I_{m}\rvert \,$ for $\, m \ge 1\,$. Direct calculation yields the systemic risk allocation of individual $i$ 
\begin{equation} \label{eq: exe Gauss}
\rho^{i} ( \mathbf X; \{I_{\cdot}\})  \, =\,  \frac{\,1\,}{\,\alpha\,} \log \Big( \frac{\,N \,}{\,\alpha \cdot (-B)\,} \Big) - \mu  +  \frac{\,\alpha \sigma^{2}\,}{\, 2 \lvert I_{m}\rvert\,} \, ; \quad i \in I_{m}, \, 
\end{equation}
is a decreasing function of the size $\, \lvert I_{m}\rvert\,$ of the group $\,I_{m}\,$ which individual $i$ belongs to. 
%is minimized by $\, \lvert I_{m}\rvert \, =\,  N\,$, and hence, the equilibrium is uniquely determined by the complete isolation: 
%\begin{equation}
%h \, \equiv \,  N \, , \quad {\bf n}^{\ast} \, =\,  \{ 1, \ldots , N\} \, , \quad I_{m} \, =\,  \{m\} \, ; \quad m \, =\,  1, \ldots , N \, .  
%\end{equation}

\begin{example}[I.I.D. Gaussian with the same exponential utility function] \label{eq: unique Nash eq1} The strategy $\, \widehat{\bm a} \, :=\, (1, \ldots , 1) \,$ that everyone chooses the same group is a unique Nash equilibrium.
%up to equivalence relation in the sense of Definition \ref{def: Nash}
In this case $\, \mathcal C ( \widehat{\bm a}) \, =\, \{ I_{1} \, =\,  \{ 1, \ldots , N\} \} \,$ with $\, \lvert I_{1}\rvert \, =\,  N\,$, $\, h \, =\,  1\,$. Every individual $\,k \,$ belongs to the same group $\,I_{1}\,$ and by \eqref{eq: exe Gauss}, 
\[
\rho^{k} ( \mathbf X \, ; \mathcal C( \widehat{\bm a} ))  \, =\,  \frac{\,1\,}{\,\alpha\,} \log \Big( \frac{\,N \,}{\,\alpha \cdot (-B)\,} \Big) - \mu  +  \frac{\,\alpha \sigma^{2}\,}{\, 2 N\,} \le \rho^{k} ( \mathbb X \, ; \mathcal C ( \widehat{\bm a}^{-k}, {a}_{k} ) ) \, ; \quad {\bm a} \in \mathscr A \, . 
\]
To see its uniqueness, if $\, {\bm a}^{\ast} \,$ were a Nash equilibrium strategy with $\, \mathcal C ( {\bm a}^{\ast}) \, =\,  \{ I_{m}^{\ast}, m \, =\,  1, \ldots , h^{\ast}\}\,$ which is not equivalent to $\, \mathcal C ( \widehat{\bm a }) \,$, then $\, h^{\ast} \ge 2\,$ and all the sets $\,  I_{m}^{\ast} \,$, $\, m \, =\,  1, \ldots , h^{\ast}\,$ satisfy $\, 1 \le  \lvert I_{m}^{\ast} \rvert \le N-1\,$. Take the group number $\,  {\ell}_{0} \, :=\, \text{arg}\min_{1\le i\le h} \lvert I_{i}^{\ast} \rvert \,$, of its smallest size. For a fixed individual $\,k \in I_{\ell_{0}}^{\ast}\,$, there exists $\, {\bm a} \in  \mathscr A\,$ with $\, {a}_{k} \, =\, j_{0}\,$ such that in the new partition $\,\mathcal C ( {\bm a}^{\ast -k}, {a}_{k}) \, =\,  \{ {I}_{m}\}\,$ the individual $\,k  \,$ belongs to another group $\,I_{m_{0}}\,$ with $\, \lvert I_{m_{0}}\rvert > \lvert I_{\ell_{0}}\rvert\,$, and hence by \eqref{eq: exe Gauss}, 
\[
\rho^{k} ( \mathbf X ; \mathcal C ( {\bm a}^{\ast}) ) \, =\,  \rho^{k}( \mathbf X ; \{ I^{\ast}_{m}, m \, =\, 1, \ldots , h^{\ast}\}) >  \rho^{k} ( \mathbf X; \mathcal C ( {\bm a}^{\ast -k }, j_{0})) \, . 
\]
This contradicts with the definition of Nash equilibrium. Thus, $\, \widehat{\bm a}\,$ is a unique Nash equilibrium up to equivalence relation. 
\end{example}

\begin{example}[Non-random, equal outcomes with the same exponential utility function] Instead, if $\, \mathbf X\,$ is a deterministic constant vector of $\,\mu (\in \mathbb R) \,$'s with $\, \sigma^{2} \equiv 0\,$, that is, $\, X^{i} \, =\,  \mu\,$ for every $\,i \, =\,  1, \ldots, N\,$, then there is no contribution from $\,I_{m}\,$ in the systemic risk allocation
\begin{equation}
\rho^{i} ( \mathbf X; \{I_{\cdot}\} ) \, =\,  \frac{\,1\,}{\,\alpha\,} \log \Big( \frac{\,N \,}{\,\alpha \cdot (-B)\,} \Big)  - \mu \, ; \quad i \in I_{m} \, , 
\end{equation} 
and hence, the risk sharing is arbitrary and undetermined. 
\end{example}

%\yf{Whether or not keeping the rest of the section is to be determined. Notations have been corrected to be consistent.} \rh{let us keep it here.}
%\bigskip 

%Next, we shall relax the condition on $\,\alpha\,$'s. 
%for example, the case that there are two kinds of utility functions with parameters $\,\alpha_{1}, \ldots , \alpha_{N}\,$ such that 
%\begin{equation} \label{eq: two groups}
%0 < \alpha_{1}\, =\,  \cdots \, =\,  \alpha_{N_{0}} < \alpha_{N_{0}+1} \, =\,  \cdots \, =\,  \alpha_{N}
%\end{equation} 
%for some fixed $\,1\le N_{0} < N\,$.

Next, we shall relax the condition on $\,\alpha\,$'s. We still assume $\, X^{i}\,$ are i.i.d. Gaussians with mean $\, \mu \in \mathbb R \,$ and variance $\, \sigma^{2} > 0 \,$. In this case, $\, S_{m} / \beta_{m}\,$ is normally distributed with mean $\, \mu \lvert I_{m}\rvert / \beta_{m}\,$ and variance $\, \lvert I_{m}\rvert \sigma^{2} / \beta_{m}^{2}\,$. Then direct calculations produces 
\begin{equation}
\begin{split} \label{eq: 2.6a}
\frac{d_{m}}{\, \beta_{m}\, } \, =\,  \log \Big( \frac{\,\beta\,}{\,-B\,} \Big) + \log \mathbb E \Big[ e^{-S_{m}/\beta_{m}}\Big]  \, =\,  \log \Big( \frac{\,\beta\,}{\,-B\,} \Big)  - \frac{\, \mu \lvert I_{m}\rvert\,}{\,\beta_{m}\,} + \frac{\,\lvert I_{m}\rvert \, \sigma^{2}\,}{\,2 \beta_{m}^{2}\,}  \, , 
\\
\mathbb E_{\mathbb Q^{m}_{\mathbb X}} \Big[ \frac{\,S_{m}\,}{\,\beta_{m}\,}\Big]  \, =\,  \Big( \mathbb E \Big[  e^{-S_{m}/\beta_{m}} \Big] \Big)^{-1}
\mathbb E \Big[  \frac{\,S_{m}\,}{\,\beta_{m}\,} e^{-S_{m}/\beta_{m}} \Big]
\, =\, \frac{\,\mu \, \lvert I_{m}\rvert\,}{\,\beta_{m}\,} - \frac{\, \lvert I_{m}\rvert\,\sigma^{2}\, }{\, \beta_{m}^{2}\,},
\end{split}
\end{equation}
where $\,\mathbb Q^{m}_{\mathbf X}\,$ is the tilted measure defined by \eqref{eq: QmX}. Also by \eqref{eq: expectation of Xe}, 
\begin{equation} \label{eq: 2.6b}
\mathbb E  \Big[ - X^{k} e^{-S_{m}/\beta_{m}} + \frac{\,S_{m}\,}{\,\alpha_{k} \beta_{m}\,} e^{-S_{m}/\beta_{m}}\Big] \, =\,  \Big( \frac{\,-\beta_{m}\,}{\, \lvert I_{m} \rvert \,} + \frac{\,1\,}{\,\alpha_{k}\,}\Big) \mathbb E \Big[  \frac{\,S_{m}\,}{\,\beta_{m}\,} e^{-S_{m}/\beta_{m}} \Big]\,,
\end{equation}
for $\, k \in I_{m}\,$. Hence, substituting \eqref{eq: 2.6a}-\eqref{eq: 2.6b} into \eqref{eq: SysRAlloc} brings, for $\, k \in I_{m}\,$,   
\begin{equation}
\begin{split}
\rho^{k}( \mathbf X ; \{I_{\cdot}\}) \, = &\,  \Big(  \frac{\,-\beta_{m}\,}{\, \lvert I_{m} \rvert \,} + \frac{\,1\,}{\,\alpha_{k}\,}\Big) \mathbb E_{\mathbb Q^{m}_{\mathbf X}} \Big[ \frac{\,S_{m}\,}{\,\beta_{m}\,} \Big]   + \frac{\,d_{m}\,}{\,\alpha_{k} \beta_{m}\,}\,  \\
\, = &\,  - \mu + \frac{\,1\,}{\,\alpha_{k}\,} \log \Big( \frac{\,\beta\,}{\,(-B) \,} \Big) + \frac{\,\sigma^{2}\,}{\,\beta_{m}\,} \Big( 1 - \frac{\, \lvert I_{m}\rvert\,}{\,2 \alpha_{k} \beta_{m}\,} \Big) \, . 
\end{split}
\end{equation} 
%Thus, %Under the condition \eqref{eq: two groups}, 
In order to find a Nash equilibrium, we evaluate the quantity 
\begin{equation} \label{eq: etak}
\eta^{k} (\{I_{\cdot}\}) \, :=\, \frac{\,1\,}{\,\beta_{m}\,} \Big( 1 - \frac{\, \lvert I_{m}\rvert\,}{\,2 \alpha_{k} \beta_{m}\,} \Big) \, =\,  \alpha_{k}  \frac{\,\prod_{\{j\in I_{m}:j\neq k\}} \alpha_{j}\,}{\,\sum_{j\in I_{m}} \alpha_{j}\,} \Big( 1 - \frac{\, \lvert I_{m}\rvert \, \prod_{\{j\in I_{m}:j\neq k\}} \alpha_{j}\,}{\, 2 \sum_{j\in I_{m}} \alpha_{j}\,} \Big) \, ; \quad k \in I_{m} \, .  
\end{equation}
%where 
%\[
%\beta_{m} \, =\,  \sum_{k \in I_{m}} \frac{\,1\,}{\,\alpha_{k}\,} \, =\,  \frac{\, \lvert I_{m,1}\rvert\,}{\,\alpha_{1}\,} + \frac{\,\lvert I_{m,2}\rvert\,}{\,\alpha_{N}\,} \, , \quad \beta \, =\,  \frac{\,N_{0}\,}{\,\alpha_{1}\,} + \frac{\,N-N_{0}\,}{\,\alpha_{N}\,} \, 
%\]
%with $\,I_{m,1}\, :=\,  I_{m} \cap \{1, \ldots , N_{0}\}\,$ and $\, I_{m,2} \, :=\, I_{m} \cap \{ N_{0}+1 , \ldots , N\} \,$. 
Note that $\eta^{k}(\{I_{\cdot}\})$ is an increasing function of $\alpha_{k}$, and thus if $\text{arg}\max_{k \in I_{m}} \alpha_{k} = k_{0}$, then 
\begin{equation} \label{eq: obs1}
\eta^{k_{0}} ( \{I_{\cdot}\}) \ge \eta^{k} (\{ I_{\cdot}\}) \, ; \quad \text{ for every } \, \, k  \in I_{m} \, . 
\end{equation}
However, it is { not} necessarily true that $\, \rho^{k_{0}} ( \mathbf X ; \{I_{\cdot}\}) \ge \rho^{k} ( \mathbf X ; \{ I_{\cdot} \} ) \,$ for every $\, k \ge 0 \,$, because of the term $\, (1/\alpha_{k}) \log ( \beta / ( - B)) \,$. Note that $\, \eta^{k} ( \{ I_{\cdot} \}) \,$ is a decreasing function of $\, \lvert I_{m}\rvert /\beta_{m}\,$ and is a decreasing function of $\, \beta_{m}\,$.

%\bigskip 

%$\,\bullet \,$

\noindent  If $\,N \, =\, 2 \,$ and $\, 0 < \alpha_{1} \le \alpha_{2}\,$, then $\,\widehat{\bm a} \, :=\, (1,1) \,$ is a unique Nash equilibrium.
%up to the equivalent relation in the sense of Definition \ref{def: Nash}. 
This is directly verified by the inequalities
\begin{equation} \label{eq: 2.9}
\eta^{k}(\mathcal C ( \widehat{\bm a})) \, =\,  \alpha_{k} \cdot \frac{\,\alpha_{1} \alpha_{2}\,}{\,(\alpha_{1} + \alpha_{2})^{2}\,} \le \frac{\,\alpha_{k}\,}{\,4\,} < \frac{\,\alpha_{k}\,}{\,2\,} \, =\,  \eta^{k} (\{\{1\}, \{2\} \}) \, ; \quad k \, =\,  1, 2 \,  ,
\end{equation}
and hence, $\, \rho^{k} ( \mathbf X;  \mathcal C ( \widehat{ \bm a}) ) < \rho^{k} ( \mathbf X  ; \mathcal C ( \widehat{\bm a }^{-k}, {\bf a} )) \,$ for every $\, k \, =\,  1, 2\,$, and $\, {\bm a} \in \mathscr A\,$ which is not equivalent to $\, \widehat{\bm a}\,$. 

\noindent If $\,N \ge 2\,$, and $\, 0 < \alpha_{1} \le \alpha_{2} \cdots \le \alpha_{N}\,$, then $\, \widehat{\bm a} \, :=\, (1, \ldots , 1) \,$ is a Nash equilibrium for a similar calculation to \eqref{eq: 2.9}, 
\[
\eta^{k} ( \mathcal C ( \widehat{\bm a} )) \, =\,  \alpha_{k} \cdot \frac{\,\prod_{i \neq k} \alpha_{i}\,}{\,\alpha_{1} + \cdots + \alpha_{k}\,} \Big( 1 - \frac{\, N \prod_{i\neq k} \alpha_{i}\,}{\,2  ( \alpha_{1} + \cdots + \alpha_{N})\,} \Big) 
<  \alpha_{k} \cdot \frac{\,\prod_{i \neq k} \alpha_{i}\,}{\,\alpha_{1} + \cdots + \alpha_{k}\,} \Big( 1 - \frac{\, \prod_{i\neq k} \alpha_{i}\,}{\, ( \alpha_{1} + \cdots + \alpha_{N})\,} \Big) 
\]
\[
\le \frac{\,\alpha_{k}\,}{\,4\,} < \frac{\,\alpha_{k}\,}{\,2\,} \, =\,  \eta^{k} ( \{\{k\} , \{1, \ldots , N\} \setminus \{k\} \} ) \, ; \quad k \, =\,  1, \ldots , N \, . 
\]
There is no reason to move out of the alliance of $\,\{1, \ldots , N\} \,$ and to become an outcast. This observation can be generalized: under this setup, there is no reason to move out of the alliance of a group $\, I_{m}\,$ of size greater than or equal to $\,2\,$, {\it i.e.}, $\, \lvert I_{m}\rvert \, \ge \, 2 \,$ and to become an outcast. 

\bigskip

We conjecture that $\,\widehat{\bm a}\,$  is unique Nash equilibrium 
%up to the equivalent relation in the sense of Definition \ref{def: Nash} 
under a wide range of configurations of $\, \alpha\,$'s. 
For each $\, \mathcal C ({\bm a}) \, =\,  \{ I_{m}, m \, =\, 1, \ldots , h\} \,$, let us consider the group heads $\,k_{m}\,$, $\, m \, =\,  1, \ldots , h\,$ and the head of the group heads $\, k^{\ast}\,$ by 
\begin{equation} \label{eq: heads}
k_{m} \, :=\,  \text{arg} \max_{\ell \in I_{m}} \alpha_{\ell} \, ; \quad m \, =\,  1, \ldots , h \, , \quad 
k^{\ast}\, :=\,  \text{arg} \!\!\!\!\!\!\!\!\max_{k \in \{k_{1}, \ldots , k_{m}\}} \eta^{k} ( \mathcal C ({\bf a})) \, . 
\end{equation}
Let us denote by $\, m^{\ast}\,$ the group name of $\,k^{\ast}\,$, {\it i.e.}, $\, k^{\ast} \in I_{m^{\ast}}\,$. By setting the group heads and the head of the group heads, we see from the observation made in \eqref{eq: obs1} that 
\begin{equation} \label{eq: headsheads1}
\eta^{k_{m}} (\mathcal C ({\bm a})) \ge \eta^{\ell} (\mathcal C ({\bm a})) \, ; \quad \text{ for every } \ell \in I_{m}\, , \quad m \, =\,  1, \ldots , h \, , 
\end{equation}
and 
\begin{equation} \label{eq: headsheads2}
\eta^{k^{\ast}} ( \mathcal C ( {\bm a })) \ge \eta^{k_{m}} ( \mathcal C ( { \bm a })) \, ; \quad \text{ for every } m \, =\,  1, \ldots , h \, . 
\end{equation}

\begin{lm} If there is a group head $\,k_{\ast} \in \{k_{1}, \ldots , k_{m}\} \setminus \{ k^{\ast}\} \,$ in a group $\,m_{\ast}\,$, i.e., $\,k_{\ast} \in I_{m_{\ast}}\,$ from $\, \mathcal C ( {\bm a}) \, =\,  \{I_{m}, m \, =\,  1, \ldots , h \} \,$ such that 
\begin{equation} \label{eq: cond1}
\frac{\,\alpha_{k^{\ast}}}{ \alpha_{k_{\ast}}} \, \le\, 1 + \frac{\,1\,}{\, \lvert I_{m_{\ast}}\rvert\,} \,, 
\end{equation}
then $\,\eta^{k^{\ast}}(\mathcal C( {\bm a})) \ge \eta^{k^{\ast}} ( \mathcal C ( \widetilde{\bm a})) \,$, where $\, \mathcal C ( \widetilde{\bm a}) \, =\,  \{ \widetilde{I}_{m}, m \, =\,  1, \ldots , h \}\,$ is obtained only by removing $\,k^{\ast}\,$ from group $\,m^{\ast}\,$ and adding $\,k^{\ast}\,$ into group $\,m_{\ast}\,$, that is, 
\[
\widetilde{I}_{m^{\ast}} \, :=\, I_{m^{\ast}} \setminus \{k^{\ast} \} \, , \quad \widetilde{I}_{m_{\ast}} \, :=\, I_{m_{\ast}} \cup \{ k^{\ast} \} \, . 
\]
In addition, either if the inequality in \eqref{eq: cond1} is strict or if the strict inequality $\,\eta^{k^{\ast}} (\mathcal C ( {\bm a})) > \eta^{k_{\ast}} ( \mathcal C ( {\bm a})) \,$ holds, then $\, \mathcal C({\bm a}) \,$ is not a Nash equilibrium. 
\end{lm}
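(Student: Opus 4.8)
The plan is to collapse the whole statement onto the single scalar $\eta^{k}$ from \eqref{eq: etak}. Under the i.i.d. Gaussian assumption the allocation derived just above \eqref{eq: etak} splits as
\[
\rho^{k}(\mathbf X;\{I_\cdot\}) \;=\; -\mu + \frac{1}{\alpha_k}\log\!\Big(\frac{\beta}{-B}\Big) + \sigma^{2}\,\eta^{k}(\{I_\cdot\}),
\]
and the first two terms are partition-independent (recall $\beta=\sum_{n}\alpha_n^{-1}$ is fixed). Hence, for a fixed individual, comparing risk allocations across configurations is governed entirely by $\eta^{k}$, so it suffices to prove the asserted inequality for $\eta^{k^\ast}$; the Nash conclusion will then follow because relocating $k^\ast$ from $m^\ast$ to $m_\ast$ is exactly the single-player deviation $(\widehat{\bm a}^{-k^\ast},m_\ast)$ appearing in \eqref{eq: Nash net}. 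First I would record that $k^\ast$ is the head of its own group, i.e. $k^\ast=k_{m^\ast}$ (being a group head that lies in $I_{m^\ast}$, it must be the unique head of that group), so that $\eta^{k^\ast}(\mathcal C(\bm a))$ is a head value and \eqref{eq: headsheads2} applies.

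The key reduction is to bound $\eta^{k^\ast}(\mathcal C(\bm a))\ge \eta^{k_\ast}(\mathcal C(\bm a))$ by \eqref{eq: headsheads2}, and then to prove the one-group inequality $\eta^{k_\ast}(\mathcal C(\bm a))\ge \eta^{k^\ast}(\mathcal C(\widetilde{\bm a}))$, which compares the head value of $k_\ast$ in the original $I_{m_\ast}$ with the value of $k^\ast$ in the enlarged group $\widetilde I_{m_\ast}=I_{m_\ast}\cup\{k^\ast\}$. Writing $a=\alpha_{k_\ast}$, $b=\alpha_{k^\ast}$, $\gamma=1/b$, $s=|I_{m_\ast}|$, $\beta=\beta_{m_\ast}$ and $\beta'=\beta+\gamma$, and inserting the explicit form of $\eta$ from \eqref{eq: etak}, I would clear denominators to reduce this to the polynomial inequality
\[
2\beta\beta' + (s+1)\beta^{2} \;\ge\; \frac{sb}{a}\,\beta'^{2}.
\]

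The hard part will be that \eqref{eq: cond1} alone does not close this: it only yields $sb/a\le s+1$, and substituting that bound leaves a quadratic in $\beta$ that can fail for small $\beta$. The resolution is to invoke a second, automatic constraint, namely that $k_\ast$ is the head of $I_{m_\ast}$, so $\beta=\beta_{m_\ast}\ge s/\alpha_{k_\ast}=s/a$, which gives $sb/a\le b\beta$ as well. Thus $sb/a\le \min\{s+1,\,b\beta\}$, and I would split on the sign of $b\beta-(s+1)$, i.e. on $\beta\gtrless(s+1)\gamma$. When $b\beta\ge s+1$, bounding the right-hand side by $(s+1)\beta'^{2}$ reduces the claim to $2\beta^{2}-2s\beta\gamma-(s+1)\gamma^{2}\ge 0$, which holds since $\beta\ge(s+1)\gamma$; when $b\beta<s+1$, bounding by $b\beta\,\beta'^{2}$ and dividing by $\beta$ reduces it to $\beta^{2}-(s+1)\beta\gamma-\gamma^{2}\le 0$, which holds because $\beta<(s+1)\gamma$ lies strictly below the positive root $\tfrac{\gamma}{2}[(s+1)+\sqrt{(s+1)^{2}+4}\,]$. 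This establishes the first assertion.

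Finally, for the equilibrium statement I would track strictness along the chain $\eta^{k^\ast}(\mathcal C(\bm a))\ge\eta^{k_\ast}(\mathcal C(\bm a))\ge\eta^{k^\ast}(\mathcal C(\widetilde{\bm a}))$. If the head-of-heads inequality $\eta^{k^\ast}(\mathcal C(\bm a))>\eta^{k_\ast}(\mathcal C(\bm a))$ is strict, the first link is strict; if instead \eqref{eq: cond1} is strict, then $sb/a<s+1$ makes the one-group bound strict in the relevant case. Either way $\eta^{k^\ast}(\mathcal C(\bm a))>\eta^{k^\ast}(\mathcal C(\widetilde{\bm a}))$, so by the $\rho$–$\eta$ correspondence above $\rho^{k^\ast}(\mathbf X;\mathcal C(\bm a))>\rho^{k^\ast}(\mathbf X;\mathcal C(\widetilde{\bm a}))$, exhibiting a profitable unilateral deviation for $k^\ast$ and contradicting \eqref{eq: Nash net}. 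I expect the two-case polynomial analysis to be the only delicate point; the reductions and the final strictness bookkeeping are routine once the two constraints on $\beta$ are combined.
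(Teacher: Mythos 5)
Your proposal is correct, and at the structural level it coincides with the paper's proof: both reduce the comparison of risk allocations to the scalar $\eta^{k}$ of \eqref{eq: etak} (using that $-\mu$ and $\alpha_k^{-1}\log(\beta/(-B))$ are partition-independent), both take the first link $\eta^{k^{\ast}}(\mathcal C({\bm a})) \ge \eta^{k_{\ast}}(\mathcal C({\bm a}))$ from \eqref{eq: headsheads2}, and both close the remaining one-group inequality $\eta^{k_{\ast}}(\mathcal C({\bm a})) \ge \eta^{k^{\ast}}(\mathcal C(\widetilde{\bm a}))$ from exactly the same two ingredients: condition \eqref{eq: cond1} and the group-head property $\beta_{m_{\ast}} \ge \lvert I_{m_{\ast}}\rvert/\alpha_{k_{\ast}}$. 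Where you genuinely differ is in how that last inequality is executed. The paper writes $\eta^{k} = f(1/\beta_{m}; \lvert I_{m}\rvert/(2\alpha_{k}))$ with the quadratic $f(x;a)=x(1-ax)$, observes that the head property places the whole interval $(0,1/\beta_{m_{\ast}}]$ inside the region where $f(\cdot\,;\lvert I_{m_{\ast}}\rvert/(2\alpha_{k_{\ast}}))$ is increasing, and factors the inequality into a monotonicity step (moving the argument from $1/\beta_{m_{\ast}}$ down to $1/(\beta_{m_{\ast}}+1/\alpha_{k^{\ast}})$) followed by a parameter comparison at that point, which is exactly \eqref{eq: cond1}; no case analysis is needed. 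You instead clear denominators to the polynomial inequality $2\beta\beta' + (s+1)\beta^{2} \ge (sb/a)\,\beta'^{2}$ and split on $\beta \gtrless (s+1)\gamma$, invoking $sb/a \le s+1$ in the first case and $sb/a \le b\beta$ in the second; I verified the reduction and both case arguments, and they are sound. Your route is more computational, but it has the merit of making fully explicit the point you flag at the outset — that \eqref{eq: cond1} alone cannot close the inequality and the head property of $k_{\ast}$ is indispensable — a fact the paper's proof exploits only implicitly through its choice of monotonicity interval. Two minor remarks on your strictness bookkeeping: the phrase ``strict in the relevant case'' should be expanded to note that when \eqref{eq: cond1} is strict the case $\beta \ge (s+1)\gamma$ becomes strict, while the case $\beta < (s+1)\gamma$ is strict unconditionally by your own root argument, so the chain is strict either way; and in fact both your argument and the paper's show the one-group inequality is always strict (in the paper, because $1/(\beta_{m_{\ast}}+1/\alpha_{k^{\ast}}) < 1/\beta_{m_{\ast}}$ and $f$ is strictly increasing below $1/\beta_{m_{\ast}}$), so the lemma's extra strictness hypotheses are sufficient but not actually necessary for the non-Nash conclusion.
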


\begin{proof} 

We rewrite $\,\eta^{k} ( \{I_{\cdot}\}) \, =\,  f( 1/\beta_{m} ; \lvert I_{m}\rvert / (2 \alpha_{k})) \,$ in \eqref{eq: etak} with a quadratic function $\, f(x; a) \, :=\,  x ( 1 - a x) \,$, $\, x > 0 \,$. Note that $\, f(0) \, =\,  0 \, =\,  f(1/a) \,$ and $\, f(x; a) \,$ is increasing in the interval $\, (0, 1/(2a)) \,$. For each group head $\,k_{1}, \ldots , k_{h}\,$, we have 
\[
\frac{\, \lvert I_{m} \rvert\,}{\,\alpha_{k_{m}}\,} \, \le  \, \sum_{k \in I_{m}} \frac{\,1\,}{\,\alpha_{k}\,} \, =\,  \beta_{m} \, \quad \text{ or } \quad \frac{\,1\,}{\,\beta_{m}\,}  \le \frac{\,\alpha_{k_{m}}\,}{\, \lvert I_{m}\rvert\,}; \quad m \, =\,  1, \ldots , h \, . 
\]
Thus, with $\, a \, =\,  \lvert I_{m}\rvert / (2 \alpha_{k_{m}}) \,$, we have $\, 1/\beta_{m} \le 1/ (2 a)\,$, and hence, $\, x \mapsto f(x; \lvert I_{m}\rvert / (2\alpha_{k})) \,$ is increasing in the interval $\, (0, 1/\beta_{m}) \,$ for $\, m \, =\,  1, \ldots , h\,$. 

By the definition of $\,k^{\ast}\,$ in \eqref{eq: heads} and \eqref{eq: headsheads2}, we have 
\begin{equation}
\begin{split}
\eta^{k^{\ast}} ( \mathcal C ( {\bm a})) \, &=\,  \frac{\,1\,}{\, \beta_{m^{\ast}}\,} \Big( 1 - \frac{\, \lvert I_{m^{\ast}}\rvert \,}{\,2 \alpha_{k^{\ast}} \beta_{m^{\ast}}\,} \Big) \, \\
& \ge \, \frac{\,1\,}{\,\beta_{m_{\ast}}\,} \Big( 1 - \frac{\, \lvert I_{m_{\ast}}\rvert \,}{\,2 \alpha_{k_{\ast}} \beta_{m_{\ast}}\,} \Big) \, =\,  f \Big( \frac{\,1\,}{\,\beta_{m_\ast}\,}; \frac{\, \lvert I_{m_{\ast}}\rvert\,}{\,2 \alpha_{k_{\ast}} \,}\Big)  \, =\,  \eta^{k_{\ast}} ( \mathcal C ( {\bm a})) \, . 
\end{split}
\end{equation}
Then by the monotonicity of $\, x\mapsto f ( x; \lvert I_{m_{\ast}}\rvert / ( 2 \alpha_{k_{\ast}} )) \,$ in the interval $\, (0, 1 / \beta_{m_{\ast}}) \,$, we have 
\begin{equation}
\begin{split}
\eta^{k^{\ast}} ( \mathcal C ( {\bm a})) & \ge  f \Big( \frac{\,1\,}{\,\beta_{m_\ast}\,}; \frac{\, \lvert I_{m_{\ast}}\rvert\,}{\,2 \alpha_{k_{\ast}} \,}\Big) \ge 
f \Big( \frac{\,1\,}{\,\beta_{m_\ast} + (1/\alpha_{k^{\ast}}) \,}; \frac{\, \lvert I_{m_{\ast}}\rvert\,}{\,2 \alpha_{k_{\ast}} \,}\Big) 
\\
 \, &=\,  \frac{\,1\,}{\, \beta_{m_{\ast}} + (1 / \alpha_{k^{\ast}}) \,} \Big( 1 - \frac{\,\lvert I_{m_{\ast}}\rvert\,}{\, 2 \alpha_{k_{\ast}} ( \beta_{m_{\ast}} + (1/\alpha_{k^{\ast}}) )\,} \Big)\, 
 \\
 & \ge \,  \frac{\,1\,}{\, \beta_{m_{\ast}} + (1 / \alpha_{k^{\ast}}) \,} \Big( 1 - \frac{\,\lvert I_{m_{\ast}}\rvert + 1 \,}{\, 2 \alpha_{k^{\ast}} ( \beta_{m_{\ast}} + (1/\alpha_{k^{\ast}}) )\,} \Big) \, =\,  \eta^{k^{\ast}} ( \mathcal C ( \widetilde{ \bm a} )) \, , 
\end{split}
\end{equation} 
where we used \eqref{eq: cond1} in the last inequality and $\, \mathcal C ( \widetilde{ \bm a} ) \,$ is obtained only by removing $\, k^{\ast}\,$ from the group $\, m^{\ast}\,$ and adding $\, k^{\ast}\,$ into the group $\, m_{\ast}\,$. Thus, for $\,k^{\ast}\,$  it is better to move from $\, m^{\ast}\,$ to $\,m_{\ast}\,$, and hence, $\, \mathcal C ( {\bm a}) \,$ is not a Nash equilibrium. 
\end{proof}

\begin{example}
Suppose that we have $\, \mathcal C ( {\bm a}) \, =\,  \{I_{m}\}_{m=1, 2 , 3} \,$, $\,I_{1} \, =\, \{1, 2\}\,$, $\,I_{2} \, =\,  \{3, 4, 5\}\,$, $\, I_{3} \, =\,  \{6, 7, 8, 9, 10\}\,$ with 
\[
\alpha_{1} \, =\,  \alpha_{2}\, =\,  2\, , \quad 
\alpha_{3} \, =\,  \alpha_{4} \, =\,  \alpha_{5} \, =\,  3 \, , \quad 
\alpha_{6} \, =\,  \alpha_{7}\, =\,  \alpha_{8} \, =\,  \alpha_{9}\, =\,  4 \, , \quad \alpha_{10} \, =\,  5 \, . 
\]
The condition \eqref{eq: cond1} holds with a strictly inequality, and $\, \mathcal C ( {\bm a}) \,$ is not a Nash equilibrium. 
\end{example}

\begin{example}
Suppose that we have $\, \mathcal C ( {\bm a}) \, =\,  \{I_{m}\}_{m=1, 2} \,$, $\,I_{1} \, =\, \{1, 2, 3\}\,$, $\,I_{2} \, =\,  \{4, 5, 6\}\,$  with 
\[
\alpha_{1} \, =\,  \alpha_{2}\, =\,  2\, , \quad 
\alpha_{3} \, =\, 4\, , 
\quad 
 \alpha_{4} \, =\,  \alpha_{5} \, =\, \alpha_{6}\, =\,  3 \, .  
 \]
The condition \eqref{eq: cond1} does not hold, however, $\, \mathcal C ( {\bm a}) \,$ is not a Nash equilibrium. 
\end{example}

\subsection{Case discussion: correlated Gaussian distribution}\label{section: DisjointCorGau}
In a system with $N$ individuals, we assume the joint distribution of $\mathbf{X}=(X^i,i=1,\ldots,N)^T$ follows a multivariate Gaussian distribution, that is, $\mathbf{X} \sim N({\bm \mu},{\bm \Sigma})$ where ${\bm \mu}\in \mathbb{R}^N$ and ${\bm \Sigma}\in \mathbb{R}^{N\times N}$ is positive semi-definite. The exponential utility functions for $N$ individuals have positive parameters ${\bm \alpha}=(\alpha_1,\ldots,\alpha_N).$

For every partition set $I_m$, $m=1,\ldots,h$, define a group vector $A_m \in \mathbb{R}^{1\times N}$ which consists of only $0$'s and $1$'s. For all the $j\in I_m$, the $j$-th element in $A_m$ is $1$, otherwise $0$. For example, in a 4-player system, if individuals 1 and 4 in group 1, and individuals 2 and 3 in group 2, the corresponding vectors for the two groups are $A_1=(1,0,0,1)$ and $A_2=(0,1,1,0)$. Then, following \eqref{eq: Thm-DisjointGroups}, we have
\begin{equation}\label{eqn: Sm distribution}
S_m = \sum\limits_{i\in I_m} X^i\,=\, A_m\mathbf{X}\,\sim\,N\big(A_m{\bm \mu},A_m{\bm \Sigma}A_m^T\big)\,\stackrel{def}{=}N(\mu_m^s, (\sigma_m^s)^2).
\end{equation}

\noindent The results in {Appendix}~\ref{proof: expResults} produce that, for $m=1,\ldots,h$ and for $k\in I_m$,
\[
d_m = \beta_m\log\Big(-\dfrac{\beta}{B}\mathbb{E}\big[\exp (-S_m/\beta_m) \big]\Big)=\beta_m\log\Big (\frac{\beta}{-B}\Big)-\,\mu_m^s\,+\,\frac{(\sigma_m^s)^2}{2\beta_m} ,
\]
and 
%since
%\[
%Y^{k}_{\mathbf X} \, :=\, - X^{k} + \frac{\, 1\,}{\,\alpha_{k} \, \beta_{m}}S_m+\,\frac{\, 1\,}{\,\alpha_{k} \, \beta_{m}}d_m  \, ; \quad k \in I_{m} \, ,
%\]
%\[
%\dfrac{dQ^m_\mathbf{x}}{d\mathbb{P}}=\big[\mathbb{E}\big(e^{-S_m/\beta_m}\big)\big]^{-1}\cdot e^{-S_m/\beta_m},
%\]
the systemic risk allocation of individual $k$ is given by
\begin{align}
\mathbb E _{\mathbb Q^{m}_{\mathbf X}} [ Y^{k}_{\mathbf X} ]
&=\mathbb E  \Big [ Y^{k}_{\mathbf X}\cdot \dfrac{d{\mathbb Q}^m_\mathbf{x}}{d\mathbb{P}} \Big ]=\dfrac{\mathbb E\big[(- X^{k} + \frac{\, 1\,}{\,\alpha_{k} \, \beta_{m}}S_m+\,\frac{\, 1\,}{\,\alpha_{k} \, \beta_{m}}d_m)\cdot e^{-S_m/\beta_m}\big]}{\mathbb{E}\big(e^{-S_m/\beta_m}\big)}\nonumber\\
%&=-\big(\mu_k-\frac{1}{\beta_m}A_m{\bm \Sigma}_{[,k]}  \big)+\frac{\, 1\,}{\,\alpha_{k} \, \beta_{m}}\big(\mu_m^s-\frac{1}{\beta_m}(\sigma_m^s)^2  \big)+\frac{\, 1\,}{\,\alpha_{k} \, \beta_{m}}\big(\beta_m\log(\frac{\beta}{-B})-\,\mu_m^s\,+\,\frac{1}{2\beta_m} (\sigma_m^s)^2\big)\nonumber\\
&=-\mu_k+\frac{1}{\alpha_k}\log \Big(\frac{\beta}{-B}\Big)+\frac{1}{\beta_m}A_m{\bm \Sigma}_{[,k]} -\frac{(\sigma_m^s)^2}{2\beta_m^2\alpha_k} .\label{eq: generalsolution}
\end{align}

\begin{remark}[Effect of Mean] \label{rmk: allinone and subgroups}
%\begin{itemize}
%\item[]
From the above formula of the systemic risk allocation for individual $k$, we find that the mean of individual $k$ has no effect on her risk allocation no matter which group she belongs to. So in the following discussion, without loss of generality, we take all means to be the constant zero.

%\begin{proof}
%\end{proof}
%\end{itemize}
\end{remark}

\begin{remark}[Comparison between Trivial Grouping and Multi-Groups]
The total risk allocation for multiple groups ($h\geq 2$) is always greater than the total risk allocation for the trivial grouping ($h=1$). The proof of this statement will be given in {Appendix}~\ref{proof: compare-Trivial-Multi-Groups}, and we refer interested readers to read ``Monotonicity'' in \cite{biagini2020fairness} for general proof free of the distribution of risk factors.

\end{remark}

In the following, we present three concrete examples to help better understand on banks' rational choices under this fair risk allocation. % about Nash equilibrium assuming Gaussian distribution for the risk factors.

\begin{claim}\label{claim: 3.1}
If $\mathbf{X}=(X^i,i=1, \ldots, N)^T$ has the same standard deviation $\sigma > 0$ and correlation coefficient $\rho\in [-1,1)$, and the utility parameters are identical, denoted by $\alpha$($>0$). Then there is only one trivial Nash equilibrium, that is, all individuals being in the same group.
\end{claim}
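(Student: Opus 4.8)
The plan is to reduce the systemic risk allocation of each individual to an explicit function of the size of her group alone, and then to exploit its strict monotonicity. First I would invoke Remark~\ref{rmk: allinone and subgroups} to set all means to zero without loss of generality, and substitute the common value $\alpha_k\equiv\alpha$ into \eqref{eq: generalsolution}, so that $\beta_m=s_m/\alpha$ and $\beta=N/\alpha$, where $s_m:=\lvert I_m\rvert$ is the size of the group containing $k$. The only terms in \eqref{eq: generalsolution} that still depend on the grouping are the two covariance terms $\tfrac{1}{\beta_m}A_m\bm\Sigma_{[,k]}$ and $\tfrac{(\sigma^s_m)^2}{2\beta_m^2\alpha}$, so the whole argument comes down to evaluating these under the equicorrelated structure $\Sigma_{ii}=\sigma^2$, $\Sigma_{ij}=\rho\sigma^2$ for $i\neq j$.

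Next I would carry out the two short computations. Summing the entries of a principal block gives $(\sigma^s_m)^2=A_m\bm\Sigma A_m^T=\sigma^2 s_m\big(1+(s_m-1)\rho\big)$, while summing a single column over the block gives $A_m\bm\Sigma_{[,k]}=\sigma^2\big(1+(s_m-1)\rho\big)$ for $k\in I_m$; positive semi-definiteness of $\bm\Sigma$ guarantees $(\sigma^s_m)^2\ge 0$, so the expressions are well defined. Substituting into \eqref{eq: generalsolution} and simplifying, the index $k$ drops out of the group-dependent part and the allocation becomes
\[
\rho^k(\mathbf X;\{I_\cdot\})\,=\,\frac{1}{\alpha}\log\!\Big(\frac{N}{-B\alpha}\Big)+\frac{\alpha\sigma^2}{2}\,g(s_m),\qquad g(s):=\rho+\frac{1-\rho}{s},
\]
for every $k\in I_m$. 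The first term is independent of the grouping, so each individual's risk depends on the configuration only through $g(s_m)$. Because $\rho<1$ we have $1-\rho>0$, whence $g$ is \emph{strictly decreasing} in $s$; this is exactly where the hypothesis $\rho\in[-1,1)$ enters, since at $\rho=1$ the function $g$ is constant and the allocation would be grouping-independent, mirroring the degenerate constant-outcome example above. This also recovers the i.i.d.\ case when $\rho=0$.

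Finally I would translate the monotonicity into the equilibrium statement. Strict monotonicity says every individual's risk is strictly smaller the larger her group, so each individual strictly prefers the largest available group. For the full grouping $\mathcal C(\widehat{\bm a})=\{\{1,\ldots,N\}\}$, any unilateral deviation of individual $k$ moves her into a previously empty bucket, i.e.\ into a singleton of size $1<N$, which strictly raises her allocation since $g(1)>g(N)$; hence $\widehat{\bm a}$ satisfies \eqref{eq: Nash net} and is a Nash equilibrium. For uniqueness, suppose a configuration has $h\ge 2$ groups; let $I_{m_1}$ be a group of maximal size and $I_{m_2}$ with $m_2\neq m_1$ any other group, and pick $k\in I_{m_2}$. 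Rerouting $k$ to the bucket occupied by $I_{m_1}$ changes her group size from $s_{m_2}$ to $s_{m_1}+1>s_{m_2}$ (using $s_{m_1}\ge s_{m_2}$), so strict monotonicity makes her allocation strictly decrease, contradicting \eqref{eq: Nash net}. Thus no multi-group configuration can be an equilibrium, and the trivial grouping is the unique Nash equilibrium up to equivalence. The computations are routine; the only step needing care is this deviation argument, where the target must be chosen to be a maximal-size group so that $s_{m_1}+1>s_{m_2}$ holds even when $I_{m_2}$ is a singleton that the move dissolves.
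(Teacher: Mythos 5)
Your proposal is correct and takes essentially the same route as the paper: both substitute the equicorrelated covariance into \eqref{eq: generalsolution} with common $\alpha$ to obtain the allocation $-\mu_k+\frac{1}{\alpha}\log\big(\frac{\beta}{-B}\big)+\frac{\alpha\sigma^2}{2}\big(\rho+\frac{1-\rho}{\lvert I_m\rvert}\big)$, and then exploit its strict decrease in $\lvert I_m\rvert$ when $\rho<1$ to show the full group is Nash and no multi-group configuration can be. Your uniqueness step (rerouting any individual into a maximal-size group, so the target size $s_{m_1}+1>s_{m_2}$ covers ties automatically) is a slightly cleaner unified version of the paper's argument, which handles the two-equal-groups case separately.
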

%\noindent From example: Full correlation matrix with positive, uniform $\rho$ and full correlation matrix with negative, uniform $\rho$ don't have any nontrivial Nash for any value of $\rho$.
%\begin{proof}
%Given in \textbf{Appendix} \ref{proof: claim 3.1}.
%\end{proof}

\begin{claim}\label{claim: const_block_rho}
In the case of $N=4$, we assume all the utility parameter $\alpha_i$s are the same and equal to 1, and all individuals have the same standard deviation denoted by $\sigma>0$. If the correlation matrix is a block matrix with uniform correlation coefficient, {\it i.e.}, the correlation matrix is given by $ \left(
\begin{array}{cccc}
1\,\,&\rho\,\,&0\,\,&0\\
\rho &1 &0 &0\\
0&0&1&\rho\\
0&0&\rho&1
\end{array}
\right)$
%which means the first two and the second two individuals has no correlation, 
, we have the following conclusion about Nash equilibrium. 
%the existence of

\begin{itemize}
\item If $\rho\in [-\frac{3}{13},\,\frac{3}{8}]$, there is no nontrivial Nash equilibrium.
\item If $\rho\in [-1,-\frac{3}{13})$, grouping "\{1,2\}-\{3,4\}", {\it i.e.}, the first two and the second two individuals are in two different groups, is a nontrivial Nash equilibrium.
\item If $\rho\in (\frac{3}{8},1]$, grouping "\{1,3\}-\{2,4\}" and "\{1,4\}-\{2,3\}" are both nontrivial Nash equilibriums for the system..
\end{itemize}
\end{claim}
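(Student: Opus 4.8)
The plan is to push everything through the explicit allocation \eqref{eq: generalsolution}. With all $\alpha_k=1$ one has $\beta_m=|I_m|$ and $\beta=N=4$; by Remark~\ref{rmk: allinone and subgroups} I may set $\bm\mu=\mathbf 0$; and writing $\bm\Sigma=\sigma^2 R$ with $R$ the stated block correlation matrix, linearity of \eqref{eq: generalsolution} in $\bm\Sigma$ shows that $\sigma^2>0$ scales every allocation uniformly and so is irrelevant to any comparison, so I set $\sigma=1$. The common constant $\frac1{\alpha_k}\log(\beta/(-B))$ cancels in every inequality of the form \eqref{eq: Nash net}, so it suffices to work with the effective allocation
\[
\tilde\rho^{\,k}(I_m)\,=\,\frac1{|I_m|}\sum_{j\in I_m}R_{jk}\,-\,\frac1{2|I_m|^2}\sum_{i,j\in I_m}R_{ij}.
\]

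I first cut down the candidate equilibria. A configuration containing a singleton group is never a Nash equilibrium (Section 6.2 of \cite{biagini2020fairness}, as recalled above), and the full grouping $\{1,2,3,4\}$ is always the trivial equilibrium; hence on four players the only possible \emph{nontrivial} equilibria are the three two--pair partitions $\{1,2\}$--$\{3,4\}$, $\{1,3\}$--$\{2,4\}$ and $\{1,4\}$--$\{2,3\}$. Moreover, for each such partition the subgroup of permutations preserving $R$ (generated by the within--block swaps $1\leftrightarrow2$, $3\leftrightarrow4$ and the block swap) that also fixes the partition already acts transitively on $\{1,2,3,4\}$, so it is enough to test the deviations of one representative player per partition.

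The computational heart is a short list of values of $\tilde\rho^{\,k}$ obtained by direct substitution: a correlated pair gives $\tilde\rho(\{1,2\})=(1+\rho)/4$, an uncorrelated pair gives $\tilde\rho(\{1,3\})=1/4$, a singleton gives $1/2$, and the two triples reachable by a single move give $\tilde\rho^{\,1}(\{1,3,4\})=(3-2\rho)/18$ and $\tilde\rho^{\,1}(\{1,2,4\})=(3+4\rho)/18$. Testing $\{1,2\}$--$\{3,4\}$: player $1$ may join $\{3,4\}$ (forming the triple $\{1,3,4\}$) or split off alone, so equilibrium requires $\tfrac{1+\rho}{4}\le\min\{\tfrac{3-2\rho}{18},\tfrac12\}$; since $\tfrac{1+\rho}{4}\le\tfrac12$ holds for all $\rho\le1$, only the triple bound is active and it gives $\rho\le-\tfrac{3}{13}$. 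Testing $\{1,3\}$--$\{2,4\}$: the binding move is player $1$ joining $\{2,4\}$ (forming $\{1,2,4\}$), so equilibrium requires $\tfrac14\le\tfrac{3+4\rho}{18}$, i.e. $\rho\ge\tfrac38$, and the block swap gives the identical threshold for $\{1,4\}$--$\{2,3\}$. Collecting these three thresholds reproduces exactly the three regimes of the statement, the endpoints $\rho=-\tfrac{3}{13}$ and $\rho=\tfrac38$ being points of indifference (decided by the weak versus strict reading of \eqref{eq: Nash net}).

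The main difficulty here is not any single estimate but the combinatorial completeness of the argument: I must verify that every unilateral deviation from each candidate partition lands in one of the tabulated shapes (a triple or a singleton), that the singleton option is always dominated and hence never tightens the constraint, and that the transitivity of the stabilizer genuinely reduces the check to one player. The ``singletons are never Nash'' fact does the essential pruning, collapsing all partitions of four players to the three two--pair candidates and ruling out mixed shapes such as $\{1,2,3\}$--$\{4\}$; without it the case analysis would be considerably longer.
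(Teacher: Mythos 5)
Your proposal is correct and follows essentially the same route as the paper's own proof: both reduce to the explicit Gaussian allocation \eqref{eq: generalsolution}, compare each player's allocation in a pair against the deviation that forms a triple, and obtain the identical thresholds $\rho\le-\tfrac{3}{13}$ and $\rho\ge\tfrac38$ (with the same weak-versus-strict ambiguity at the endpoints that the paper's statement itself carries). The only difference is one of completeness, in your favor: you make explicit the pruning to the three $2{+}2$ partitions via the no-singleton fact, the domination of the singleton deviation, and the symmetry reduction to one representative player, all of which the paper handles implicitly with ``we repeat similar discussion.''
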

\noindent According to the claim, when the correlation is not strong, all individuals tend to be together to form a trivial Nash equilibrium. Otherwise, negatively correlated individuals tend to be in the same group while positively correlated individuals tend to be separate . 
%Uncorrelated individuals tend to be in the same group to achieve the Nash equilibrium for smaller risk allocations. 

%(The above conclusions about $\rho$ has been confirmed by program.)
%\begin{proof}
%Given in \textbf{Appendix} \ref{proof: claim 3.2}.
%\end{proof}

\begin{claim}\label{clami:const_block_rhofor5}
In the case of $N=5$, we assume the standard deviation are uniform and the $\alpha_i$'s are 1. If the covariance matrix is $\mathbf{\Sigma}=\, \left(
\begin{array}{ccccc}
\sigma^2\,\,&\rho\sigma^2\,\,&0\,\,&0\,\,&0\\
\rho\sigma^2 &\sigma^2 &0 &0&0\\
0&0&\sigma^2&\rho\sigma^2&\rho \sigma^2\\
0&0&\rho\sigma^2&\sigma^2&\rho \sigma^2\\
0&0&\rho\sigma^2&\rho \sigma^2&\sigma^2
\end{array}
\right)$, which is of block form with positive standard deviation $\sigma$ and the correlation coefficient $\rho\in (-1,1)$. 

\begin{itemize}
	\item If $\rho \in (-1, -2/7]$, grouping   ``$\{1,2\}-\{3,4,5\}$'' is a Nash equilibrium.
	\item If $\rho \in (-2/7, 1)$, there is no nontrivial Nash equilibrium. 
	%\item \textcolor{red}{If $\rho = 1$, grouping "$\{1,3\}-\{2,4,5\}$" is a Nash equilibrium, as well as "$\{1,4\}-\{2,3,5\}$", "$\{1,5\}-\{2,3,4\}$", "$\{2,3\}-\{1,4,5\}$", "$\{2,4\}-\{1,3,5\}$", "$\{2,5\}-\{1,3,4\}$".}
	\item Grouping ``$\{1,2,3\}-\{4,5\}$'' can not be a non-trivial Nash for any value of $\rho$. 
\end{itemize}

\end{claim}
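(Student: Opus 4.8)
The plan is to collapse every pairwise comparison to a single scalar functional of a group and then run a short finite case analysis. Specializing the allocation formula \eqref{eq: generalsolution} to $\bm\alpha\equiv 1$, $\bm\mu\equiv 0$ (allowed by Remark~\ref{rmk: allinone and subgroups}) and noting $\beta=\sum_n 1/\alpha_n=N=5$ regardless of the grouping, for an individual $k$ sitting in a group $I$ one obtains
\[
\rho^{k}(\mathbf X;\{I_\cdot\})=\log\Big(\frac{5}{-B}\Big)+\sigma^{2}\,\phi(k,I),\qquad
\phi(k,I):=\frac{1}{|I|}\sum_{j\in I}c_{jk}-\frac{1}{2|I|^{2}}\sum_{i,j\in I}c_{ij},
\]
where $c_{ij}:=\Sigma_{ij}/\sigma^{2}$. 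The additive constant $\log(5/(-B))$ is independent of the partition, and $\sigma^{2}>0$, so a unilateral move of $k$ is profitable \emph{exactly} when it strictly decreases $\phi(k,\cdot)$. Because the members of each correlated block are exchangeable, $\phi(k,I)$ depends only on the counts $a,b$ of Block-$\{1,2\}$ and Block-$\{3,4,5\}$ members of $I$ and on $k$'s block; writing $s=a+b=|I|$ and $p_k\in\{a-1,b-1\}$ for the number of same-block co-members of $k$ in $I$, the working formula is $\phi(k,I)=\frac{1+\rho p_k}{s}-\frac{s+\rho[a(a-1)+b(b-1)]}{2s^{2}}$, with the singleton value $\phi(k,\{k\})=\tfrac12$. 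I would first record these facts.

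Next I would trim the candidate configurations. Any configuration containing a singleton group is not a Nash equilibrium (the monotonicity fact recalled in Section~\ref{sec:DisGroups}, following \cite{biagini2020fairness}), so it suffices to treat partitions whose blocks all have size $\ge 2$. Since $5=2+3$ is the only way to split five elements into at least two parts of size $\ge 2$, every nontrivial singleton-free partition has group-size profile $\{2,3\}$; up to the $S_{\{1,2\}}\times S_{\{3,4,5\}}$ symmetry (which leaves the covariance, hence all $\phi$-values, invariant) there are exactly three such types, distinguished by the composition of the size-two group: $\{1,2\}-\{3,4,5\}$, $\{1,2,3\}-\{4,5\}$, and $\{1,3\}-\{2,4,5\}$. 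The whole claim then reduces to these three types plus the blanket singleton argument.

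For $\{1,2,3\}-\{4,5\}$ I would move a Block-$\{3,4,5\}$ member of the size-two group, say individual $4$, into the size-three group: $\phi(4,\{4,5\})=\frac{1+\rho}{4}$ while $\phi(4,\{1,2,3,4\})=\frac{1+\rho}{8}$, and since $1+\rho>0$ on $(-1,1)$ the move strictly lowers $\phi$, so this configuration is never a Nash equilibrium---this proves the third bullet and removes one of the three $\{2,3\}$-types. For the type $\{1,3\}-\{2,4,5\}$ (size-two composition $(1,1)$) I would move the Block-$\{1,2\}$ member of the size-two group, individual $1$, into the size-three group: here $\phi(1,\{1,3\})=\frac14$ whereas $\phi(1,\{1,2,4,5\})=\frac{1+\rho}{8}<\frac14$ for every $\rho<1$, so this type is never Nash either. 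What remains is the single grouping $\{1,2\}-\{3,4,5\}$, where the threshold $-2/7$ emerges.

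For $\{1,2\}-\{3,4,5\}$ the \emph{is-a-Nash-equilibrium} direction (first bullet) is the only place where \emph{all} deviations must be ruled out rather than one exhibited, and this is the main obstacle. By exchangeability it suffices to check one representative of each block, each having only two moves (go alone, or join the other group): individual $1$ (with $\phi=\frac{1+\rho}{4}$) may become a singleton ($\phi=\frac12$) or join $\{3,4,5\}$ ($\phi=\frac{2-3\rho}{16}$), and individual $3$ (with $\phi=\frac{1+2\rho}{6}$) may become a singleton or join $\{1,2\}$ ($\phi=\frac{3-2\rho}{18}$). The singleton moves are never profitable for $\rho<1$, and individual $3$'s move is non-profitable precisely for $\rho\le 0$; the binding constraint is individual $1$ joining $\{3,4,5\}$, which is non-profitable iff $\frac{1+\rho}{4}\le\frac{2-3\rho}{16}$, i.e.\ $7\rho\le-2$, i.e.\ $\rho\le-2/7$. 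Hence for $\rho\in(-1,-2/7]$ no deviation helps and $\{1,2\}-\{3,4,5\}$ is Nash (first bullet), while for $\rho\in(-2/7,1)$ individual $1$ strictly gains by joining $\{3,4,5\}$, so even this last type fails; together with the two types already eliminated and the singleton argument, this gives the absence of any nontrivial equilibrium on $(-2/7,1)$ (second bullet). The only genuine care required is the bookkeeping of the compositions $(a,b)$ and same-block counts $p_k$ inside $\phi$, and confirming that the deviation set of a two-group configuration consists only of ``join the other group'' and ``go alone.''
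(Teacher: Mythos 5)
Your proposal is correct and follows essentially the same route as the paper's own proof: specialize the Gaussian allocation formula \eqref{eq: generalsolution} to $\bm\alpha\equiv 1$, $\bm\mu\equiv 0$, reduce (via the no-singleton fact and block symmetry) to the three $\{2,3\}$-type partitions, and compare each representative individual's allocation against its unilateral moves, which yields exactly the paper's thresholds --- $\rho\le -2/7$ from individual $1$ joining $\{3,4,5\}$, $\rho\le 0$ from individual $3$ joining $\{1,2\}$, and always-profitable deviations for the types $\{1,3\}$--$\{2,4,5\}$ and $\{1,2,3\}$--$\{4,5\}$. If anything, your write-up is tighter than the paper's appendix, which leaves implicit the enumeration of singleton-free partition types, skips the (trivial) go-it-alone deviations, and merely asserts a ``contradiction'' for $\{1,2,3\}$--$\{4,5\}$ where you exhibit the explicit deviation of individual $4$ valid for all $\rho\in(-1,1)$.
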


We give derivations of Claims~\ref{claim: 3.1}, \ref{claim: const_block_rho} and \ref{clami:const_block_rhofor5} in Appendix~\ref{proof: claim 3.1}, \ref{proof: claim 3.2} and \ref{proof: claim 3.3}, respectively.
The result further shows that individuals tend to stay with highly negatively correlated individuals to minimize the systemic risk if they exist. It is impossible for individuals to stay with correlated and uncorrelated individuals at the same time except for the trivial case when all individuals are together.
 
 \section{Systemic Risk Measure on Overlapping Groups}
 \label{sec:Overlapping}
To further study the systemic risk measure under exponential utility functions, we generalize the systemic risk allocation for $N$ individuals on disjoint groups in section \ref{sec:DisGroups} to the risk allocation for them on overlapping groups where they can choose multiple groups to allocate their risks. Assuming there are at most $h$ groups, the weighted risk factors for the $n^{th}$-individual assigned to multiple groups are labelled as $w_{n,k}X^n$, $k=1,\ldots,h$ with $\sum_{k=1}^h w_{n,k}=1$. In the weight $w_{n,k}$, the index $k$ refers to the group number the individual $n$ joins and the weight can be of any value between $0$ and $1$. If $w_{n,k}=0$ for some $k$, then we say the individual is not in the $k$-th group.  Therefore, we can extend the systemic risk measure given by \eqref{eq: AgRisk} to a general measure $\rho$ defined by

 \begin{align}
{\bm \rho}(\mathbf{X})\,:\,=\inf\left\{\sum\limits_{n=1}^N\,\sum\limits_{k=1}^h Y^{n,k}\,:\,\mathbf{Y}\in\mathcal{C}_0^{\text{new}},\mathbb{E}\left[\sum\limits_{n=1}^N\,\sum\limits_{k=1}^h u_n(w_{n,k}X^n+Y^{n,k})\right]\,=B\,
\right\},  \label{eq: GeAgRisk}
\end{align}
where we take $u_n(x)=-\frac{1}{\alpha_n}e^{-\alpha_n \,x}$ as exponential utility functions; $h$ is the maximum number of groups individuals can contribute to in total and it is a finite integer; and the random allocations $\mathcal{C}_0^{\text{new}}$ is given by
\begin{align}\label{def:cnew}
\mathcal{C}_0^{\text{new}}=\big\{\mathbf{Y}=(&Y^{i,j},1\leq i\leq N,\,1\leq j\leq h)\in L^0(\mathbb{R}^{N\times h})\,:\,
\exists\, d=(d_1,\ldots. d_h)\in \mathbb{R}^h,\nonumber\\
 &\sum\limits_{i=1}^N Y^{i,j}=d_j, \,\text{for } j=1,\ldots, h
\big\}.
\end{align}
%Additionally, we define the weights $w_{n,k}\geq 0$ and $\sum_{k=1}^h w_{n,k}=1$ for $1\leq n\leq N$, $1\leq k \leq h$. 

\begin{remark}
Here $h$ is an integer fixed a priori, to eliminate the situation that an individual wants to split the risk $X^i$ into infinitely many groups. An alternative way is to impose a minimum value requirement for non-zero weights to avoid too many groups for an individual to participate in, denoted by $w^{\min}$. Then, naturally $h=\lfloor\frac{1}{w^{\min}} \rfloor\cdot N$. 
\end{remark}

\begin{remark}
%Under the new setup, the systemic risk allocation for $N$ individuals whose risk factors are labelled as $X^\cdot$ on disjoint groups is generalized to the risk allocation for more individuals with weighted risk factor labelled as $w_{\cdot,j}X^\cdot$, $j=1,\ldots,h$.
The generalized system \eqref{eq: GeAgRisk}-\eqref{def:cnew} still meets the assumptions made in \cite{biagini2020fairness}. Because the measure on overlapping groups can be seen as the measure \eqref{eq: AgRisk} on disjoint groups with more individuals with weighted risk factors. Thus the existence and uniqueness of optimal allocation solution $\mathbf{Y}_\mathbf{X}$ of the primal problem \eqref{eq: GeAgRisk} is guaranteed, according to the the discussion in Section 4 of \cite{biagini2020fairness}.
\end{remark}

%\bigskip
\noindent Given the grouping for all individuals, we define the family of sets 
\begin{equation}\label{eqn:setMI-2}
%M_k=\{i\in \mathbb{N}\,|\, Y^{k,i}\neq 0, 1\leq i\leq h\},\quad 
\{I_j\,:\,=\{i\in \mathbb{N}\,:\,w_{i,j}> 0, i=1,\ldots,N\},\,j=1,\ldots,h\}.
\end{equation}
%\rh{need to use results in \cite{biagini2020fairness} to show the existence and uniqueness of $\bm \rho(\mathbf X)$.}
\begin{thm}\label{thm:JointResult}
The optimal value of ${\bm \rho} ( \mathbf X )$ in \eqref{eq: GeAgRisk} is attained by
\begin{align}
d_j&=\beta_j\log\left(-\frac{\beta}{B}\mathbb{E}\left[e^{-\frac{S_j}{\beta_j}}\right] \right),
\\
Y^{i,j}_\mathbf{X} &=\left[-w_{i,j}X^{i}+\frac{1}{\alpha_i\beta_j}\left(S_j+d_j\right)\right] \mathbf{1}_{w_{i,j}>0},\label{eqn:individualY}
\end{align}
where $S_j=\sum\limits_{i=1}^N w_{i,j}X^i=\sum\limits_{i\in I_j}w_{i,j}X^i$, $\beta_j=\sum\limits_{i=1}^N\frac{1}{\alpha_i}\mathbf{1}_{w_{i,j}>0}=\sum\limits_{i\in I_j}\frac{1}{\alpha_i}$, for $j=1,\cdots, h$ and $i=1,\ldots, N$, $\beta
=\sum\limits_{j=1}^h\beta_j=\sum\limits_{j=1}^h \sum\limits_{i=1}^N\frac{1}{\alpha_i}\mathbf{1}_{w_{i,j}>0}$,  
and
\begin{equation}
{\bm \rho} ( \mathbf X ) \, =\,  \sum_{j=1}^{h}\sum_{i=1}^N Y_\mathbf{X}^{i,j}= \sum_{j=1}^{h} d_{j} \, . 
\end{equation} 
The systemic risk allocation for individual i is $\sum\limits_{j=1}^h\mathbb{E}_{{\mathbb Q}_{\mathbf X}^j}[Y_\mathbf{X}^{i,j}]$ with the density 
\[
\dfrac{\mathrm{d}{\mathbb Q}^j_{\mathbf X}}{\mathrm{d}\mathbb{P}}:=\dfrac{e^{-\frac{S_j}{\beta_j}}}{\mathbb{E}\left[e^{-\frac{S_j}{\beta_j}}\right]},\quad j=1,\ldots,h.
\]
\end{thm}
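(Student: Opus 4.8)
The plan is to prove Theorem~\ref{thm:JointResult} by reduction to the disjoint-group result recorded in \eqref{eq: Thm-DisjointGroups}--\eqref{eq: QmX}, exactly along the lines flagged in the second Remark of this section: an overlapping configuration is nothing but a disjoint configuration once each individual is split across the groups it actually joins. Concretely, I would introduce an enlarged index set
\[
\mathcal I := \{(i,j): 1\le i\le N,\ 1\le j\le h,\ w_{i,j}>0\},
\]
and to each pair $(i,j)\in\mathcal I$ attach a \emph{fictitious individual} carrying the risk factor $\widetilde X^{(i,j)}:=w_{i,j}X^i$, the risk-aversion parameter $\widetilde\alpha^{(i,j)}:=\alpha_i$, the decision variable $\widetilde Y^{(i,j)}:=Y^{i,j}$, and the exponential utility $u_i$. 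The membership rule ``$(i,j)$ belongs to group $j$'' partitions $\mathcal I$ into the $h$ disjoint blocks $\{(i,j):i\in I_j\}$, $j=1,\ldots,h$, with $I_j$ as in \eqref{eqn:setMI-2}. After a cosmetic relabeling that lists the elements of $\mathcal I$ so that each block is contiguous, this is precisely a disjoint-group system indexed by some vector $\bm n$.

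First I would check that the two optimization problems coincide under this dictionary. The constraint $\sum_{i=1}^N Y^{i,j}=d_j$ in \eqref{def:cnew} reads $\sum_{(i,j):i\in I_j}\widetilde Y^{(i,j)}=d_j$, i.e. the partial sum over block $j$ is the deterministic constant $d_j$, which is exactly the defining property of $\mathcal C^{(\bm n)}_0$; and the aggregate-utility constraint in \eqref{eq: GeAgRisk}, understood to range over the members only, becomes $\mathbb E[\sum_{(i,j)\in\mathcal I}u_i(\widetilde X^{(i,j)}+\widetilde Y^{(i,j)})]=B$. Hence \eqref{eq: GeAgRisk}--\eqref{def:cnew} is the disjoint problem \eqref{eq: AgRisk} posed on $\mathcal I$. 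Existence and uniqueness of the optimizer are then inherited from Section~4 of \cite{biagini2020fairness}, as already noted.

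Next I would read off the solution from \eqref{eq: Thm-DisjointGroups} under the dictionary. For block $j$ the theorem's group quantities become $\widetilde S_j=\sum_{i\in I_j}w_{i,j}X^i=S_j$, $\widetilde\beta_j=\sum_{i\in I_j}1/\alpha_i=\beta_j$, and $\beta=\sum_j\beta_j$, matching the statement; the formula for $\widetilde d_j$ reproduces $d_j=\beta_j\log(-\tfrac{\beta}{B}\mathbb E[e^{-S_j/\beta_j}])$, and the optimal allocation $\widetilde Y^{(i,j)}=-\widetilde X^{(i,j)}+(\widetilde S_j+\widetilde d_j)/(\widetilde\alpha^{(i,j)}\beta_j)$ becomes $-w_{i,j}X^i+(S_j+d_j)/(\alpha_i\beta_j)$ for $i\in I_j$; multiplying by $\mathbf 1_{w_{i,j}>0}$ simply records that no variable is attached to non-members, giving \eqref{eqn:individualY} and $\bm\rho(\mathbf X)=\sum_j d_j$. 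Finally, the systemic risk allocation of fictitious individual $(i,j)$ is $\mathbb E_{\mathbb Q^j_{\mathbf X}}[\widetilde Y^{(i,j)}]$ with the tilted density in \eqref{eq: QmX}, now $\mathrm d\mathbb Q^j_{\mathbf X}/\mathrm d\mathbb P=e^{-S_j/\beta_j}/\mathbb E[e^{-S_j/\beta_j}]$, a group-$j$ quantity; summing the contributions of the pieces of $X^i$ over the groups it joins yields the total allocation $\sum_{j=1}^h\mathbb E_{\mathbb Q^j_{\mathbf X}}[Y^{i,j}_{\mathbf X}]$.

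The genuine content is therefore entirely in making the reduction airtight rather than in any new computation, and the step that needs the most care is the bookkeeping of membership. I expect the main subtlety to be the treatment of zero weights: one must fix the convention that only pairs $(i,j)$ with $w_{i,j}>0$ enter the objective and the constraint, since admitting a ``member'' with vanishing risk factor $w_{i,j}X^i\equiv 0$ would enlarge $\beta_j$ by $1/\alpha_i$ and alter $d_j$, contradicting the stated formulas. Once this members-only reading is pinned down and the blocks are verified to be genuinely disjoint in $\mathcal I$ (so that $S_j,\beta_j,d_j,\mathbb Q^j_{\mathbf X}$ depend only on $I_j$, the perfect correlation among the pieces $w_{i,\cdot}X^i$ across different groups being harmless because the disjoint framework permits an arbitrary joint law of the risk factors), the theorem follows verbatim from \eqref{eq: Thm-DisjointGroups}--\eqref{eq: QmX}.
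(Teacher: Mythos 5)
Your reduction is correct, but it is a genuinely different route from the paper's own proof. The paper proves Theorem~\ref{thm:JointResult} in Appendix~\ref{appendix:JointResult} by a direct Lagrange-multiplier computation: it restricts to a finite probability space (atoms $\omega_j$ with weights $p_j$, and $Y^{k,m}$ taking finitely many values), eliminates one variable per group via the budget identity $\sum_{i\in I_j}Y^{i,j}=d_j$, writes down the first-order conditions in the $y^{k,m}_j$, the $d_m$ and the multiplier $\lambda$, and solves them explicitly to recover $\lambda=\beta/B$, the formula for $d_m$, and \eqref{eqn:individualY}. You instead invoke the quoted disjoint-group result \eqref{eq: Thm-DisjointGroups}--\eqref{eq: QmX} of \cite{biagini2020fairness} after splitting each bank into fictitious members $(i,j)$ with risk factors $w_{i,j}X^i$ and risk aversion $\alpha_i$ --- a reduction the paper itself gestures at (in the remark following \eqref{def:cnew}) but uses only for existence and uniqueness, not for the explicit formulas. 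Your dictionary checks out: the blocks of the enlarged index set are disjoint, $\widetilde S_j=S_j$, $\widetilde\beta_j=\beta_j$, $\beta=\sum_j\beta_j$ (correctly counting $1/\alpha_i$ once per group joined), and the transported optimizer and tilted measures reproduce \eqref{eqn:individualY} and the stated density. Your flagged subtlety about zero weights is real and is exactly how the paper resolves it implicitly: its Lagrangian ranges only over $k\in I_m$, i.e.\ the members-only reading of \eqref{eq: GeAgRisk}--\eqref{def:cnew}, without which $\beta_j$ and $d_j$ would be wrong. What each approach buys: yours is shorter, avoids the finite-sample-space restriction of the paper's computation, and inherits the generality of the cited theorem (the perfect dependence among the pieces $w_{i,\cdot}X^i$ is indeed harmless, since that theorem assumes nothing about the joint law); the paper's proof, in exchange, is self-contained --- it re-derives the optimizer from scratch rather than outsourcing the analytic content to \cite{biagini2020fairness}, at the cost of length and of the discretization assumption.
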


%\begin{proof}
%Given in \textbf{Appendix.} \ref{appendix:JointResult}.
%\end{proof}

%\noindent 

\noindent The proof of Theorem~\ref{thm:JointResult} is left to Appendix~\ref{appendix:JointResult}. According to the theorem, we define 
\begin{equation}\label{eqn: individualTotalRisk}
\rho^i(\mathbf{X})\,:=\,\mathbb{E}_{{\mathbb Q}_{\mathbf X}}\left[Y_\mathbf{X}^{i}\right]=\sum\limits_{j=1}^h\mathbb{E}_{{\mathbb Q}^j_{\mathbf X}}\left[Y_\mathbf{X}^{i,j}\right]=\sum\limits_{j=1}^h\mathbb{E}\,\left[Y_\mathbf{X}^{i,j}\cdot \dfrac{\mathrm{d}{\mathbb Q}^j_{\mathbf X}}{\mathrm{d}\mathbb{P}}\right],
\end{equation}
as the {total fair systemic risk allocation} for individual $i$.

\begin{remark}
Compared with the disjoint group case well discussed in Biagini et al. \cite{biagini2020fairness}, the model here can be seen as an extended disjoint group case, where we consider that one individual can be divided into several sub-individuals and join different groups. { Thus, for the measures $\{Q_{\mathbf X}^{i,j}, 1\leq i\leq N, 1\leq j\leq h\}$, we have $Q_{\mathbf X}^{i,j}=Q_{\mathbf X}^{l.k}:=Q_{\mathbf X}^m$ if $j,k\in I_m$ for group $m=1,\ldots,h$.}
\end{remark}

\subsection{Sensitivity analysis}
Based on the main theorem \ref{thm:JointResult}, we perform a sensitivity analysis by adding a perturbation on the risk factors. Consider the risk factors are given by $\mathbf{X}+\varepsilon\mathbf{Z}$ where $\varepsilon\in \mathbb{R}$ and $\, {\mathbf X} \, :=\, (X^{1}, \ldots , X^{N})\, ,\, {\mathbf Z} \, :=\, (Z^{1}, \ldots , Z^{N})\, $ on a probability space $\, (\Omega, \mathcal F , \mathbb P) \,$, we have for $w_{i,j}>0$,
\begin{align}
Y^{i,j}_{\mathbf{X}+\varepsilon\mathbf{Z}} &=-w_{i,j}(X^{i}+\varepsilon Z^{i})+\frac{1}{\alpha_i\beta_j}\left(S_j+\varepsilon S_j^Z\right)+\frac{1}{\alpha_i\beta_j}d_j^{\mathbf{X}+\varepsilon\mathbf{Z}},\\
d_j^{\mathbf{X}+\varepsilon\mathbf{Z}}&=\beta_j\log\left(-\frac{\beta}{B}\mathbb{E}\left[e^{-\frac{S_j+\varepsilon S_j^Z}{\beta_j}}\right] \right),
\end{align}
where $S_j = \sum\limits_{i\in I_j} w_{i,j} X^i$, $S_j^\mathbf{Z} = \sum\limits_{i\in I_j} w_{i,j} Z^i$ and $S_j^{\mathbf{X}+\varepsilon\mathbf{Z}}=S_j+\varepsilon S_j^\mathbf{Z}$.
\begin{prop}
\label{proposition: sensitivity analysis-main}
Let $\rho$ be the systemic risk measure in \eqref{eq: GeAgRisk}.
\begin{itemize}
\item Marginal risk contribution of group $j$:
\begin{equation}
    \frac{\partial}{\partial \varepsilon}d_j^{\mathbf{X}+\varepsilon\mathbf{Z}}\bigg\lvert_{\varepsilon=0}=\mathbb{E}_{\mathbb{Q}^j_{\mathbf{X}}}\left[-S_j^{\mathbf{Z}}\right],\quad j = 1,\ldots,h.
\end{equation}

\item Local causal responsibility for individual $i$ in group $j$:
\begin{equation}
    \frac{\partial}{\partial \varepsilon}\mathbb{E}_{\mathbb{Q}^j_{\mathbf{X}}}\left[Y^{i,j}_{\mathbf{X}+\varepsilon\mathbf{Z}}\right]\bigg\lvert_{\varepsilon=0}
     = \mathbb{E}_{\mathbb{Q}^j_{\mathbf{X}}}\left[-w_{i,j}Z^i\right],\quad i\in I_j.
\end{equation}

\item Marginal risk allocation for individual $i$ in group $j$: 
 \begin{align}
     \frac{\partial}{\partial \varepsilon}\mathbb{E}_{\mathbb{Q}^j_{\mathbf{X}+\varepsilon\mathbf{Z}}}\left[Y^{i,j}_{\mathbf{X}+\varepsilon\mathbf{Z}}\right]\bigg\lvert_{\varepsilon=0}
     &=\mathbb{E}_{\mathbb{Q}^j_{\mathbf{X}}}\left[-w_{i,j}Z^i\right]
     -\frac{1}{\beta_j}\Cov_{\mathbb{Q}^j_{\mathbf{X}}}\left(Y^{i,j}_{\mathbf{X}},S_j^{\mathbf{Z}}\right)\nonumber
     \\
     &=\mathbb{E}_{\mathbb{Q}^j_{\mathbf{X}}}\left[-w_{i,j}Z^i\right]
     +\frac{w_{i,j}}{\beta_j}\Cov_{\mathbb{Q}^j_{\mathbf{X}}}\left(X^i,S_j^{\mathbf{Z}}\right)-\frac{1}{\alpha_i\beta_j^2}\Cov_{\mathbb{Q}^j_{\mathbf{X}}}\left(S_j,S_j^{\mathbf{Z}}\right).\label{eqn:sensitivity-1}
 \end{align}
\end{itemize}
\end{prop}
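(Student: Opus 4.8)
The plan is to obtain all three identities by differentiating the relevant $\varepsilon$-dependent expectations at $\varepsilon=0$, while recognizing $e^{-S_j/\beta_j}/\mathbb{E}[e^{-S_j/\beta_j}]$ as the Radon--Nikodym derivative of $\mathbb{Q}^j_{\mathbf X}$ throughout. The only analytic point needing care is the interchange of $\partial_\varepsilon$ and $\mathbb{E}$: under the (Gaussian, or more generally light-tailed) assumptions on $\mathbf X$ and $\mathbf Z$, integrands such as $S_j^{\mathbf Z}\,e^{-(S_j+\varepsilon S_j^{\mathbf Z})/\beta_j}$ admit an integrable dominating envelope on a neighborhood of $\varepsilon=0$, so dominated convergence applies; I would invoke this as a standing regularity hypothesis rather than belabor it. For the first identity I differentiate $d_j^{\mathbf{X}+\varepsilon\mathbf{Z}}=\beta_j\log\bigl(-\tfrac{\beta}{B}\mathbb{E}[e^{-(S_j+\varepsilon S_j^{\mathbf Z})/\beta_j}]\bigr)$ by the chain rule; the prefactor $\beta_j$ cancels against the $1/\beta_j$ from the derivative of the logarithm and the $-1/\beta_j$ from differentiating the exponent, leaving $-\mathbb{E}[S_j^{\mathbf Z}e^{-S_j/\beta_j}]/\mathbb{E}[e^{-S_j/\beta_j}]=\mathbb{E}_{\mathbb{Q}^j_{\mathbf X}}[-S_j^{\mathbf Z}]$ at $\varepsilon=0$.

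For the second identity the tilting measure is frozen at $\mathbb{Q}^j_{\mathbf X}$, so only the integrand $Y^{i,j}_{\mathbf{X}+\varepsilon\mathbf{Z}}$ is differentiated. Its derivative is $-w_{i,j}Z^i+\tfrac{1}{\alpha_i\beta_j}S_j^{\mathbf Z}+\tfrac{1}{\alpha_i\beta_j}\partial_\varepsilon d_j^{\mathbf{X}+\varepsilon\mathbf{Z}}$. Taking $\mathbb{E}_{\mathbb{Q}^j_{\mathbf X}}$ and inserting the first identity, the two $S_j^{\mathbf Z}$ contributions cancel, since $\tfrac{1}{\alpha_i\beta_j}\mathbb{E}_{\mathbb{Q}^j_{\mathbf X}}[S_j^{\mathbf Z}]+\tfrac{1}{\alpha_i\beta_j}\mathbb{E}_{\mathbb{Q}^j_{\mathbf X}}[-S_j^{\mathbf Z}]=0$, leaving exactly $\mathbb{E}_{\mathbb{Q}^j_{\mathbf X}}[-w_{i,j}Z^i]$. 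This cancellation, driven by the explicit form of $d_j$, is the crux of the second claim.

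The third identity is the substantive one and the main obstacle, because now both the integrand and the measure depend on $\varepsilon$. I would write $\mathbb{E}_{\mathbb{Q}^j_{\mathbf{X}+\varepsilon\mathbf{Z}}}[Y^{i,j}_{\mathbf{X}+\varepsilon\mathbf{Z}}]=N(\varepsilon)/D(\varepsilon)$ with $N(\varepsilon)=\mathbb{E}[Y^{i,j}_{\mathbf{X}+\varepsilon\mathbf{Z}}\,e^{-(S_j+\varepsilon S_j^{\mathbf Z})/\beta_j}]$ and $D(\varepsilon)=\mathbb{E}[e^{-(S_j+\varepsilon S_j^{\mathbf Z})/\beta_j}]$, then apply the quotient rule. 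The key mechanism is the ``score'' factor from differentiating the tilted density, $\partial_\varepsilon e^{-(S_j+\varepsilon S_j^{\mathbf Z})/\beta_j}=-\tfrac{S_j^{\mathbf Z}}{\beta_j}e^{-(S_j+\varepsilon S_j^{\mathbf Z})/\beta_j}$. At $\varepsilon=0$ the quotient rule produces $\mathbb{E}_{\mathbb{Q}^j_{\mathbf X}}[\partial_\varepsilon Y^{i,j}_{\mathbf{X}+\varepsilon\mathbf{Z}}]-\tfrac{1}{\beta_j}\bigl(\mathbb{E}_{\mathbb{Q}^j_{\mathbf X}}[Y^{i,j}_{\mathbf X}S_j^{\mathbf Z}]-\mathbb{E}_{\mathbb{Q}^j_{\mathbf X}}[Y^{i,j}_{\mathbf X}]\,\mathbb{E}_{\mathbb{Q}^j_{\mathbf X}}[S_j^{\mathbf Z}]\bigr)$; the first term is precisely the second identity, namely $\mathbb{E}_{\mathbb{Q}^j_{\mathbf X}}[-w_{i,j}Z^i]$, while the parenthesized difference is $\Cov_{\mathbb{Q}^j_{\mathbf X}}(Y^{i,j}_{\mathbf X},S_j^{\mathbf Z})$, giving the first displayed line. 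For the second line I substitute $Y^{i,j}_{\mathbf X}=-w_{i,j}X^i+\tfrac{1}{\alpha_i\beta_j}(S_j+d_j)$ into the covariance; since $d_j$ is deterministic it drops out, and bilinearity splits the term into $\tfrac{w_{i,j}}{\beta_j}\Cov_{\mathbb{Q}^j_{\mathbf X}}(X^i,S_j^{\mathbf Z})-\tfrac{1}{\alpha_i\beta_j^2}\Cov_{\mathbb{Q}^j_{\mathbf X}}(S_j,S_j^{\mathbf Z})$, as claimed. The delicate point to get right is the sign bookkeeping in the score term and the observation that the covariance, rather than a bare product of expectations, is exactly what survives the normalization $N/D$.
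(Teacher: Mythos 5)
Your proof is correct. Your arguments for the first two identities essentially coincide with the paper's (chain rule for $d_j^{\mathbf{X}+\varepsilon\mathbf{Z}}$; freezing the measure at $\mathbb{Q}^j_{\mathbf{X}}$ and letting the $\frac{1}{\alpha_i\beta_j}S_j^{\mathbf{Z}}$ term cancel against $\frac{1}{\alpha_i\beta_j}\partial_\varepsilon d_j^{\mathbf{X}+\varepsilon\mathbf{Z}}$), but your treatment of the third identity uses a genuinely different decomposition. The paper proves the third identity \emph{first} and self-containedly: it splits $\mathbb{E}_{\mathbb{Q}^j_{\mathbf{X}+\varepsilon\mathbf{Z}}}[Y^{i,j}_{\mathbf{X}+\varepsilon\mathbf{Z}}]$ into three ratios — the $-w_{i,j}(X^i+\varepsilon Z^i)$ piece, the $\frac{1}{\alpha_i\beta_j}(S_j+\varepsilon S_j^{\mathbf{Z}})$ piece, and the $d_j$ piece written out as a logarithm — applies the quotient rule to each at general $\varepsilon$, and sums, with the cancellation of the $\frac{1}{\alpha_i\beta_j}\mathbb{E}_{\mathbb{Q}^j}[S_j^{\mathbf{Z}}]$ contributions happening inside that sum; this lands directly on the expanded second line of \eqref{eqn:sensitivity-1}, and the compact covariance form is recovered only at the end. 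You instead apply the quotient rule once to the undecomposed ratio $N(\varepsilon)/D(\varepsilon)$, which immediately yields $\mathbb{E}_{\mathbb{Q}^j_{\mathbf{X}}}\left[\partial_\varepsilon Y^{i,j}\big\lvert_{\varepsilon=0}\right]-\frac{1}{\beta_j}\Cov_{\mathbb{Q}^j_{\mathbf{X}}}\left(Y^{i,j}_{\mathbf{X}},S_j^{\mathbf{Z}}\right)$, and then quote the second identity for the first term, obtaining the compact form first and the expanded line by bilinearity (the deterministic $d_j$ dropping out of the covariance). Your route is shorter and makes the structure transparent — the marginal allocation equals the local causal responsibility plus the covariance correction created by re-tilting the measure — but it leans on the interchange $\mathbb{E}_{\mathbb{Q}^j_{\mathbf{X}}}[\partial_\varepsilon Y^{i,j}]=\partial_\varepsilon\mathbb{E}_{\mathbb{Q}^j_{\mathbf{X}}}[Y^{i,j}]$ (covered by your standing regularity hypothesis) so that identity 2 can serve as a lemma inside identity 3; the paper's term-by-term computation is longer but keeps every intermediate derivative valid at arbitrary $\varepsilon$, not only at $\varepsilon=0$, which is also what it needs to justify the mixed-measure expression appearing in its own proof of the second identity.
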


%\begin{proof}
%See \textbf{Appendix} \ref{Appendix: sensitivity analysis-main}.
%\end{proof}

\noindent We leave the proof of Proposition~\ref{proposition: sensitivity analysis-main} to Appendix~\ref{Appendix: sensitivity analysis-main}. Note that if we replace $\mathbb{Q}^j_{\mathbf{X}}$ with $\mathbb{P}$, none of the results above hold. To interpret these formulas, first we look at the first term in \eqref{eqn:sensitivity-1}, $\mathbb{E}_{\mathbb{Q}^j_{\mathbf{X}}}\left[-w_{i,j}Z^i\right]$. This term contains only the increment $Z^i$ in individual $i$ and thus is not a systemic contribution. Summing this term over all individuals in group $j$ gives
\begin{equation}\label{eqn: overall marginal risk allocation of group $j$}
\sum\limits_{i\in I_j}\frac{\partial}{\partial \varepsilon}\mathbb{E}_{\mathbb{Q}^j_{\mathbf{X}+\varepsilon\mathbf{Z}}}\left[Y^{i,j}_{\mathbf{X}+\varepsilon\mathbf{Z}}\right]\bigg\lvert_{\varepsilon=0} = \mathbb{E}_{\mathbb{Q}^j_{\mathbf{X}}}\left[-S_j^{\mathbf{Z}}\right] = \frac{\partial}{\partial \varepsilon}d_j^{\mathbf{X}+\varepsilon\mathbf{Z}}\bigg\lvert_{\varepsilon=0}.
\end{equation}
This shows the first term contributes to the marginal risk allocation of individual $i$ without any systemic influence. When $Z^i$ is positive, which means an increment is added, it results in a risk deduction, regardless of the relation to other individuals. When $\mathbf{Z}$ is deterministic, we can see, in \eqref{eqn:sensitivity-1}, the marginal risk allocation to individual $i$ in group $j$ is $\mathbb{E}_{\mathbb{Q}^j_{\mathbf{X}}}\left[-w_{i,j}Z^i\right]=-w_{i,j}Z^i$ and the covariance terms don't exist anymore.

To better study the effect of other terms in \eqref{eqn:sensitivity-1}, we take $\mathbf{Z}=Z^k\mathbf{e}_k$ where $k\neq i$. Then from \eqref{eqn:sensitivity-1} we obtain:
\begin{align}\label{eqn:sensitivity-special}
     \frac{\partial}{\partial \varepsilon}\mathbb{E}_{\mathbb{Q}^j_{\mathbf{X}+\varepsilon Z^k\mathbf{e}_k}}\left[Y^{i,j}_{\mathbf{X}+ \varepsilon Z^k\mathbf{e}_k}\right]\bigg\lvert_{\varepsilon=0}
     &=\frac{w_{i,j}}{\beta_j}\Cov_{\mathbb{Q}^j_{\mathbf{X}}}\left(X^i,Z^k\right)-\frac{1}{\alpha_i\beta_j^2}\Cov_{\mathbb{Q}^j_{\mathbf{X}}}\left(S_j,Z^k\right).
 \end{align}
 Supposing that $\frac{w_{i,j}}{\beta_j}\Cov_{\mathbb{Q}^j_{\mathbf{X}}}\left(X^i,Z^k\right)<0$, we look at the first term which relates to the covariance between $(X^i,Z^k)$. When they have a negative correlation under the systemic risk probability $\mathbb{Q}^j_{\mathbf{X}}$, the increase in individual $k$ will result in a decrease of the risk allocation for individual $i$. That means, individual $i$ takes advantage of the decrease of others. Since the overall marginal risk allocation of group $j$ doesn't change according to \eqref{eqn: overall marginal risk allocation of group $j$}, some other individuals in the group would pay for this advantage. This is related to the last term.
 
 The last term in \eqref{eqn:sensitivity-1} or \eqref{eqn:sensitivity-special} contains both the systemic contribution $-\frac{1}{\beta_j^2}\Cov_{\mathbb{Q}^j_{\mathbf{X}}}\left(S_j,Z^k\right)$ which only depends on the group $S_j$, and the the systemic relevance part $1/{\alpha_i}$ of individual $i$. The systemic component is distributed among the individuals based on $1/\alpha_i$. In addition, this term compensates for possible risk decrease in the second term of \eqref{eqn:sensitivity-1}, since the overall marginal risk allocation of group $j$ is fixed.
 
%\yf{Keep proposition 2 or not. Not much related discussion can be shown.}

\begin{prop}(Sensitivity with respect to weights). For any $i,j$ such that $w_{i,j}>0$,
\label{proposition: sensitivity analysis}
\begin{align}
\dfrac{\partial \mathbb{E}_{\mathbb{Q}_\mathbf{X}^j}\,\left[Y_\mathbf{X}^{i,j}\right]}{\partial w_{i,j}}&=-\mathbb{E}_{{\mathbb Q}_\mathbf{X}^j}\,\left[X^i\right]-\dfrac{1}{\alpha_i\beta_j^2}\Cov_{{\mathbb Q}_\mathbf{X}^j}\left(X^i,S_j\right)+\dfrac{w_{i,j}}{\beta_j}\Var_{{\mathbb Q}_\mathbf{X}^j}\left(X^i\right) ;
\\
&=-\mathbb{E}_{{\mathbb Q}_\mathbf{X}^j}\left[X^i\right]-\dfrac{1}{\beta_j} \Cov_{{\mathbb Q}_\mathbf{X}^j}\left(X^i, \dfrac{1}{\alpha_i\beta_j} S_j-w_{i,j}X^i\right)\nonumber
\end{align}
\end{prop}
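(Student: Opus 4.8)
The plan is to differentiate $\mathbb{E}_{\mathbb{Q}^j_\mathbf{X}}[Y^{i,j}_\mathbf{X}]$ directly, being careful that the tilted measure $\mathbb{Q}^j_\mathbf{X}$ itself depends on the weight through $S_j$. Write $w := w_{i,j}$ and $\partial_w$ for $\partial/\partial w$. Since the statement assumes $w>0$, a small perturbation leaves the indicator $\mathbf{1}_{w_{i,j}>0}$ equal to $1$, so both $\beta_j=\sum_{i\in I_j}\alpha_i^{-1}$ and $\beta$ are locally constant in $w$; the $w$-dependence sits only in the explicit term $-wX^i$ of $Y^{i,j}_\mathbf{X}$, in $S_j=\sum_{\ell\in I_j}w_{\ell,j}X^\ell$ (with $\partial_w S_j=X^i$), in $d_j$, and in the density $e^{-S_j/\beta_j}/\mathbb{E}[e^{-S_j/\beta_j}]$. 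The first step I would record is the scalar derivative $\partial_w d_j=-\mathbb{E}_{\mathbb{Q}^j_\mathbf{X}}[X^i]$: differentiating $d_j=\beta_j\log(-\tfrac{\beta}{B})+\beta_j\log\mathbb{E}[e^{-S_j/\beta_j}]$ from Theorem~\ref{thm:JointResult} and using $\partial_w e^{-S_j/\beta_j}=-\tfrac{1}{\beta_j}X^i e^{-S_j/\beta_j}$ turns the log-derivative into exactly $-\mathbb{E}_{\mathbb{Q}^j_\mathbf{X}}[X^i]$.

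Next I would carry out the main differentiation by writing the allocation as a ratio $\mathbb{E}_{\mathbb{Q}^j_\mathbf{X}}[Y^{i,j}_\mathbf{X}]=N(w)/D(w)$ with $N(w)=\mathbb{E}[Y^{i,j}_\mathbf{X}e^{-S_j/\beta_j}]$ and $D(w)=\mathbb{E}[e^{-S_j/\beta_j}]$, and applying the quotient rule. Because both the integrand $Y^{i,j}_\mathbf{X}$ and the tilting factor $e^{-S_j/\beta_j}$ depend on $w$, the logarithmic derivative $D'/D=-\tfrac{1}{\beta_j}\mathbb{E}_{\mathbb{Q}^j_\mathbf{X}}[X^i]$ combines with the measure derivative inside $N'$ to produce precisely a covariance under $\mathbb{Q}^j_\mathbf{X}$. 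The clean outcome of this step is
\begin{equation*}
\frac{\partial}{\partial w}\mathbb{E}_{\mathbb{Q}^j_\mathbf{X}}[Y^{i,j}_\mathbf{X}] = \mathbb{E}_{\mathbb{Q}^j_\mathbf{X}}\Big[\frac{\partial Y^{i,j}_\mathbf{X}}{\partial w}\Big] - \frac{1}{\beta_j}\Cov_{\mathbb{Q}^j_\mathbf{X}}\big(Y^{i,j}_\mathbf{X}, X^i\big),
\end{equation*}
the second term being the signature of the parameter-dependent exponential tilt.

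Finally I would evaluate the two pieces. For the first, $\partial_w Y^{i,j}_\mathbf{X}=-X^i+\tfrac{1}{\alpha_i\beta_j}(X^i+\partial_w d_j)$; taking $\mathbb{E}_{\mathbb{Q}^j_\mathbf{X}}$ and substituting $\partial_w d_j=-\mathbb{E}_{\mathbb{Q}^j_\mathbf{X}}[X^i]$, the two $\tfrac{1}{\alpha_i\beta_j}\mathbb{E}_{\mathbb{Q}^j_\mathbf{X}}[X^i]$ contributions cancel and leave $-\mathbb{E}_{\mathbb{Q}^j_\mathbf{X}}[X^i]$. For the second, since $d_j$ is deterministic, $\Cov_{\mathbb{Q}^j_\mathbf{X}}(Y^{i,j}_\mathbf{X},X^i)=-w\Var_{\mathbb{Q}^j_\mathbf{X}}(X^i)+\tfrac{1}{\alpha_i\beta_j}\Cov_{\mathbb{Q}^j_\mathbf{X}}(S_j,X^i)$. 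Assembling these yields the first displayed identity, and pulling out $-1/\beta_j$ while writing $\Var_{\mathbb{Q}^j_\mathbf{X}}(X^i)=\Cov_{\mathbb{Q}^j_\mathbf{X}}(X^i,X^i)$ regroups the last two terms into the single covariance $-\tfrac{1}{\beta_j}\Cov_{\mathbb{Q}^j_\mathbf{X}}(X^i,\tfrac{1}{\alpha_i\beta_j}S_j-wX^i)$ of the second line. The one genuine subtlety, and hence the step I would treat most carefully, is the legitimacy of differentiating under the expectation together with the quotient rule for the tilted measure: one needs local integrability of $X^ie^{-S_j/\beta_j}$ and its $w$-derivatives near the given $w>0$, which is immediate in the Gaussian case and holds more generally under exponential-moment assumptions on $X^i$. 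This is also exactly where the dependence of $\mathbb{Q}^j_\mathbf{X}$ on $w$ must be tracked rather than frozen, which is what produces the covariance correction.
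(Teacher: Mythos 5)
Your proof is correct: the identity $\partial_{w_{i,j}} d_j = -\mathbb{E}_{\mathbb{Q}^j_{\mathbf{X}}}[X^i]$, the tilted-measure differentiation formula
\begin{equation*}
\frac{\partial}{\partial w_{i,j}}\mathbb{E}_{\mathbb{Q}^j_{\mathbf{X}}}\big[Y^{i,j}_{\mathbf{X}}\big]
=\mathbb{E}_{\mathbb{Q}^j_{\mathbf{X}}}\Big[\tfrac{\partial Y^{i,j}_{\mathbf{X}}}{\partial w_{i,j}}\Big]
-\frac{1}{\beta_j}\Cov_{\mathbb{Q}^j_{\mathbf{X}}}\big(Y^{i,j}_{\mathbf{X}},X^i\big),
\end{equation*}
the cancellation of the two $\tfrac{1}{\alpha_i\beta_j}\mathbb{E}_{\mathbb{Q}^j_{\mathbf{X}}}[X^i]$ contributions, and the final regrouping all check out and reproduce both displayed identities. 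Your organization, however, differs from the paper's. The paper expands $\mathbb{E}_{\mathbb{Q}^j_{\mathbf{X}}}[Y^{i,j}_{\mathbf{X}}]$ into three explicit ratios of $\mathbb{P}$-expectations (the $-w_{i,j}X^i$ part, the $S_j/(\alpha_i\beta_j)$ part, and the $d_j/(\alpha_i\beta_j)$ part written as $\tfrac{1}{\alpha_i}\log(-\beta/B)+\tfrac{1}{\alpha_i}\log\mathbb{E}[e^{-S_j/\beta_j}]$), differentiates each ratio separately via Leibniz and the quotient rule, and only at the end recognizes the variance and covariance terms. You instead isolate, once, the general fact that differentiating an expectation under the $w$-dependent exponential tilt produces the correction $-\tfrac{1}{\beta_j}\Cov_{\mathbb{Q}^j_{\mathbf{X}}}(\cdot,X^i)$, and then apply it to $Y^{i,j}_{\mathbf{X}}$ as a whole, letting the deterministic $d_j$ drop out of the covariance and letting $\partial_{w_{i,j}} d_j$ cancel against $\partial_{w_{i,j}} S_j=X^i$ in the mean term. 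The underlying computations are the same quotient-rule manipulations, but your decomposition is more structural: it exposes the covariance term as the signature of the parameter-dependent measure, which is exactly the form the paper arrives at in its proof of Proposition~\ref{proposition: sensitivity analysis-main}, so your single identity in effect unifies both sensitivity results. The regularity caveat you flag (interchange of differentiation and expectation) is handled identically in the paper, which simply assumes Leibniz's rule applies.
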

%\noindent Known the covariances under measure $\mathbb{P}$, we still don't know them under measure $Q^j$ for group $j$.
%\begin{proof} See \textbf{Appendix} \ref{Appendix: sensitivity analysis}.
%\end{proof}

\noindent We give the proof in Appendix~\ref{Appendix: sensitivity analysis}.
 
\subsection{Monotonicity}
In a grouping set sequence $\{I_1,\ldots, I_h\}$, for some set $I_m$, assume there is a non-empty subset $I_{m'}$ of $I_m$ and for every $k\in I_{m'}$, assume the weight for risk factor $X^k$ is $w_{k, m'}\in (0,w_{k,m}]$. Then define $I_{m''}=\{k\in I_m\,:\, w_{k,m}-w_{k,m'}>0\}$ and the corresponding weights are $w_{k,m''}=w_{k,m}-w_{k,m'}$ for all $k\in I_{m''}$. Then there will be $h+1$ groups and the new grouping set sequence is $\{I_1,\ldots,I_{m'},I_{m''},\ldots, I_h\}$ while the weights structure are the same as before except those of groups $I_{m'}$ and $I_{m''}$. The optimal risk allocations under the new grouping of the primal problem coincide with $Y^{k,r}$, $k\in I_r$, for $r\neq m$. For $r=m$, $k\in I_{m'}$ or $I_{m''}$, we first know $w_{k,m'}\leq w_{k,m}$, $\,w_{k,m''}\leq w_{k,m}$ and we have the following.

\begin{prop}\label{prop: monotonicity-2}
Under the above setup, define $Y^{k,m}$, $k\in I_{m}$, the optimal allocation of group $m$ to the primal problem given $h$ groups. Define $Y^{k,m'}$, $k\in I_{m'}$ and $Y^{k,m''}$, $k\in I_{m''}$ the optimal allocations of groups $m'$ and $m''$ to the primal problem given $h+1$ groups, where $I_{m'}\in I_m$ and $I_{m''}\in I_m$. Then
\begin{equation}\label{eqn: monotonicity}
\mathbb{E}_{{\mathbb Q}_{\mathbf{X}}^m}\left[\sum\limits_{k\in I_{m'}}\dfrac{w_{k,m'}}{w_{k,m}}Y^{k,m}\right]\leq \eta_m'\log\left\{-\dfrac{\beta'}{B}\mathbb{E}\left[\exp\left(-\frac{1}{\eta_m'}\sum\limits_{k\in I_{m'}}w_{k,m'}X^k \right)\right]\right\}, 
\end{equation}
where $ \eta_m'=\sum\limits_{k\in I_{m'}}\dfrac{w_{k,m'}}{w_{k,m}}\dfrac{1}{\alpha_k}$.
%Particularly, if the group $I_{m'}$ consists of only one single element $\{i\}$, then for all $i\in I_m$,
%\[
%\mathbb{E}_{Q^m}\left[Y^{k,m}\right]\leq Y^{k,m'}.
%\]
%The proposition(second result) shows the individuals can never benefit from being alone even for part of it, which is consistent with the monotonicity result in the original paper.

Particularly,  if both $\sum\limits_{k\in I_{m'}}w_{k,m'}X^k$ and $\sum\limits_{k\in I_{m''}}w_{k,m''}X^k$ are nonnegative, it holds that 
\[
\mathbb{E}_{{\mathbb Q}^m_{\mathbf{X}}}\left[\sum\limits_{k\in I_{m'}}\dfrac{w_{k,m'}}{w_{k,m}}Y^{k,m}\right]\leq d_{m'},\quad \mathbb{E}_{{\mathbb Q}^m_{\mathbf{X}}}\left[\sum\limits_{k\in I_{m''}}\dfrac{w_{k,m''}}{w_{k,m}}Y^{k,m}\right]\leq d_{m''},
\]
thus 
\begin{align*}
\sum\limits_{k\in I_{m'}}\dfrac{w_{k,m'}}{w_{k,m}}Y^{k,m}+\sum\limits_{k\in I_{m''}}\dfrac{w_{k,m''}}{w_{k,m}}Y^{k,m}
&=\sum\limits_{k\in I_{m}}Y^{k,m}=d_m\leq d_{m'}+d_{m''},
\end{align*}
where $d_{m'}=\beta_m'\log\left\{-\dfrac{\beta'}{B}\mathbb{E}\left[\exp\left(-\frac{1}{\beta_m'}\sum\limits_{k\in I_{m'}}w_{k,m'}X^k \right)\right]\right\}$ and $d_{m''}$ are the total risks of groups ${m'}$ and ${m''}$.
It points out that each individual profits from this decrease (of groups) by avoiding being (group) alone ($I_m\sim I_{m'}\cup I_{m''}$). %\yf{group alone?}

Also, when the system allows only disjoint grouping, {\it i.e.}, each individual joins groups with weight $1$, the inequality \eqref{eqn: monotonicity} implies the monotonicity result in \cite{biagini2020fairness}.
\end{prop}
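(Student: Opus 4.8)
The plan is to prove the inequality \eqref{eqn: monotonicity} by an explicit substitution followed by a single application of Jensen's inequality, and then to deduce the ``particularly'' statements and the final monotonicity $d_m\le d_{m'}+d_{m''}$ as consequences. First I would compute the left-hand side of \eqref{eqn: monotonicity} in closed form. Writing $S_{m'}:=\sum_{k\in I_{m'}}w_{k,m'}X^k$ and inserting the explicit optimizer $Y^{k,m}=-w_{k,m}X^k+\frac{1}{\alpha_k\beta_m}(S_m+d_m)$ from Theorem~\ref{thm:JointResult}, the weighted sum collapses to
\[
\sum_{k\in I_{m'}}\frac{w_{k,m'}}{w_{k,m}}Y^{k,m}=-S_{m'}+\frac{\eta_m'}{\beta_m}\,(S_m+d_m),
\]
because $\sum_{k\in I_{m'}}\frac{w_{k,m'}}{w_{k,m}}\frac{1}{\alpha_k}=\eta_m'$ by definition. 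Taking $\mathbb{E}_{\mathbb{Q}^m_\mathbf{X}}[\cdot]$ exhibits the left-hand side as $-\mathbb{E}_{\mathbb{Q}^m_\mathbf{X}}[S_{m'}]+\frac{\eta_m'}{\beta_m}(\mathbb{E}_{\mathbb{Q}^m_\mathbf{X}}[S_m]+d_m)$.

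Next I would rewrite the right-hand side of \eqref{eqn: monotonicity} through the change of measure $\mathbb{P}\to\mathbb{Q}^m_\mathbf{X}$. Since $\mathrm{d}\mathbb{Q}^m_\mathbf{X}/\mathrm{d}\mathbb{P}\propto e^{-S_m/\beta_m}$, one has $\mathbb{E}[e^{-S_{m'}/\eta_m'}]=\mathbb{E}[e^{-S_m/\beta_m}]\,\mathbb{E}_{\mathbb{Q}^m_\mathbf{X}}[e^{S_m/\beta_m-S_{m'}/\eta_m'}]$, and the identity $\log\mathbb{E}[e^{-S_m/\beta_m}]=d_m/\beta_m-\log(\beta/(-B))$ (which is merely the definition of $d_m$) lets me write the right-hand side as $\eta_m'\log(\beta'/\beta)+\frac{\eta_m'}{\beta_m}d_m+\eta_m'\log\mathbb{E}_{\mathbb{Q}^m_\mathbf{X}}[e^{S_m/\beta_m-S_{m'}/\eta_m'}]$. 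Applying Jensen $\log\mathbb{E}_{\mathbb{Q}^m_\mathbf{X}}[e^{W}]\ge\mathbb{E}_{\mathbb{Q}^m_\mathbf{X}}[W]$ with $W=S_m/\beta_m-S_{m'}/\eta_m'$ reproduces exactly the left-hand side computed above, leaving a surplus of $\eta_m'\log(\beta'/\beta)$. This surplus is nonnegative because $\beta'=\beta-\beta_m+\beta_{m'}+\beta_{m''}$ and $\beta_{m'}+\beta_{m''}=\beta_m+\sum_{k\in I_{m'}\cap I_{m''}}\alpha_k^{-1}\ge\beta_m$, i.e. splitting a group can only enlarge the aggregate $\beta$ through the double-counted (split) members. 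This establishes \eqref{eqn: monotonicity}.

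For the ``particularly'' refinement I would pass from the bound \eqref{eqn: monotonicity} to $d_{m'}$ by comparing the tilt $1/\eta_m'$ with $1/\beta_{m'}$. Since $w_{k,m'}/w_{k,m}\le1$ we always have $\eta_m'\le\beta_{m'}$, and under $S_{m'}\ge0$ the map $\lambda\mapsto\mathbb{E}[e^{-\lambda S_{m'}}]$ is decreasing, so $\mathbb{E}[e^{-S_{m'}/\eta_m'}]\le\mathbb{E}[e^{-S_{m'}/\beta_{m'}}]$. The hard part — and the step I expect to be the genuine obstacle — is that the prefactor $\eta_m'\le\beta_{m'}$ and the moment-generating term push in opposite directions, so the two effects must be reconciled through the sign of the logarithm's argument: one needs $-\tfrac{\beta'}{B}\mathbb{E}[e^{-S_{m'}/\eta_m'}]\ge1$, equivalently $d_{m'}\ge0$, in order to conclude $\eta_m'\log\{-\tfrac{\beta'}{B}\mathbb{E}[e^{-S_{m'}/\eta_m'}]\}\le\beta_{m'}\log\{-\tfrac{\beta'}{B}\mathbb{E}[e^{-S_{m'}/\beta_{m'}}]\}=d_{m'}$. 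This is exactly the role played by the nonnegativity hypothesis on $S_{m'}$ (and symmetrically $S_{m''}$, using $S_{m'}+S_{m''}=S_m$); I would make this positivity explicit, since without control on the sign of the group risk the comparison of the two tilts can reverse. The analogous inequality for $m''$ follows by the same argument.

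Finally, I would add the two refined inequalities. Using $\frac{w_{k,m'}}{w_{k,m}}+\frac{w_{k,m''}}{w_{k,m}}=1$ for every $k\in I_m$ together with the constraint $\sum_{k\in I_m}Y^{k,m}=d_m$ (a deterministic constant, hence unchanged by $\mathbb{E}_{\mathbb{Q}^m_\mathbf{X}}$), the two left-hand sides sum to $d_m$, yielding $d_m\le d_{m'}+d_{m''}$. The reduction to the disjoint case of \cite{biagini2020fairness} is then immediate: taking every weight equal to $1$ forces $w_{k,m'}/w_{k,m}=1$, so $\eta_m'=\beta_{m'}$ and $\beta'=\beta$ (no member is split), the surplus $\eta_m'\log(\beta'/\beta)$ vanishes, \eqref{eqn: monotonicity} sharpens to the corresponding relation there, and the positivity caveat disappears.
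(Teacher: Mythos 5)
Your proof of the first inequality \eqref{eqn: monotonicity} is correct and coincides in substance with the paper's proof in Appendix~\ref{appendix: monotonicity-2}: both insert the optimizer from Theorem~\ref{thm:JointResult} to get $\sum_{k\in I_{m'}}\tfrac{w_{k,m'}}{w_{k,m}}Y^{k,m}=-S_{m'}+\tfrac{\eta_m'}{\beta_m}(S_m+d_m)$, use the change of measure $\mathrm{d}\mathbb{Q}^m_{\mathbf{X}}/\mathrm{d}\mathbb{P}\propto e^{-S_m/\beta_m}$, and apply one Jensen step $\exp(\mathbb{E}_{\mathbb{Q}^m_{\mathbf{X}}}[W])\le\mathbb{E}_{\mathbb{Q}^m_{\mathbf{X}}}[e^{W}]$ with $W=S_m/\beta_m-S_{m'}/\eta_m'$; upper-bounding the left side (the paper) or lower-bounding the right side (you) is the same computation read in opposite directions. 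On one point you are more careful than the paper: its chain terminates at the sharper bound with prefactor $\beta$ (display \eqref{eqn: monotonicity-2}) and silently passes to the stated $\beta'$, whereas your surplus $\eta_m'\log(\beta'/\beta)\ge 0$, justified by $\beta_{m'}+\beta_{m''}-\beta_m=\sum_{k\in I_{m'}\cap I_{m''}}\alpha_k^{-1}\ge 0$, makes that step explicit. Your final summation step (using $\tfrac{w_{k,m'}}{w_{k,m}}+\tfrac{w_{k,m''}}{w_{k,m}}=1$ and $\sum_{k\in I_m}Y^{k,m}=d_m$) and the disjoint-case reduction also match the paper.

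The gap is in the ``particularly'' part. You correctly isolate the obstacle: from $\eta_m'\le\beta_{m'}$ and $\mathbb{E}[e^{-S_{m'}/\eta_m'}]\le\mathbb{E}[e^{-S_{m'}/\beta_{m'}}]$ (valid since $S_{m'}\ge0$), the conclusion $\eta_m'\log\{\cdot\}\le\beta_{m'}\log\{\cdot\}=d_{m'}$ follows only when the logarithm on the right is nonnegative, i.e.\ $d_{m'}\ge0$. But your claim that nonnegativity of $S_{m'}$ ``is exactly'' what supplies this sign condition is wrong --- it pushes the opposite way: $S_{m'}\ge0$ gives $\mathbb{E}[e^{-S_{m'}/\beta_{m'}}]\le1$, hence $d_{m'}\le\beta_{m'}\log(\beta'/(-B))$, which is negative whenever $-B>\beta'$. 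Concretely: take two members with $\alpha_1=\alpha_2=1$ in one original group ($\beta_m=\beta=2$), each split as $w_{k,m'}=w_{k,m''}=\tfrac12$ (so $\eta_m'=1$, $\beta_{m'}=\beta_{m''}=2$, $\beta'=4$), $\mathbf{X}\equiv 0$ (so $S_{m'}=S_{m''}=0$, nonnegative), and $B=-40$. Then $Y^{k,m}=\tfrac12 d_m$ and the left side of the ``particularly'' inequality equals $\log(0.05)\approx-3.0$, while $d_{m'}=2\log(0.1)\approx-4.6$, so that inequality fails; moreover $d_{m'}+d_{m''}\approx-9.2<-6.0\approx d_m$, so the final claim $d_m\le d_{m'}+d_{m''}$ fails as well. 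The hypothesis $S_{m'}\ge0$, which you rightly spend once on the monotonicity of $\lambda\mapsto\mathbb{E}[e^{-\lambda S_{m'}}]$, cannot be recycled to control the sign of $d_{m'}$; what is genuinely needed is the additional assumption $d_{m'}\ge0$, $d_{m''}\ge0$. To be fair, the paper's own proof makes the identical leap --- the strict inequality below \eqref{eqn: monotonicity-2} is simply asserted ``if $S_{m'}$ is nonnegative'' --- and the example above shows it is false in that generality, consistently with Remark~\ref{remark: Necessary and Sufficient Condition for B}, which observes that splitting becomes strictly favorable when $-B$ is large. So you located a genuine weak point that the paper glosses over, but your patch does not close it.
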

%\begin{proof} See \textbf{Appendix.} \ref{appendix: monotonicity-2}.
%\end{proof}

\noindent The proof is left to the Appendix~\ref{appendix: monotonicity-2}.

\subsection{Generalized group formation and Nash equilibrium}
For a game with $N$ individuals and $h$ groups, similar as before, we assume there are $h$ buckets for each individual to choose which ones she belongs to and how much she puts. Therefore it induces a corresponding weight matrix for all individuals $${\bm W}=(w_{i,j})=\left(
\begin{array}{c} 
w_{1}\\
\vdots\\
w_i\\
\vdots\\
w_{N}
\end{array}
\right)\in \mathbb{R}^{N\times h},$$  defined as their strategies to distribute their risks, in order to minimize the individual total risk allocation. Each vector $w_i$ contains values of weights showing which groups individual $i$ belongs to and how much she wants to distribute the risk. So there is a natural constraint: $\sum_{j=1}^h w_{i,j}=1$ for every $i=1,\ldots, N$. Recall that the weight of individual $i$ in group $j$ is denoted by $w_{i,j}\in [0,1]$, and $w_{i,j}=0$ means individual $i$ is not in group $j$, $w_{i,j}=1$ means individual $i$ only joins group $j$. The case $w_{i,j}\in (0,1)$ means, besides group $j$, individual $i$ joins some other groups at the same time. Different sets of strategies may generate the same groups denoted by $\mathcal{C}({\bm W})$.

%A set of strategies a = (a1, · · · , aN ) generates m groups by considering only the non- empty buckets and 1 ≤ m ≤ N.
%Different sets of strategies may generate the same groups denoted by $\mathcal{C}(a)$.
The objective function of individual $i$ is defined by:
\begin{align}
\rho^i(\mathcal{C}({\bm W}))\,:=\,\mathbb{E}_{{\mathbb Q}_\mathbf{X}}[Y_\mathbf{X}^i] =\sum\limits_{j=1}^h\mathbb{E}_{{\mathbb Q}^j_{\mathbf X}}\left[Y_\mathbf{X}^{i,j}\right] ,
\end{align}
where $i=1,\ldots,N$ and $\rho^i$ is the total fair systemic risk allocation for individual $i$ defined in \eqref{eqn: individualTotalRisk}.
 
Let $\widehat{{\bm W}}=(\widehat{w}_{i,j})=\left(
\begin{array}{c} 
\widehat{w}_{1}\\
\vdots\\
\widehat{w}_{i}\\
\vdots\\
\widehat{w}_{N}
\end{array}
\right)$ and $(\widehat{\bm W}^{-i},w^{i})=\left(
\begin{array}{c} 
\widehat{w}_{1}\\
\vdots\\
w^i\\
\vdots\\
\widehat{w}_{N}
\end{array}
\right)$ be the weight matrix $\widehat{{\bm W}}$ with the weight vector for individual $i$, {\it i.e.}, the $i$-th row, is replaced by a new vector $w^i$ whose elements sum up to 1.
\begin{definition}
The grouping $\mathcal{C}(\widehat{{\bm W}})$ defined by the weight matrix $\widehat{\bm W}$ is a Nash equilibrium if for every $i$ and any $w^i$,
\begin{align*}
\rho^i(\mathcal{C}(\mathbf{X}; \widehat{{\bm W}}))\,\leq\,\rho^i(\mathcal{C}(\mathbf{X}; \widehat{\bm W}^{-i},w^{i})),
\end{align*}
{\it i.e.}, the systemic risk allocation of individual $i$ is minimized under grouping $\mathcal{C}(\widehat{{\bm W}})$, given other individuals' strategies are $\widehat{\bm W}^{-i}$. 
\end{definition}

\noindent According the definition of Nash equilibrium, it is to be determined that when the grouping, determined by the matrix ${\bm W}$, is optimized and how individuals distribute their risks under Nash equilibrium.
%\begin{itemize}
%\item If it is given, the remaining question is to find the optimal weights for the individuals to optimize the individual risk, and study whether the individual optimal allocation(Nash) and the system optimal allocation coincide or conflict (refer to Section \ref{section:Optimal Weights in a Fully Participated System} and more to be done). %(It is then desirable that each individual profits from this decrease in total systemic risk in the sense that also its individual risk allocation decreases.)
%\item In the other case of assuming full freedom, one need to first decide non zero elements/grouping and then optimal the weights, and compare with "fully-participated" or "all-in-one". This would need more work.
%\item Also, we can assume $h\leq N$, if we can rule out the case that one individual form a group itself.
%\item \textcolor{blue}{We can also use bucket-choosing description for the Nash, {\it i.e.}, each individual has $h$ buckets to choose which ones it belong to and how much it puts.}
%\item \textcolor{blue}{Does a Nash equilibrium exist? If it exists, is it unique?}

\begin{remark}\label{remark: Necessary and Sufficient Condition for B}
One can not claim that in this overlapping group case, it is still true that a single group with all the individuals is a (trivial) Nash equilibrium. It follows from the proof in {Appendix.} \ref{appendix: Necessary and Sufficient Condition for B}. When $B$, the minimal level of expected utility, is small, it means the system has a high tolerance with respect to risks. Then individuals tend to split into different groups so there is no trivial Nash equilibrium. It can help explain why banks tend to join multiple central clearing counterparties (CCPs) to allocate their risks.
\end{remark}

\subsection{Case discussion: correlated Gaussian distribution}
In this section, we take Gaussian distribution for the risk factors for simplicity and discuss in detail. Similar to section \ref{section: DisjointCorGau}, assume the joint distribution of $\mathbf{X}=(X^i,i=1,\ldots,N)^T$ follows a multivariate Gaussian distribution, that is, $\mathbf{X} \sim N({\bm \mu},{\bm \Sigma})$ where ${\bm \mu}\in \mathbb{R}^N$ and ${\bm \Sigma}\in \mathbb{R}^{N \times N}$ is positive semi-definite. And define the column vector of $\bm W$ as a group vector given by
\[
A_j=\left\{
\begin{array}{ll}
0,\quad &i\not\in I_j\\
w_{i,j}, &i\in I_j,\quad i=1,\cdots,N
\end{array}
\right.\quad\in \mathbb{R}^{1\times N},\quad \text{for}\,\, j=1,\ldots, h.
\]
Then the group sum follows
\[
S_j = \sum\limits_{i\in I_j}w_{i,j} X^i\,=\, A_j\mathbf{X}\,\sim\,N\big(A_j{\bm \mu},A_j{\bm \Sigma}A_j^T\big)\,\stackrel{def}{=}N(\mu_j^s, (\sigma_j^s)^2),
\]
where 
\[\mu_j^s=\sum\limits_{k\in I_j} w_{k,j}\mu_k,\quad (\sigma_j^s)^2=\sum\limits_{m,k\in I_j} w_{m,j}w_{k,j}\sigma_{km}.\]

\noindent Using the results in Appendix \ref{proof: expResults}, we have for $j=1,\ldots, h$ and for $i\in I_j$:
\[
d_j = \beta_j\log\Big(-\dfrac{\beta}{B}\mathbb{E}\big[\exp (-S_j/\beta_j) \big]\Big)=\beta_j\log \Big (\frac{\beta}{-B} \Big)-\,\mu_j^s\,+\,\frac{(\sigma_j^s)^2}{2\beta_j} ,
\]
and 
the systemic risk allocation of individual $i$ in group $j$ is given by
\begin{align}
\mathbb E _{\mathbb Q^{j}_{\mathbf X}} [ Y^{i,j}_{\mathbf X} ]
&=\mathbb E  \Big [ Y^{i,j}_{\mathbf X}\cdot \dfrac{d{\mathbb Q}^j_\mathbf{x}}{d\mathbb{P}} \Big]=\dfrac{\mathbb E\big[(- w_{i,j}X^{i} + \frac{\, 1\,}{\,\alpha_{i} \, \beta_{j}}S_j+\,\frac{\, 1\,}{\,\alpha_{i} \, \beta_{j}}d_j)\cdot e^{-S_j/\beta_j}\big]}{\mathbb{E}\big(e^{-S_j/\beta_j}\big)}\nonumber
\\
&=-w_{i,j}\Big(\mu_i-\frac{1}{\beta_j}A_j{\bm \Sigma}_{[,i]}  \Big)+\frac{\, 1\,}{\,\alpha_{i} \, \beta_{j}}\Big(\mu_j^s-\frac{(\sigma_j^s)^2}{\beta_j}  \Big)+\frac{\, 1\,}{\,\alpha_{i} \, \beta_{j}}\Big(\beta_j\log \Big(\frac{\beta}{-B}\Big)-\,\mu_j^s\,+\,\frac{(\sigma_j^s)^2}{2\beta_j} \Big)\nonumber
\\
&=\frac{1}{\alpha_i}\log \Big(\frac{\beta}{-B}\Big)-w_{i,j}\mu_i+\frac{w_{i,j}}{\beta_j}A_j{\bm \Sigma}_{[,i]} -\frac{(\sigma_j^s)^2}{2\beta_j^2\alpha_i} ,\label{eq: joint-generalsolution}
\end{align}
where $A_j{\bm \Sigma}_{[,i]}=\sum\limits_{k\in I_j} w_{k,j}\sigma_{ki}$.

\subsubsection{Optimal weights in a general system}\label{section:Optimal Weights in a General System}
Continuing with the Gaussian distribution assumption, we use the formula \eqref{eq: joint-generalsolution} to first find the optimal weight vector $w^*$ for a given individual assuming the weight structure for other individuals is known. Then we search for Nash equilibrium numerically using an algorithm in section \ref{sec:algo}. That is, we minimize the total fair systemic risk allocation of individual $i$ defined in \eqref{eqn: individualTotalRisk} over the weight distributions $w_i = (w_{i,j},j=1,\ldots,N)$,
\begin{align*}
\min_{w_i} \mathbb{E}_{\mathbb{Q}_\mathbf{X}}\left[Y^{i}\right]&=\min_{w_i}\sum\limits_{j=1}^h \mathbb{E}_{\mathbb{Q}_\mathbf{X}^j}\left[Y^{i,j}\right]\\
&=\min_{w_{i,j}}\sum\limits_{j=1}^h\mathbf{1}_{w_{i,j}>0} \left[\frac{1}{\alpha_i}\log \Big(\frac{\beta}{-B}\Big)-w_{i,j}\mu_i+\frac{w_{i,j}}{\beta_j}A_j{\bm \Sigma}_{[,i]} -\frac{(\sigma_j^s)^2}{2\beta_j^2\alpha_i} \right],
\end{align*}
subject to $$ \sum\limits_{j=1}^hw_{i,j}\mathbf{1}_{w_{i,j}>0}=1.$$

\noindent For simplicity, we consider the problem when individual $i$ joins {at most two groups} to discuss the optimal weights.

\subsubsection{Risk allocation under at most two groups}\label{section: 2 groups}
When individual $i$ joins at most $2$ groups, there are three cases to discuss about the weights. Without loss of generality, we assume the weights are denoted by $(w_{i,1}, w_{i,2})$ for individual $i$, the weights for others are fixed and there is at least one other individual in group 1 and 2. Given that
\begin{align*}
 \mathbb{E}_{\mathbb{Q}_\mathbf{X}}\left[Y^{i}\right]=
 &\mathbf{1}_{w_{i,1}>0} \left[\frac{1}{\alpha_i}\log(\frac{\beta}{-B})-w_{i,1}\mu_i+\frac{w_{i,1}}{\beta_1}A_1{\bm \Sigma}_{[,i]} -\frac{(\sigma_1^s)^2}{2\beta_1^2\alpha_i} \right]\\
 &+\mathbf{1}_{w_{i,2}>0} \left[\frac{1}{\alpha_i}\log(\frac{\beta}{-B})-w_{i,2}\mu_i+\frac{w_{i,2}}{\beta_2}A_2{\bm \Sigma}_{[,i]} -\frac{(\sigma_1^s)^2}{2\beta_2^2\alpha_i} \right],
\end{align*}
where $\beta_j = \sum_{i\in \{k:w_{k,j}>0\}}\frac{1}{\alpha_i}$ and $\beta =\sum_{j=1}^h \beta_j$,
we discuss the following two boundary cases and one non-boundary case.
\begin{itemize}
\item Boundary case 1: $(w_{i,1}, w_{i,2})=(1,0)$, then
\begin{align}
\mathbb{E}_{\mathbb{Q}_\mathbf{X}}\left[Y^{i}\right]
&= \frac{1}{\alpha_i}\log \Big(\frac{\beta}{-B}\Big)-\mu_i+\frac{1}{\beta_1}A_1{\bm \Sigma}_{[,i]} -\frac{(\sigma_1^s)^2}{2\beta_1^2\alpha_i}  \label{eqn:corner-1}\\
&= \frac{1}{\alpha_i}\log \Big(\frac{\beta}{-B}\Big)-\mu_i+\frac{1}{\beta_1}\left(\sum\limits_{k=1,k\neq i}^N w_{k,1}\sigma_{ki}+\sigma_{ii}\right) \nonumber \\
&-\frac{1}{2\beta_1^2\alpha_i} 
\left(\sum\limits_{m,k=1,\neq i}^N w_{k,1}w_{m,1}\sigma_{km}+2\sum\limits_{k=1, k\neq i}^Nw_{k,1}\sigma_{ki}+\sigma_{ii}\right).\nonumber
\end{align}

\item Boundary case 2: $(w_{i,1}, w_{i,2})=(0,1)$, then
\begin{align}
\mathbb{E}_{\mathbb{Q}_\mathbf{X}}\left[Y^{i}\right]
&= \frac{1}{\alpha_i}\log \Big(\frac{\beta}{-B}\Big)-\mu_i+\frac{1}{\beta_2}A_2{\bm \Sigma}_{[,i]} -\frac{(\sigma_2^s)^2}{2\beta_2^2\alpha_i} \label{eqn:corner-2}\\
&= \frac{1}{\alpha_i}\log\Big(\frac{\beta}{-B}\Big)-\mu_i+\frac{1}{\beta_2}\left(\sum\limits_{k=1,k\neq i}^N w_{k,2}\sigma_{ki}+\sigma_{ii}\right)  \nonumber
\\
&-\frac{1}{2\beta_2^2\alpha_i} 
\left(\sum\limits_{m,k=1,\neq i}^N w_{k,2}w_{m,2}\sigma_{km}+2\sum\limits_{k=1,k\neq i}^Nw_{k,2}\sigma_{ki}+\sigma_{ii}\right).\nonumber
\end{align}
In the above formulas, $\beta_1 = \sum\limits_{k\neq i\,,\,w_{k,1}>0}\frac{1}{\alpha_k}+\frac{1}{\alpha_i}$, $\beta_2 = \sum\limits_{k\neq i,w_{k,2}>0}\frac{1}{\alpha_k}+\frac{1}{\alpha_i}$ and $$\beta = \sum\limits_{k\neq i\,,\,w_{k,1}>0}\frac{1}{\alpha_k}+\sum\limits_{k\neq i\,,\,w_{k,2}>0}\frac{1}{\alpha_k}+\frac{1}{\alpha_i}. $$

\item Non-boundary case: $(w_{i,1}, w_{i,2})=(w,1-w)$, while $0<w<1$. Then
\begin{align}
\mathbb{E}_{\mathbb{Q}_\mathbf{X}}\left[Y^{i}\right]
&= \frac{2}{\alpha_i}\log \Big(\frac{\beta'}{-B}\Big)-\mu_i+\frac{w}{\beta_1}A'_1{\bm \Sigma}_{[,i]}+\frac{1-w}{\beta_2}A_2'{\bm \Sigma}_{[,i]} -\frac{(\sigma_1')^2}{2\beta_1^2\alpha_i} -\frac{(\sigma_2')^2}{2\beta_2^2\alpha_i} \label{eqn:non-corner}\\
&= \frac{2}{\alpha_i}\log\Big(\frac{\beta'}{-B}\Big)-\mu_i\nonumber\\
&+\frac{w}{\beta_1}\left(\sum\limits_{k=1,k\neq i}^N w_{k,1}\sigma_{ki}+w\sigma_{ii}\right)+\frac{1-w}{\beta_2}\left(\sum\limits_{k=1,k\neq i}^N w_{k,2}\sigma_{ki}+(1-w)\sigma_{ii}\right) \nonumber \\
&-\frac{1}{2\beta_1^2\alpha_i} 
\left(\sum\limits_{m,k=1,\neq i}^N w_{k,1}w_{m,1}\sigma_{km}+2w\sum\limits_{k=1,k\neq i}^Nw_{k,1}\sigma_{ki}+w^2\sigma_{ii}\right)
\nonumber\\
&
-\frac{1}{2\beta_2^2\alpha_i} 
\left(\sum\limits_{m,k=1,\neq i}^N w_{k,2}w_{m,2}\sigma_{km}+2(1-w)\sum\limits_{k=1,k\neq i}^Nw_{k,2}\sigma_{ki}+(1-w)^2 \sigma_{ii}\right), \nonumber
\end{align}
where $\beta_1, \beta_2$ are the same as before but $\beta'=\beta_1+\beta_2 =  \beta+\frac{1}{\alpha_i}$.

\end{itemize}

\subsubsection{Risk allocation comparison between boundary and non-boundary cases}
\label{section: Compare Risk Allocation between Boundary cases and Non-Boundary case}
Here we compare the minimal risk allocation of non-boundary case \eqref{eqn:non-corner} with the risks of boundary cases \eqref{eqn:corner-1} and \eqref{eqn:corner-2}. First, we investigate the non-boundary case and prove $ \mathbb{E}_{\mathbb{Q}_\mathbf{X}}\left[Y^{i}\right]$ in \eqref{eqn:non-corner} is a quadratic function of $w$ and the minimal point is $w^*$ %$\underline{\max(0,\min(1,w^*))}$ 
given by \eqref{eqn:w*}.
Taking partial derivative of \eqref{eqn:non-corner} gives 
\begin{align*}
\dfrac{\partial \mathbb{E}_{\mathbb{Q}_\mathbf{X}}\left[Y^{i}\right]}{\partial w}
&=\frac{1}{\beta_1}\sum\limits_{k=1,k\neq i}^N w_{k,1}\sigma_{ki}+\frac{2w}{\beta_1}\sigma_{ii}-\frac{1}{\beta_2}\sum\limits_{k=1,k\neq i}^N w_{k,2}\sigma_{ki}-\frac{2(1-w)}{\beta_2}\sigma_{ii} \nonumber \\
&-\frac{1}{\beta_1^2\alpha_i} 
\left(\sum\limits_{k=1,\neq i}^Nw_{k,1}\sigma_{ki}+w\sigma_{ii}\right)
+\frac{1}{\beta_2^2\alpha_i} 
\left(\sum\limits_{k=1,\neq i}^Nw_{k,2}\sigma_{ki}+(1-w) \sigma_{ii}\right)\nonumber
\\
\\
&=\sum\limits_{k=1,k\neq i}^N \left(\frac{w_{k,1}}{\beta_1}-\frac{w_{k,2}}{\beta_2}\right)\sigma_{ki}-\sum\limits_{k=1,k\neq i}^N \left(\frac{w_{k,1}}{\beta_1^2\alpha_i}-\frac{w_{k,2}}{\beta_2^2\alpha_i}\right)\sigma_{ki}
\\
&+w\left[(\frac{2}{\beta_1}+\frac{2}{\beta_2})-(\frac{1}{\beta_1^2\alpha_i}+\frac{1}{\beta_2^2\alpha_i})\right]\sigma_{ii}-(\frac{2}{\beta_2}-\frac{1}{\beta_2^2\alpha_i})\sigma_{ii}=0,
\end{align*}
and hence, 
\begin{equation}\label{eqn:w*}
w^*=\dfrac{\sum\limits_{k=1,k\neq i}^N \left(\frac{w_{k,1}}{\beta_1^2\alpha_i}-\frac{w_{k,2}}{\beta_2^2\alpha_i}\right)\sigma_{ki}-\sum\limits_{k=1,k\neq i}^N \left(\frac{w_{k,1}}{\beta_1}-\frac{w_{k,2}}{\beta_2}\right)\sigma_{ki}+(\frac{2}{\beta_2}-\frac{1}{\beta_2^2\alpha_i})\sigma_{ii}}{\left[(\frac{2}{\beta_1}+\frac{2}{\beta_2})-(\frac{1}{\beta_1^2\alpha_i}+\frac{1}{\beta_2^2\alpha_i})\right]\sigma_{ii}}
\end{equation}
is a {minimizer}, since $$\dfrac{\partial^2 \mathbb{E}_{\mathbb{Q}_\mathbf{X}}\left[Y^{i}\right]}{\partial w^2}=\left[(\frac{2}{\beta_1}+\frac{2}{\beta_2})-(\frac{1}{\beta_1^2\alpha_i}+\frac{1}{\beta_2^2\alpha_i})\right]\sigma_{ii}>0.$$

\noindent Here, note that $\frac{2}{\beta_1}-\frac{1}{\beta_1^2\alpha_i}>0,\frac{2}{\beta_2}-\frac{1}{\beta_2^2\alpha_i}>0$, it is clear that the denominator of \eqref{eqn:w*} is positive. If
\[
\sum\limits_{k=1,k\neq i}^N \left(\frac{w_{k,1}}{\beta_1^2\alpha_i}-\frac{w_{k,2}}{\beta_2^2\alpha_i}\right)\sigma_{ki}-\sum\limits_{k=1,k\neq i}^N \left(\frac{w_{k,1}}{\beta_1}-\frac{w_{k,2}}{\beta_2}\right)\sigma_{ki}>-\Big(\frac{2}{\beta_2}-\frac{1}{\beta_2^2\alpha_i}\Big)\sigma_{ii},
\]
we have $w^*>0;$
and if 
\[
\sum\limits_{k=1,k\neq i}^N \left(\frac{w_{k,1}}{\beta_1^2\alpha_i}-\frac{w_{k,2}}{\beta_2^2\alpha_i}\right)\sigma_{ki}-\sum\limits_{k=1,k\neq i}^N \left(\frac{w_{k,1}}{\beta_1}-\frac{w_{k,2}}{\beta_2}\right)\sigma_{ki}<\Big(\frac{2}{\beta_1}-\frac{1}{\beta_1^2\alpha_i}\Big)\sigma_{ii},
\]
we have $w^*<1.$

In conclusion, let 
\[
A:=\sum\limits_{k=1,k\neq i}^N \left(\frac{w_{k,1}}{\beta_1^2\alpha_i}-\frac{w_{k,2}}{\beta_2^2\alpha_i}\right)\sigma_{ki}-\sum\limits_{k=1,k\neq i}^N \left(\frac{w_{k,1}}{\beta_1}-\frac{w_{k,2}}{\beta_2}\right)\sigma_{ki},
\]
\[
B_1:= \Big(\frac{2}{\beta_1}-\frac{1}{\beta_1^2\alpha_i}\Big)\sigma_{ii},\quad\quad 
B_2:=\Big(\frac{2}{\beta_2}-\frac{1}{\beta_2^2\alpha_i}\Big)\sigma_{ii}.
\]
In non-boundary case, the {local optimal weights} for individual $i$ joining two groups are {non-zero}, {\it i.e.}, $$w^*=\dfrac{A+B_2}{B_1+B_2}\in (0,1)\quad\text{and}\quad 1-w^*\in (0,1)$$ if and only if $-B _2<A<B_1$, {\it i.e.},
\begin{equation}\label{eqn:w^*condition}
-(\frac{2}{\beta_2}-\frac{1}{\beta_2^2\alpha_i})\sigma_{ii}<\,\sum\limits_{k=1,k\neq i}^N \left(\frac{w_{k,1}}{\beta_1^2\alpha_i}-\frac{w_{k,2}}{\beta_2^2\alpha_i}\right)\sigma_{ki}-\sum\limits_{k=1,k\neq i}^N \left(\frac{w_{k,1}}{\beta_1}-\frac{w_{k,2}}{\beta_2}\right)\sigma_{ki}\,<\Big(\frac{2}{\beta_1}-\frac{1}{\beta_1^2\alpha_i}\Big)\sigma_{ii}.
\end{equation}
In {Appendix}~\ref{appendix: Sufficient Condition for Local Optimal Weights}, we investigate the condition further by reducing it to a simplified sufficient condition.

\bigskip

Then given $w^*$, the minimal risk of \eqref{eqn:non-corner} is
\begin{align}\label{eqn:non-corner optimal-risk}
\mathbb{E}_{\mathbb{Q}_\mathbf{X}}\left[Y^{i}\right]\bigg|_{w=w^*}
&=\frac{2}{\alpha_i}\log \Big(\frac{\beta'}{-B}\Big)-\mu_i\\
&+w^* \sum\limits_{k=1,k\neq i}^N \left[\left(\frac{w_{k,1}}{\beta_1}-\frac{w_{k,2}}{\beta_2}\right)-\left(\frac{w_{k,1}}{\beta_1^2\alpha_i}-\frac{w_{k,2}}{\beta_2^2\alpha_i}\right)\right]\sigma_{ki}
+ \sum\limits_{k=1,k\neq i}^N \left(\frac{w_{k,2}}{\beta_2}-\frac{w_{k,2}}{\beta_2^2\alpha_i}\right)\sigma_{ki}\nonumber
\\
&+\left(\frac{(w^*)^2}{\beta_1}-\frac{(w^*)^2}{2\beta_1^2\alpha_i}\right)\sigma_{ii}+\left(\frac{(1-w^*)^2}{\beta_2}-\frac{(1-w^*)^2}{2\beta_2^2\alpha_i}\right)\sigma_{ii}\nonumber
\\
&-\frac{1}{2\beta_1^2\alpha_i} 
\sum\limits_{m,k=1,\neq i}^N w_{k,1}w_{m,1}\sigma_{km}
-\frac{1}{2\beta_2^2\alpha_i} 
\sum\limits_{m,k=1,\neq i}^N w_{k,2}w_{m,2}\sigma_{km}.\nonumber
\end{align}
\begin{itemize}
    \item When the condition \eqref{eqn:w^*condition} holds, {\it i.e.}, $w^*\in (0,1)$, we compare the minimal risk of non-boundary case \eqref{eqn:non-corner} with that of boundary cases \eqref{eqn:corner-1} and \eqref{eqn:corner-2} in {Appendix}~\ref{appendix: Necessary and Sufficient Condition for Optimal Weights}. We conclude that when it holds that
\begin{equation}\label{eqn:condition}
\left\{
\begin{array}{ll}
\frac{2}{\alpha_i}\log(\frac{\beta'}{-B})-\frac{1}{\alpha_i}\log(\frac{\beta}{-B})
&<\dfrac{(A-B_1)^2}{2(B_1+B_2)}
+\frac{1}{2\beta_2^2\alpha_i} 
\sum\limits_{m,k=1,\neq i}^N w_{k,2}w_{m,2}\sigma_{km},
\\
\frac{2}{\alpha_i}\log(\frac{\beta'}{-B})-\frac{1}{\alpha_i}\log(\frac{\beta}{-B})
&<\dfrac{(A+B_2)^2}{2(B_1+B_2)}
+\frac{1}{2\beta_1^2\alpha_i} 
\sum\limits_{m,k=1,\neq i}^N w_{k,1}w_{m,1}\sigma_{km},
\end{array}
\right.
\end{equation}
the minimal risk for individual $i$ is achieved at $(w_{i1},w_{i2})=(w^*,1-w^*)$ with non-zero weights. This is a necessary and sufficient condition to determine which one is superior when the condition \eqref{eqn:w^*condition} is true.

\item When the condition \eqref{eqn:w^*condition} doesn't hold, if $w^* \leq 0$ and 
\[
\frac{2}{\alpha_i}\log \Big(\frac{\beta'}{-B}\Big)-\frac{1}{\alpha_i}\log\Big(\frac{\beta}{-B}\Big)
\geq\frac{1}{2\beta_1^2\alpha_i} 
\sum\limits_{m,k=1,\neq i}^N w_{k,1}w_{m,1}\sigma_{km},
\]
the minimal risk is achieved at the boundary case $(w_{i1},w_{i2})=(0,1)$. If $w^* \geq 1$ and 
\[
\frac{2}{\alpha_i}\log \Big(\frac{\beta'}{-B}\Big)-\frac{1}{\alpha_i}\log(\frac{\beta}{-B})
\geq \frac{1}{2\beta_2^2\alpha_i} 
\sum\limits_{m,k=1,\neq i}^N w_{k,2}w_{m,2}\sigma_{km},
\]
the minimal risk is achieved at $(w_{i1},w_{i2})=(1,0)$.
%otherwise we have to introduce infimum and supremum
\end{itemize}

\section{Numerical Algorithm for Nash Equilibrium}
\label{sec:algo}
For a game of (large) $N$ individuals, it is hard to find the risk allocation of each individual associated with the grouping system. Instead, using the discussion introduced in Section \ref{sec:DisGroups} and Section \ref{section:Optimal Weights in a General System}, {under the assumption of Gaussian distribution for risk factors,} we can do numerical analysis of some examples via Python to search for Nash equilibrium for the system, such that no individual could achieve a smaller fair risk allocation by changing grouping or weights under the equilibrium. We conclude that, for the disjoint group case, non-trivial Nash equilibrium does not always exist and, neither does the overlapping group case. If we apply the overlapping group setup to the real world, {that is, interpreting $\sum_{n=1}^N{Y^{n,j}}$ as the default fund of the CCP $j$ that is liable for any participating institution/bank}, %when individuals are banks or financial institutions and groups are CCPs, 
the numerical results indicate that big banks tend to join multiple CCPs while small banks tend to choose one.

\subsection{Numerical algorithm}\label{sec:algorithm}
In this section, we introduce numerical algorithms based on fictitious play. For the disjoint group case discussed in section \ref{sec:DisGroups}:
\begin{enumerate}
	\item Let $N$ individuals be in $N$ different groups, {\it i.e.},  $a_n = n$ for $1\leq n \leq N$, as the initial state;
	\item  At each stage, one individual is randomly picked with equal probability and it chooses to join the group which gives the minimal risk allocation. 
	\item Step 2 is repeated until the grouping is stabilized, and no individual has the incentive to move anymore.
\end{enumerate}
%It is guaranteed to generate one non-trivial Nash equilibrium, if there is any. If not, the algorithm will converge to the trivial one.

For the overlapping group case in section \ref{sec:Overlapping}, the algorithm is similar. We take the number of groups $h$ fixed and the initial weights for every individual in groups are randomly generated. The optimal weights at each stage are determined based on the discussion in section \ref{section:Optimal Weights in a General System}.

%To illustrate the algorithm, let us take the example in Claim 3.2., i.e. $N = 4$ with the correlation matrix  
%$=\, \left(
 %\begin{array}{cccc}
 %1\,\,&\rho\,\,&0\,\,&0\\
 %\rho &1 &0 &0\\
 %0&0&1&\rho\\
 %0&0&\rho&1
 %\end{array}
 %\right)$. If $\rho \geq 3/8$, then with approx. 50\% probability, the algorithm gives either the grouping  "\{1,3\}-\{2,4\}" or "\{1,4\}-\{2,3\}". If $\rho \leq -3/13$, the algorithm will produce the only non-trivial equilibrium  "\{1,2\}-\{3,4\}".

\subsection{Numerical examples}

%In \textit{Inter-Bank Correlation-Theory of Systemic Risk}, we find "Nicolo and Kwast (2001) find that stock prices of the biggest 22 U.S. banking organizations tended to increasingly move in lockstep during 1989–1999. The degree of correlation in stock price movements increased from \underline{0.41} in 1989 to \underline{0.56} during 1996–1999. "

%In \textit{Bank Correlation}, it may say banks in the same region can have negative correlation, if they concentrate on different industry, in terms of NPV(non performance asset).
%In https://www.federalreserve.gov/econres/feds/files/2020009pap.pdf, we see six CCPs clear the most important U.S. financial markets: LCH(non-US), FICC,... and their main products. However, interest rate swaps within LCH and CME have too many matching members. So we look at other CCPs for commodities between CME and European Commodity Clearing.

\begin{example} 
\textbf{Nearly-Block correlation matrix with positive, uniform $\rho$}.\\
In the case of $N=4$, assuming the means and standard deviations are the same and $\alpha_i$'s are 1, {\it i.e.}, $\alpha=[1, 1, 1, 1],\,    \mu= [10, 10, 10, 10],\,\sigma_i \equiv \sigma$ for all $i$. The correlation matrix is $
\left(\begin{array}{llll}
1.  \quad &  0.4\quad &  0\quad & 0\\
  0.4 & 1. &   0.05&  0\\
0 & 0.05 & 1. &     0.4\\
0 &  0 &   0.4&  1.
\end{array}
\right)
$, then for the disjoint group case, there exists one non trivial Nash equilibrium "\{1,3\}-\{2,4\}" . %The frequency ratio of non-trivial Nash to trivial Nash is about $83:17$.%\yf{list frequency info or not}%\textcolor{blue}{frequency ("\{1,3\}-\{2,4\}", trivial) = $[0.8319, 0.1681]$}
\end{example}

\begin{example}
N=10. When we take $\rho_{ij}=0.8$ for all $i\neq j,\,i,j=1,\ldots,N$ except $\rho_{19}=-0.3$. The values for other parameters are listed below:
\begin{equation}
\begin{array}{ll}
&\mu = [1, 1, 1, 2, 2, 3, 6, 6, 6, 7]\\
&\sigma = [4. , 2.8, 1.6, 1. , 3.8, 2.8, 0.9, 1.1, 4.2, 1.8]\\
&\alpha = [\underline{0.4}, 1.2, 1.8, 2.2, \underline{0.4}, 0.9, 2.8, 2.2, \underline{0.4}, 1.9]\\
&B = -8, \quad 
%\quad \text{small value} = 0.00001\,\text{as the value of } w^*\,\text{when it is close to }0,1.
%\\
  \text{The initial weights:}\quad (w_{i,1},w_{i,2})=(0.3,0.7)\quad \text{for all }i.
\end{array}
\end{equation}
By the algorithm presented in Section~\ref{sec:algorithm} we find the optimal weights for each individual one by one and it turns out there exists a non-trivial Nash equilibrium in the system:
\begin{equation}
\left(
\begin{array}{cc}
w_{1,1}&w_{1,2}\\
\vdots&\vdots\\
w_{i,1}&w_{i,2}\\
\vdots&\vdots\\
w_{10,1}&w_{10,2}
\end{array}
\right)
=\left(
\begin{array}{cc}
	1.  & 0  \\
       0.51& 0.49\\
       0.48& 0.52\\
       0.44& 0.56\\
       0.  & 1.  \\
       0.49& 0.51\\
       0.44& 0.56\\
       0.45& 0.55\\
       1.  & 0.  \\
       0.49& 0.51
\end{array}
\right).
\end{equation}
We can see for some individuals which seek risks, {\it i.e.}, with extremely small risk aversion parameters, they prefer being alone instead of separated.
\end{example}

\noindent As mentioned at the beginning of the section, the setup of systemic risks in the individual-group structure can be applied to the bank-CCP structure in real-life, where individuals are banks and groups are CCPs. Then an individual with a large utility parameter alpha represents a core bank which is very risk-averse.%\yf{The following is using bank-CCP words to explain. Start "bank-CCP" here or apply it on all examples.}
\begin{example}\label{example: two central}
N=10 (less risk-averse individuals). In this example, the utility parameters are modified to compare with the previous example and we interpret the results using the ``bank-CCP'' language. Assuming there are two core banks (4,7) and eight peripheral banks, the correlation matrix is given in table \ref{fig:correlation}.

%\[ [r_s, r_{s_{same}},r_{sb1},r_{sb2},r_b] = [[ 0.05, -0.25, -0.12, -0.09,  0.70]]  \]
\begin{table}[ht]
\begin{minipage}[h]{0.45\linewidth}
\centering
\includegraphics[width = 0.8\textwidth]{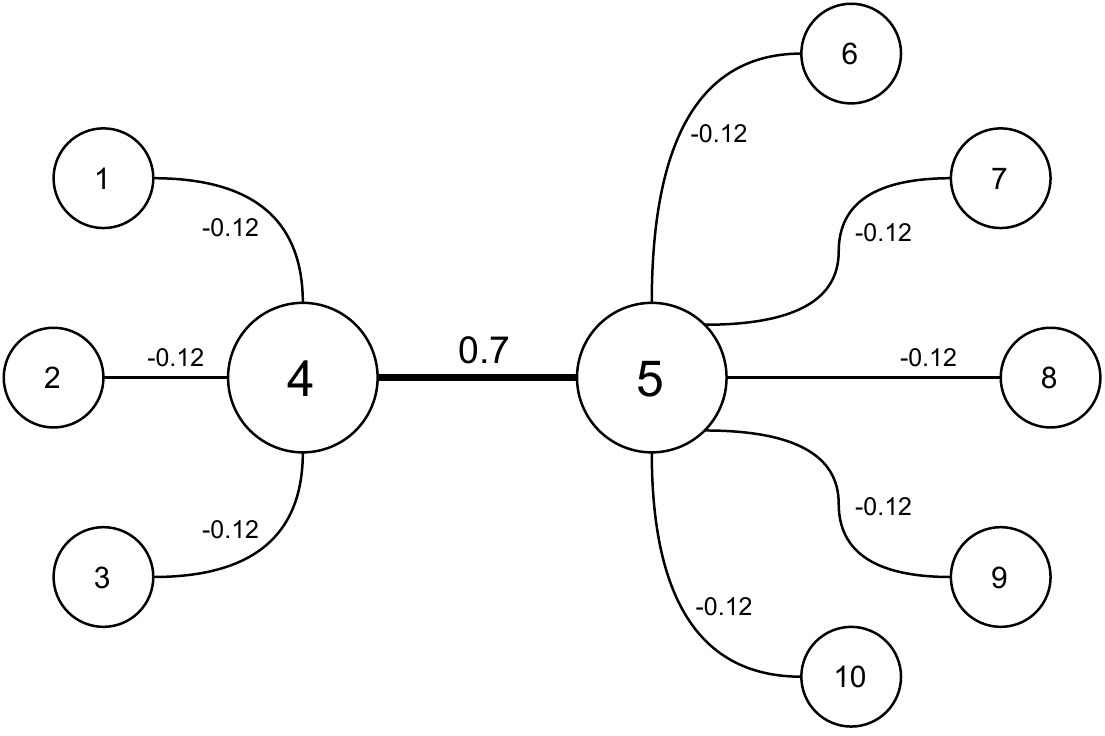}
%\captionsetup{labelformat=empty}
%\captionof{figure}{}
%\caption{Correlation Diagram for 10 Banks, where 4, are large/central banks. The solid lines with numbers represent the correlation between central banks and the correlation between central-local bank pairs. The correlation within local small banks of a central bank is -0.23. The correlation between local banks and the indirect central bank is -0.08. The correlation between local banks of two central banks is 0.05. 

\end{minipage}\hfill
\begin{minipage}[h]{0.55\linewidth}
\centering
\renewcommand{\arraystretch}{1.5}
\begin{adjustbox}{max width=0.8\textwidth, max height=5cm}
\begin{tabular}{c|ccc:c|c:ccccc}
    &1&2&3&4&5&6&7&8&9&10\\%[0.5ex] \hline
    \hline
    1&\multicolumn{3}{|c:}{\multirow{3}{*}{$\rho_{pp}=-.25$}}&{}&{}&\multicolumn{5}{c|}{\multirow{3}{*}{$\rho_{pp'}=.05$}}\\
    2&\multicolumn{3}{|c:}{}&\multirow{1}{*}{$\rho_{pc}=$}&\multirow{1}{*}{$\rho_{pc'}=$}&\multicolumn{5}{c|}{}\\
    3&\multicolumn{3}{|c:}{}&{$-.12$}&{$-.09$}&\multicolumn{5}{c|}{}\\
    \hdashline
    4&\multicolumn{3}{|c:}{$\rho_{pc}=-.12$}&&$\rho_{cc'}=.7$&\multicolumn{5}{c|}{$\rho_{cp'}=-.09$}\\
    %\circled{4}
    \hline
    5&\multicolumn{3}{|c:}{$\rho_{pc'}=-.09$}&$\rho_{cc'}=.7$&&\multicolumn{5}{c|}{$\rho_{c'p'}=-.12$}\\
    \hdashline
    6&\multicolumn{3}{|c:}{\multirow{5}{*}{$\rho_{pp'}=.05$}}&{}&{}&\multicolumn{5}{c|}{\multirow{5}{*}{$\rho_{p'p'}=-.25$}}\\
    7&\multicolumn{3}{|c:}{}&{}&{}&\multicolumn{5}{c|}{}\\
    8&\multicolumn{3}{|c:}{}&\multirow{1}{*}{$\rho_{cp'}=$}&\multirow{1}{*}{$\rho_{c'p'}=$}&\multicolumn{5}{c|}{}\\
    9&\multicolumn{3}{|c:}{}&{$-.09$}&{$-.12$}&\multicolumn{5}{c|}{}\\
    10&\multicolumn{3}{|c:}{}&{}&{}&\multicolumn{5}{c|}{}\\
    %\multicolumn{3}{|c:}{}&{}&{}&\multicolumn{5}{c|}{}\\
    \hline
\end{tabular}
\end{adjustbox}
%\captionsetup{labelformat=empty}
%\caption{Correlation matrix for core banks and peripheral banks. Parameters $c,c'$ stand for core bank 4 and core bank 5 respectively; $p$ stands for peripheral banks 1,2,3 and $p'$ stands for peripheral banks 6,7,8,9,10.}
   % \label{table:student}
   \renewcommand{\arraystretch}{1}
\end{minipage}
\caption{Correlation structure in Example \ref{example: two central}: The left diagram shows partial correlations for 10 banks, where 4 and 5 are core banks and coefficients are labelled for each bank pair. The correlation table on the right shows the correlations matrix where diagonals are all 1 and $\rho_{.,.}$ represents the correlation between two distinct banks. Subscripts $c,c'$ stand for core bank 4 and core bank 5 respectively; $p$ stands for peripheral banks 1,2,3 and $p'$ stands for peripheral banks 6,7,8,9,10.}
%Their values are: $(\rho_{cc'},\rho_{pp},\rho_{p'p'},\rho_{pc},\rho_{p'c'},\rho_{pc'},\rho_{cp'},\rho_{pp'})$ = (0.7,-0.25,-0.25,-0.12, -0.12, -0.09,-0.09,0.05).}
\label{fig:correlation}
\end{table}

The values for other parameters are listed below:
\begin{equation*}
\begin{array}{ll}
&\mu = [1, 1, 2, 2, 3, 4, 5, 5, 6, 7]\\
&\sigma = [4. , 2.8, 2.2, 1.7, 1.4, 3.2, 3.8, 1.9, 4.2, 2.5]\\
&\alpha = [0.4, 1. , 1.1, \underline{2.2}, \underline{2.8}, 0.9,0.8, 1.4, 0.6, 1.3]\\
&B = -8,%\quad \text{small value} = 0.00001\,\text{as the value of } w^*\,\text{when it is close to }0,1,
\end{array}
\end{equation*}
and the initial weights are randomly generated based on uniform distribution between 0 and 1 for every $i$. There exists a non-trivial Nash equilibrium in the system:
\begin{equation}
\left(
\begin{array}{cc}
w_{1,1}&w_{1,2}\\
\vdots&\vdots\\
w_{i,1}&w_{i,2}\\
\vdots&\vdots\\
w_{10,1}&w_{10,2}
\end{array}
\right)
=\left(
\begin{array}{cc}
	1.  & 0.  \\
       1.& 0.\\
       1.& 0.\\
       0.46& 0.54\\
       0.32  & 0.68  \\
       0.  & 1.  \\
       0.& 1.\\
       0.& 1.\\
       0.& 1.\\
       0.  & 1.
\end{array}
\right).
\end{equation}
We can see banks tend to stay with negatively correlated banks to mitigate the systemic risks. And risk-averse banks prefer splitting their risks by joining more CCPs.
\end{example}

\noindent The analysis of the systemic risk and grouping formation can be applied to the reality, and it turns out our numerical results are consistent with the choices of CCPs for banks and financial institutions. One example is shown below using real data.%\yf{consistent with \textbf{some} choices?}

\begin{example}\label{example: real-life}
\textbf{Real-life Example}. We take two CCPs who are clearing the same products but in different region. One is the Chicago Mercantile Exchange Inc. (CME), which operates two separate clearing services, one for commodity and financial futures and options, and one for interest rate swaps and swaptions. The other one is the European Commodity Clearing (ECC), which is a central clearing house in Europe specialising in energy and commodity products. We select $6$ clearing members and list them in order: J.P. Morgan(JPM), Goldman Sachs(GS), BNP Paribas(BNP), StoneX Group(SNEX), Banco Santander(SAN) and Interactive Brokers Group(IBKR). Among these firms, JPM, GS, SNEX and BNP are members of both CCPs. SAN is only in ECC while IBKR is only in CME.
\begin{table}
\centering
\begin{tabular}{c|cccccc|}
    &JPM&GS&BNP&SNEX&SAN&IBKR\\%[0.5ex] \hline
    \hline
    &1.&0.82&0.61&0.87&-0.27&0.86\\
    %\hdashline
    &0.82&1.&0.86&0.83&0.04&0.79\\
    %\hdashline
    &0.61&0.86&1.&0.65&  0.25& 0.60\\
    %\hdashline
    &0.87& 0.83&0.65& 1.& -0.35&0.89\\
    %\hdashline
    &-0.27&  0.04&  0.25& -0.35&  1.&-0.24\\
    %\hdashline
    &0.86&0.79& 0.60&  0.89&-0.24&1. \\ 
    \hline
    $\sigma$ &0.262& 0.245&0.235&0.264&0.236&0.233\\
    \hline
\end{tabular}
%\captionsetup{labelformat=empty}
\caption{Correlation matrix and standard deviation for 6 banks in Example \ref{example: real-life}.}
\label{table:real-life example}
\end{table}

We estimate the bank correlation matrix and standard deviation $\sigma$ from banks' stock prices and list them in table \ref{table:real-life example}. Without lost of generality, we assume the expected values of their risks are all $0$ since they have no effect on Nash equilibria according to the formula \eqref{eq: joint-generalsolution}. The values for $B$ is the same as the previous example. The risk-aversion parameters are chosen according clearing members' ``sizes'':
\[
\alpha = [2., 1.8 , 1.7, 1.9, 1.2, 0.85],
\]
and we list them in the following order:  $[\textit{JPM, GS, BNP, SNEX, SAN, IBKR}]$.

The numerical results show that there exists a non-trivial Nash equilibrium %with suitable choices of the utility parameters $\alpha$. 
%For example, when we list the clearing members in the following order: $[\textit{JPM, GS, BNP, SNEX, SAN, IBKR}]$, and we take
%then Nash equilibrium exists and is given by
\begin{equation}
\left(
\begin{array}{cc}
w_{1,1}&w_{1,2}\\
\vdots&\vdots\\
\vdots&\vdots\\
w_{6,1}&w_{6,2}
\end{array}
\right)
=\left(
\begin{array}{cc}
	0.73 & 0.27  \\
       0.61& 0.39\\
       0.56& 0.44\\
       0.54& 0.46\\
       1.  & 0.\\
       0.&1.
\end{array}
\right).
\end{equation}
This is consistent with the fact that the first four firms JPM, GS, BNP, SNEX are spitted and join in both CCP groups while SAN and IBKR belong to different CCPs. However, the distribution of weights cannot be verified here since related data of banks are not revealed in CCP documents.%\yf{keep the last sentence?}
\end{example}

\section{Conclusion}\label{sec:conclusion}

In this paper, we generalize the systemic risk measure proposed in \cite{biagini2019unified, biagini2020fairness} by allowing individual banks to choose their preferred groups instead of being assigned to certain groups. This introduces realistic game features in the proposed models, and allows us to analyze the systemic risk for disjoint and overlapping groups ({\it e.g.}, central clearing counterparties (CCP)). We introduce the concept of Nash equilibrium for these new models, and analyze the optimal solution under the Gaussian distribution of the risk factor. We also provide an explicit solution for the individual banks' risk allocation and study the existence and uniqueness of Nash equilibrium both theoretically and numerically. The developed numerical algorithm can simulate scenarios of equilibrium, and we apply it to study the bank-CCP structure with real data and show the validity of the proposed model. Further research includes obtaining more actual data on bank balances and bank interconnections to conduct more in-depth research and analysis. The participation percentage of financial institutions is left to be validated and explained with more data. It is also valuable to consider CCP clearing fee charge as in \cite{Hu2018SysCCP} and its effect on the equilibrium.

\section*{Acknowledgement}

The authors are grateful to St\'{e}phane Cr\'{e}pey, Samuel Drapeau, Mekonnen Tadese, Dorinel Bastide and Romain Arribehaute  for useful discussions on the formation of CCPs.
J.-P. Fouque acknowledges the support by the NSF grant DMS-1814091. T. Ichiba was supported in part by the NSF grant DMS-2008427. R. Hu was partially supported by the NSF grant DMS-1953035, the Faculty Career Development Award, the Research Assistance Program Award, and the
Early Career Faculty Acceleration funding at University of California, Santa Barbara.

%One of the main contribution of our paper was extending the systemic risk measure on disjoint groups in \cite{biagini2020fairness} to the overlapping group case that allows for random allocations to individual banks or institutions before aggregation of their risks among multiple groups. We introduced Nash equilibrium and discussed the optimal solution under Gaussian distribution of the risk factor $\mathbf{X}$ considering the dependencies among the individuals. We provided an explicit solution for the risk allocation(minimal total cash amount) of the individual financial institutions. 

%We conducted theoretical analysis about the existence and uniqueness of Nash equilibrium through propositions and verified the theoretical results through  numerical studies and examples. And we developed an algorithm to generate numerical results to simulate some scenarios of equilibrium. We applied our setup to the bank-CCP structure in real-life and showed the validity of our model through an example.

%\newpage
\bibliographystyle{acm.bst}
\bibliography{references}

%\newpage
\appendix
\section{Appendix}\label{Appendix}

\subsection{Comparison between Trivial Grouping and Multi-Groups}\label{proof: compare-Trivial-Multi-Groups}
We first look at the trivial grouping, {\it i.e.}, $m=h=1$ and all $k\in I_1=\{1,2,\ldots,N\}$. The group parameter and group vectors are
$$
\beta_m=\beta=\sum\limits_{i=1}^N\frac{1}{\alpha_i},\quad A_m=(1,1,\ldots,1).
$$
Then following \eqref{eq: generalsolution}, the systemic risk allocation of individual $i$ is:
\[
\mathbb E _{\mathbb Q^{1}_{\mathbf X}} [ Y^{i}_{\mathbf X} ]
=-\mu_i+\frac{1}{\alpha_i}\log(\frac{\beta}{-B})+\frac{1}{\beta}\sum\limits_{j=1}^N\sigma_{ji} -\frac{1}{2\beta^2\alpha_i} \sum\limits_{k,j=1}^N \sigma_{jk}.
\]
The total systemic risk allocation for the system is
\begin{align}
\sum\limits_{i=1}^N \mathbb E _{\mathbb Q^{1}_{\mathbf X}} [ Y^{i}_{\mathbf X} ]
&=-\sum\limits_{i=1}^N\mu_i+\beta\,\log(\frac{\beta}{-B})+\frac{1}{2\beta}  \sum\limits_{i,j=1}^N \sigma_{ji}.\label{eq: totalrisk-trivial}
\end{align}

Then for the multi-group case, assuming $m=1,\ldots, h$ ($h\geq 2$) and for $k_m\in I_m\neq \emptyset$, the systemic risk allocation of individual $k_m$ is:
\begin{align}
\mathbb E _{\mathbb Q^{m}_{\mathbf X}} [ Y^{k_m}_{\mathbf X} ]
&=-\mu_{k_m}+\frac{1}{\alpha_{k_m}}\log(\frac{\beta}{-B})+\frac{1}{\beta_{m}}\sum\limits_{j\in I_m}\sigma_{j{k_m}} -\frac{1}{2\beta_m^2\alpha_{k_m}} \sum\limits_{ j,l\in I_m} \sigma_{jl}.\nonumber
\end{align}
The total risk allocation is:
\begin{align}
\sum\limits_{m=1}^h\sum\limits_{k_m\in I_m}\mathbb E _{\mathbb Q^{m}_{\mathbf X}} [ Y^{k_m}_{\mathbf X} ]
&=-\sum\limits_{i=1}^N\mu_{i}+\beta\,\log(\frac{\beta}{-B}) +\frac{1}{2}\sum\limits_{m=1}^h\frac{1}{\beta_m}\sum\limits_{j,k\in I_m}\sigma_{j{k}}.\label{eq: totalrisk-nontrivial}
\end{align}
We need to compare the total risk of trivial grouping \eqref{eq: totalrisk-trivial} and the total risk of nontrivial grouping \eqref{eq: totalrisk-nontrivial}. For simplicity, we take $h=2$ in the nontrivial grouping case and compare.

Assume a $N$-individual system is divided into two subgroups with sizes $N_1=|I_1|$, $N_2=|I_2|$, respectively. Given all risk factors define
\[
S=\sum\limits_{i=1}^N X^i,\quad 
S_1=\sum\limits_{i\in I_1} X^i,\quad 
S_2=\sum\limits_{i\in I_2} X^i.
\]
Note that $S=S_1+S_2$. Therefore
\[
\Var(S)=\sum\limits_{i,j=1}^N \sigma_{ij},\quad
\Var(S_1)=\sum\limits_{i,j\in I_1} \sigma_{ij},\quad
\Var(S_2)=\sum\limits_{i,j\in I_2} \sigma_{ij}.
\]
Then we compare the last term in \eqref{eq: totalrisk-trivial}, \eqref{eq: totalrisk-nontrivial}:
\begin{align}
&\eqref{eq: totalrisk-trivial}: \quad\dfrac{1}{2\beta}\sum\limits_{i,j=1}^N\sigma_{i,j}=\dfrac{1}{2\beta} \Var(S)\label{eqn: lastterm-totalrisk-trivial}
\\
&\eqref{eq: totalrisk-nontrivial}:\quad\dfrac{1}{2}\left(\dfrac{1}{\beta_1}\,\sum\limits_{j,k\in I_1}\sigma_{j,k}+\dfrac{1}{\beta_2}\,\sum\limits_{j,k\in I_2}\sigma_{j,k}\right) =\dfrac{1}{2}\left(\dfrac{1}{\beta_1}\,\Var(S_1)+\dfrac{1}{\beta_2}\,\Var(S_2)\right).\label{eqn: lastterm-totalrisk-nontrivial}
\end{align}
Since 
\begin{align*}
\Var(S)=\Var(S_1)+\Var(S_2)+2\Cov (S_1,S_2),
\end{align*}
\begin{align*}
2\beta_1\beta_2\beta\cdot \left(\ref{eqn: lastterm-totalrisk-nontrivial}-\ref{eqn: lastterm-totalrisk-trivial}\right)
&=\beta_1\beta_2\Var(S_1)+\beta_2^2\Var(S_1)+\beta_1^2\Var(S_2)+\beta_1\beta_2\Var(S_2)
\\
&\quad -\beta_1\beta_2\left(\Var(S_1)+\Var(S_2)+2\Cov (S_1,S_2)\right)
\\
&=\beta_2^2\Var(S_1)+\beta_1^2\Var(S_2)-2\beta_1\beta_2\Cov (S_1,S_2)
\\
&=\Var(\beta_2 S_1-\beta_1 S_2)\geq 0.
\end{align*}
Thus \eqref{eqn: lastterm-totalrisk-trivial} $\leq$  \eqref{eqn: lastterm-totalrisk-nontrivial}, which is equivalent to \eqref{eq: totalrisk-trivial} $\leq$ \eqref{eq: totalrisk-nontrivial}. (The equality holds only if $S_1= S_2=S$, which we can exclude.) 

We can conclude the trivial grouping has a smaller total systemic risk allocation for the $N$-player system compared with two-group case. It can be extended to general grouping case and show the advantage of trivial grouping or fewer groups in terms of the total risk. And it is consistent with the monotonicity property proved in \cite{biagini2020fairness}.

\subsection{Proof of Claim \ref{claim: 3.1} }\label{proof: claim 3.1}
Under the assumption, $\beta_m=|I_m|\,\frac{1}{\alpha}$, equation \eqref{eq: generalsolution} becomes:
\begin{align*}
\mathbb E _{\mathbb Q^{m}_{\mathbf X}} [ Y^{k}_{\mathbf X} ] 
&= -\mu_k+\frac{1}{\alpha}\log(\frac{\beta}{-B})+\frac{\alpha}{|I_m|}(\sigma^2+(|I_m|-1)\rho\sigma^2)\\
& -\frac{1}{2\alpha}(\frac{\alpha}{|I_m|})^2 \big( |I_m|\sigma^2+\binom{|I_m|}{2} 2\rho\sigma^2\big)\\
&=-\mu_k+\frac{1}{\alpha}\log(\frac{\beta}{-B})+\frac{\alpha}{2|I_m|}\sigma^2+\big(\alpha\frac{|I_m|-1}{|I_m|}-\alpha\frac{|I_m|-1}{2|I_m|}\big)\rho\sigma^2\\
&=-\mu_k+\frac{1}{\alpha}\log(\frac{\beta}{-B})+\alpha\frac{1}{2|I_m|}\sigma^2+\alpha\frac{|I_m|-1}{2|I_m|}\rho\sigma^2\\
&=-\mu_k+\frac{1}{\alpha}\log(\frac{\beta}{-B})+
\frac{\alpha}{2}\big((1-\rho)\frac{1}{|I_m|}+\rho\big)\sigma^2.
\end{align*}
First, when $\rho\neq 1$, the function is monotonically decreasing in $|I_m|$. So the risk allocation for every individual/individual is maximized when $|I_m|=n$, {\it i.e.}, all individuals/individuals are in the same group.

If we assume individuals are separated in several groups which makes it a Nash for all. For some individual $i$ in the second largest group, moving to the largest group makes it achieve a smaller risk allocation which is better. If there are only two equal size groups, {\it i.e.}, "$n/2-n/2$", one individual $i$ in group 1 joining the other one gives "$(n/2-1)-(n/2+1)$" and $\mathbb E _{\mathbb Q^{1}_{\mathbf X}} [ Y^{i}_{\mathbf X} ]>\mathbb E _{\mathbb Q^{2}_{\mathbf X}} [ Y^{i}_{\mathbf X} ]  $. So "$(n/2-1)-(n/2+1)$" cannot be Nash neither. 

In conclusion, nontrivial grouping strategy cannot be Nash and only $|I_m|=n$ is Nash under the case that all standard deviation and utility parameters are the same, and the correlation coefficient $\rho\in [-1,1)$. When $\rho = 1$, risks for every individual is constant and grouping has no effect.

\subsection{Proof of Claim \ref{claim: const_block_rho}}\label{proof: claim 3.2}
First we look at the grouping "$\{1,2\}-\{3,4\}$" and find the equivalent condition to make it a Nash equilibrium. 
We compare risks for individual 1 under two cases, according to \eqref{eq: generalsolution},
\begin{align}
\quad &\text{under } "\{1,2\}-\{3,4\}":\quad \mathbb{E}_{{\mathbb Q}_{\mathbf X}^1}\,[Y_{\mathbf X}^1]=-\mu_1+\log(\frac{\beta}{-B})+\frac{1}{2}(\sigma^2+\rho\sigma^2)-\frac{1}{8}(2\sigma^2+2\rho\sigma^2);\nonumber\\
&\text{under } "\{2\}-\{1,3,4\}":
\quad \mathbb{E}_{{\mathbb Q}_{\mathbf X}^2}\,[Y_{\mathbf X}^1]=-\mu_1+\log(\frac{\beta}{-B})+\frac{1}{3}\sigma^2-\frac{1}{18}(3\sigma^2+2\rho\sigma^2).\nonumber
\end{align}
For individual 1, to make the risk allocation under "$\{1,2\}-\{3,4\}$" $\leq$ that under "$\{2\}-\{1,3,4\}$":
\begin{align}
&\frac{1}{3}\sigma^2-\frac{1}{18}(3\sigma^2+2\rho\sigma^2)\geq\frac{1}{2}(\sigma^2+\rho\sigma^2)-\frac{1}{8}(2\sigma^2+2\rho\sigma^2)\quad \Leftrightarrow\quad \rho\leq -\frac{3}{13}. \nonumber
\end{align}
For individuals 2,3,4, we repeat similar discussion and the condition is the same: $\rho\leq -\frac{3}{13}$.

%$\bullet$ 
So in conclusion, if $\rho\leq -\frac{3}{13}$, grouping "$\{1,2\}-\{3,4\}$" is a Nash equilibrium.

\noindent  In the grouping "$\{1,3\}-\{2,4\}$", we follow the same discussion about comparing risk allocations for all individuals. For example, for individual 1:
\begin{align}
\quad &\text{under } "\{1,3\}-\{2,4\}":\quad \mathbb{E}_{{\mathbb Q}_{\mathbf X}^1}\,[Y_{\mathbf X}^1]=-\mu_1+\log(\frac{\beta}{-B})+\frac{1}{2}\sigma^2-\frac{1}{8}2\sigma^2;\nonumber\\
&\text{under } "\{3\}-\{1,2,4\}":
\quad \mathbb{E}_{{\mathbb Q}_{\mathbf X}^2}\,[Y_{\mathbf X}^1]=-\mu_1+\log(\frac{\beta}{-B})+\frac{1}{3}(\sigma^2+\rho\sigma^2)-\frac{1}{18}(3\sigma^2+2\rho\sigma^2).\nonumber
\end{align}
For individual 1, to make the risk allocation under "$\{1,3\}-\{2,4\}$" $\leq $ that under "$\{3\}-\{1,2,4\}$":
\begin{align}
&\frac{1}{3}(\sigma^2+\rho\sigma^2)-\frac{1}{18}(3\sigma^2+2\rho\sigma^2)\geq\frac{1}{2}\sigma^2-\frac{1}{8}2\sigma^2\quad \Leftrightarrow\quad \rho\geq \frac{3}{8}. \nonumber
\end{align}
For individuals 2,3,4, we repeat similar discussion and the condition is the same: $\rho\geq \frac{3}{8}$.

%$\bullet$ 
As a result, grouping "$\{1,3\}-\{2,4\}$" is a Nash equilibrium if $\rho\geq \frac{3}{8}$. It is also the equivalent condition for grouping "$\{1,4\}-\{2,3\}$" to be a Nash equilibrium

\subsection{Proof of Claim \ref{clami:const_block_rhofor5}}\label{proof: claim 3.3}
%$\bullet$ 
Similar to the proof of claim \ref{claim: const_block_rho}, we look at the grouping "$\{1,2\}-\{3,4,5\}$" and find the equivalent condition to make it a Nash equilibrium. Under the assumptions, we first compare risks for individual 1 under two cases, according to \eqref{eq: generalsolution},
\begin{align}
\quad &\text{under } "\{1,2\}-\{3,4,5\}":\quad \mathbb{E}_{{\mathbb Q}_{\mathbf X}^1}\,[Y_{\mathbf X}^1]=-\mu_1+\log(\frac{\beta}{-B})+\frac{1}{2}(\sigma^2+\rho\sigma^2)-\frac{1}{8}(2\sigma^2+2\rho\sigma^2);\nonumber\\
&\text{under } "\{2\}-\{1,3,4,5\}":
\quad \mathbb{E}_{{\mathbb Q}_{\mathbf X}^2}\,[Y_{\mathbf X}^1]=-\mu_1+\log(\frac{\beta}{-B})+\frac{1}{4}\sigma^2-\frac{1}{32}(4\sigma^2+6\rho\sigma^2).\nonumber
\end{align}
For individual 1, to make the risk allocation under "$\{1,2\}-\{3,4,5\}$" $\leq$ that under "$\{2\}-\{1,3,4,5\}$":
\begin{align}
&\frac{1}{4}\sigma^2-\frac{1}{32}(4\sigma^2+6\rho\sigma^2)\geq\frac{1}{2}(\sigma^2+\rho\sigma^2)-\frac{1}{8}(2\sigma^2+2\rho\sigma^2)\quad \Leftrightarrow\quad \rho\leq -\frac{2}{7}. \nonumber
\end{align}
For individual 2, this is the same condition to have a smaller risk allocation. We then compare risks for individual 3 under two cases:
\begin{align}
\quad &\text{under } "\{1,2\}-\{3,4,5\}":\quad \mathbb{E}_{{\mathbb Q}_{\mathbf X}^2}\,[Y_{\mathbf X}^3]=-\mu_3+\log(\frac{\beta}{-B})+\frac{1}{3}(\sigma^2+2\rho\sigma^2)-\frac{1}{18}(3\sigma^2+6\rho\sigma^2);\nonumber\\
&\text{under } "\{1,2,3\}-\{4,5\}":
\quad \mathbb{E}_{{\mathbb Q}_{\mathbf X}^1}\,[Y_{\mathbf X}^3]=-\mu_3+\log(\frac{\beta}{-B})+\frac{1}{3}\sigma^2-\frac{1}{18}(3\sigma^2+2\rho\sigma^2).\nonumber
\end{align}
To make the risk allocation for individual 3 under "$\{1,2\}-\{3,4,5\}$" $\leq$
that under "$\{1,2,3\}-\{4,5\}$", we get the equivalent condition $\rho\leq 0$. This is true for individuals 4,5 as well.

In conclusion, when $\rho\leq -2/7$, "$\{1,2\}-\{3,4,5\}$" is a Nash equilibrium for all individuals.

%\noindent $\bullet$ 
In the grouping "$\{1,3\}-\{2,4,5\}$", we follow the same discussion about comparing risk allocations for all individuals. Individual 1 will stay with individual 3 instead of joining the other group if $\rho = 1$. Individual 2 will stay if $\rho \geq 0$. Individual 3 will stay with individual 1 if $\rho\geq \frac{2}{5}$ and individuals 4,5 won’t move for any $\rho$. As a result, grouping "$\{1,3\}-\{2,4,5\}$" is not a Nash equilibrium for any $\rho\in (-1,1)$.%grouping "$\{1,3\}-\{2,4,5\}$" is a Nash equilibrium if $\rho = 1$.

%\noindent $\bullet$ 
For the grouping "$\{1,2,3\}-\{4,5\}$", after discussion about the condition for every individual not moving, we find a contradiction which proves for any value of $\rho$, this grouping cannot be a Nash.

\subsection{Proof of Theorem \ref{thm:JointResult}}\label{appendix:JointResult}

We can rewrite the systemic risk measure $\rho$ as: 
\begin{align}
{\bm \rho}(\mathbf{X})=\inf\Big\{\sum\limits_{m=1}^h d_m\,:\,
&\, d=(d_1,\ldots , d_h)\in \mathbb{R}^h, \mathbf{Y}=(Y^{i,j},i\in I_j,\,1\leq j\leq h)\in L^0(\mathbb{R}^{\sum_{m=1}^h |I_m|* h}),\nonumber
\\
&\mathbb{E}\left[\sum\limits_{m=1}^h\,\sum\limits_{k\in I_m}-\frac{1}{\alpha_k}\exp[-\alpha_k(w_{k,m}X^k+Y^{k,m})]\right]\,=B\,,\nonumber
\\
&
\sum\limits_{i\in I_j} Y^{i,j}=d_j, \,\text{for } j=1,2,\ldots, h
\Big\},
\end{align}
where $(I_1,\ldots, I_h)$ are the group index sets given. {For any group $m$, we define the smallest element as $m^0\in I_m$ and fix another element $m^*\in I_m$ and $m^*\neq m^0$.} (When there is only one element in the group, $m^*=m^0$ and the discussion will be similar.) In the following proof we assume $|I_m|\geq 2$ for all $m=1,\ldots, h$. 
We also assume $Y$ is defined on a finite space: $Y^{k,m}\in \{y^{k,m}_1,\ldots, y^{k,m}_M\}$ for all $k,m$ and $y^{k,m}_j\in\mathbb{R}$. Then we have: $y^{m^*,m}_{j}=d_m-\sum\limits_{k\in I_m,k\neq m^*} y^{k,m}_j$ for $m=1,\ldots, h$.

We show the results with the Lagrange's method with the function defined by 

\begin{align}
\mathcal{L}\,(\mathbf{d},\mathbf{Y},\lambda)=\, 
\sum_{m=1}^{h} d_m \, +\lambda\Bigg\{&\sum\limits_{j=1}^M p_j\,\sum\limits_{m=1}^h\,\left[\sum\limits_{k\in I_m,k\neq m^*}\,\frac{-1}{\alpha_k}\exp\left(-\alpha_k(w_{k,m}X^k(\omega_j)+y^{k,m}_j)\right)\right.\nonumber
\\
&
\left. +\frac{-1}{\alpha_{m^*}}\exp\left(-\alpha_{m^*}\bigg(w_{{m^*},m}X^{m^*}(\omega_j)+d_m-\sum\limits_{k\in I_m,k\neq m^*}y^{k,m}_j\bigg)\right)\right]
-B\,\Bigg\}.
\end{align}
We compute partial derivatives of $\mathcal{L}$ with respect to all variables and get all equivalent condition in the following.
\begin{enumerate}
\item Given $m$, for $j=1,\ldots, M$, $k\in I_m$ and $k\neq m^*$: $\frac{\partial \mathcal{L}}{\partial y^{k,m}_j}=0$\\
if and only if for every fixed $j$,
\begin{align}
&0=\frac{\partial \mathcal{L}}{\partial y^{k,m}_j}
=\lambda p_j\left[ e^{-\alpha_k\left(w_{k,m}X^k(\omega_j)+y^{k,m}_j\right)} -e^{-\alpha_{m^*}\left(w_{m^*,m}X^{m^*}(\omega_j)+d_m-\sum\limits_{k\in I_m,k\neq m^*}y^{k,m}_j\right)} \right]\nonumber
\\
&\Longleftrightarrow\quad  \exp \left(-\alpha_k\left(w_{k,m}X^k(\omega_j)+y^{k,m}_j\right)\right) =\exp \left(-\alpha_{m^*}\left(w_{m^*,m}X^{m^*}(\omega_j)+d_m-\sum\limits_{k\in I_m,k\neq m^*}y^{k,m}_j\right)\right) \label{eqn: thmproof-tool1}
\\
&\text{It implies}\quad
 \exp\left(-\alpha_k\left(w_{k,m}X^k(\omega_j)+y^{k,m}_j\right)\right) =\exp \left(-\alpha_{m^0}\left(w_{m^0,m}X^{m^0}(\omega_j)+y^{m^0,m}_j\right)\right) .\nonumber
\end{align}
Then for $k\neq m^*$,
\begin{equation}\label{eqn: thmproof-tool2}
y^{k,m}_j=\frac{1}{\alpha_k}\left(\alpha_{m^0}w_{{m^0},m}X^{m^0}(\omega_j)-\alpha_{k}w_{{k},m}X^{k}(\omega_j)+\alpha_{m^0}y^{{m^0},m}_j\right),
\end{equation}
and by \eqref{eqn: thmproof-tool1} and \eqref{eqn: thmproof-tool2}, we obtain that (details are shown below):
\begin{align}
y^{m^0,m}_j&=\frac{1}{\alpha_{m^0}\beta_m}\left(\sum\limits_{k\in I_m}w_{{k},m}X^{k}(\omega_j) \right)-w_{{m^0},m}X^{m^0}(\omega_j)+\frac{1}{\alpha_{m^0}\beta_m}d_m\nonumber
\\
&=\frac{1}{\alpha_{m^0}\beta_m}S_m(\omega_j)-w_{{m^0},m}X^{m^0}(\omega_j)+\frac{1}{\alpha_{m^0}\beta_m}d_m,
\label{eqn: thmproof-tool3}
\end{align}
where $\beta_m=\sum\limits_{k\in I_m}\frac{1}{\alpha_k}$, $S_m=\sum\limits_{k\in I_m}w_{k,m}X^k$.

\begin{proof}
By \eqref{eqn: thmproof-tool2}, 
\begin{align*}
\sum\limits_{k\in I_m,k\neq m^*}y^{k,m}_j&=\sum\limits_{k\in I_m,k\neq m^*}\frac{1}{\alpha_k}\alpha_{m^0}\left(w_{{m^0},m}X^{m^0}(\omega_j)+y^{{m^0},m}_j\right)-\sum\limits_{k\in I_m,k\neq m^*}w_{{k},m}X^{k}(\omega_j)
\\
&=(\beta_m-\frac{1}{\alpha_{m^*}})\alpha_{m^0}\left(w_{{m^0},m}X^{m^0}(\omega_j)+y^{{m^0},m}_j\right)-\sum\limits_{k\in I_m,k\neq m^*}w_{{k},m}X^{k}(\omega_j),
\end{align*}
\begin{equation*}
w_{m^*,m}X^{m^*}(\omega_j)+d_m-\sum\limits_{k\in I_m,k\neq m^*}y^{k,m}_j=d_m-(\beta_m-\frac{1}{\alpha_{m^*}})\alpha_{m^0}\left(w_{{m^0},m}X^{m^0}(\omega_j)+y^{{m^0},m}_j\right)+S_m(\omega_j).
\end{equation*}
Thus using \eqref{eqn: thmproof-tool1},
\begin{align*}
&-\alpha_{m^*}\left(w_{m^*,m}X^{m^*}(\omega_j)+d_m-\sum\limits_{k\in I_m,k\neq m^*}y^{k,m}_j\right)
\\
&\quad\quad\quad\quad\quad\quad =-\alpha_{m^*}d_m+(\beta_m\alpha_{m^*}-1)\alpha_{m^0}\left(w_{{m^0},m}X^{m^0}(\omega_j)+y^{{m^0},m}_j\right)-\alpha_{m^*}S_m(\omega_j)
\\
&\text{by }\eqref{eqn: thmproof-tool1}\quad\quad =-\alpha_{m^0}\left(w_{m^0,m}X^{m^0}(\omega_j)+y^{m^0,m}_j\right)
\\
&\Longrightarrow\quad
\alpha_{m^*}\beta_m\alpha_{m^0}\left(w_{{m^0},m}X^{m^0}(\omega_j)+y^{{m^0},m}_j\right)=\alpha_{m^*}d_m+\alpha_{m^*}S_m(\omega_j)
\\
&\Longrightarrow\quad
\eqref{eqn: thmproof-tool3}.
\end{align*}
\end{proof}

\item For $m=1,\ldots, h$, the derivative with respect to $d_m$:
$\frac{\partial \mathcal{L}}{\partial d_m}=0$ if and only if 
\begin{equation}\label{eqn: thmproof-tool4}
0=\frac{\partial \mathcal{L}}{\partial d_m}=1+\lambda\sum\limits_{j=1}^Mp_j\, \exp\left(-\alpha_{m^*}\bigg(w_{{m^*},m}X^{m^*}(\omega_j)+d_m-\sum\limits_{k\in I_m,k\neq m^*}y^{k,m}_j\bigg)\right)
\end{equation}
By \eqref{eqn: thmproof-tool1},
\begin{align}
0&=1+\lambda\sum\limits_{j=1}^M p_j\, \exp\left(-\alpha_{k}\big(w_{{k},m}X^{k}(\omega_j)+y^{k,m}_j\big)\right) \quad\text{for all }k\in I_m,\,\text{and } k\neq m^*,\nonumber
\\
\text{{\it i.e.},}\quad&\sum\limits_{j=1}^M p_j\, \exp\left(-\alpha_{k}\big(w_{{k},m}X^{k}(\omega_j)+y^{k,m}_j\big)\right)=\dfrac{-1}{\lambda}.\label{eqn: thmproof-tool5}
\end{align}

\item The derivative with respect to $\lambda$:
$\frac{\partial \mathcal{L}}{\partial \lambda}=0$ if and only if 
\begin{align*}
B&=\mathbb{E}\left[\sum\limits_{m=1}^h\,\sum\limits_{k\in I_m}\frac{-1}{\alpha_k}\exp[-\alpha_k(w_{k,m}X^k+Y^{k,m})]\right]
\\
&=\sum\limits_{j=1}^M p_j\cdot\sum\limits_{m=1}^h\,\left[\sum\limits_{k\in I_m,k\neq m^*}\,\frac{-1}{\alpha_k}\exp\left(-\alpha_k(w_{k,m}X^k(\omega_j)+y^{k,m}_j)\right)\right.\nonumber
\\
&
\quad\quad\quad\quad\quad\quad\quad\left. +\frac{-1}{\alpha_{m^*}}\exp\left(-\alpha_{m^*}\bigg(w_{{m^*},m}X^{m^*}(\omega_j)+d_m-\sum\limits_{k\in I_m,k\neq m^*}y^{k,m}_j\bigg)\right)\right]
\\
&=\sum\limits_{j=1}^M \sum\limits_{m=1}^h\,\left[\sum\limits_{k\in I_m,k\neq m^*}\,\frac{-1}{\alpha_k}\cdot p_j\,\exp\left(-\alpha_k(w_{k,m}X^k(\omega_j)+y^{k,m}_j)\right)\right.\nonumber
\\
&
\quad\quad\quad\quad\quad\quad\quad\left. +\frac{-1}{\alpha_{m^*}}\cdot p_j\,\exp\left(-\alpha_{m^*}\bigg(w_{{m^*},m}X^{m^*}(\omega_j)+d_m-\sum\limits_{k\in I_m,k\neq m^*}y^{k,m}_j\bigg)\right)\right]
\\
&\text{by }\eqref{eqn: thmproof-tool4},\,\eqref{eqn: thmproof-tool5}\quad =\sum\limits_{m=1}^h\left[ \sum\limits_{k\in I_m,k\neq m^*}\,\frac{-1}{\alpha_k}\,\frac{-1}{\lambda}+\frac{-1}{-\alpha_{m^*}}\,\frac{-1}{\lambda}\right]
\\
&=\frac{1}{\lambda}\sum\limits_{m=1}^h \sum\limits_{k\in I_m}\,\frac{1}{\alpha_k}=\frac{1}{\lambda}\,\beta.
\end{align*}
Hence, 
\begin{equation}\label{eqn:eqn: thmproof-tool6}
\lambda=\dfrac{\beta}{B}.
\end{equation}
\end{enumerate}

\noindent $\bullet$ We then compute $d_m$ by inserting \eqref{eqn:eqn: thmproof-tool6} and \eqref{eqn: thmproof-tool3} in \eqref{eqn: thmproof-tool5} for $k=m^0\in I_m$, and $m^0\neq m^*$:
\begin{align*}
-\dfrac{B}{\beta}
&=\sum\limits_{j=1}^M p_j\, \exp\left(-\alpha_{m^0}\big(w_{{m^0},m}X^{m^0}(\omega_j)+y^{{m^0},m}_j\big)\right)
\\
&=\sum\limits_{j=1}^M p_j\, \exp\left(-\frac{1}{\beta_m}\big(S_m(\omega_j)+d_m\big)\right)
\\
&=e^{-\frac{d_m}{\beta_m}}\sum\limits_{j=1}^M p_j\, \exp\left(-\frac{1}{\beta_m}S_m(\omega_j)\right)
=e^{-\frac{d_m}{\beta_m}}\,\mathbb{E}\left(e^{-\frac{S_m}{\beta_m}}\right).
\end{align*}
So 
\begin{equation}
d_m=\beta_m\log\left(\dfrac{\beta}{-B}\,\mathbb{E}\left(e^{-\frac{S_m}{d_m}}\right)  \right).
\end{equation}
Then back to \eqref{eqn: thmproof-tool2} and \eqref{eqn: thmproof-tool3}, for $k,m^0, m^*\in I_m$ and $k\neq m^0\neq m^*$:
\begin{align*}
y^{m^0,m}_j
&=\frac{1}{\alpha_{m^0}\beta_m}S_m(\omega_j)-w_{{m^0},m}X^{m^0}(\omega_j)+\frac{1}{\alpha_{m^0}\beta_m}d_m;
\\
y^{k,m}_j&=\frac{\alpha_{m^0}}{\alpha_k}w_{{m^0},m}X^{m^0}(\omega_j)-w_{{k},m}X^{k}(\omega_j)
\\
&\quad\quad +\frac{1}{\alpha_k}\left[\frac{1}{\beta_m}S_m(\omega_j)-\alpha_{m^0} w_{{m^0},m}X^{m^0}(\omega_j)+\frac{1}{\beta_m}d_m \right]
\\
&=-w_{{k},m}X^{k}(\omega_j)+\frac{1}{\alpha_k\beta_m}\left(S_m(\omega_j)+d_m \right);
\\
y^{m^*,m}_j&=d_m-\sum\limits_{k\in I_m,k\neq m^*}y^{k,m}_j
\\
&=d_m+\sum\limits_{k\in I_m,k\neq m^*} w_{{k},m}X^{k}(\omega_j)-\sum\limits_{k\in I_m,k\neq m^*} \frac{1}{\alpha_k\beta_m}\left(S_m(\omega_j)+d_m \right)
\\
&=d_m+\left(S_m(\omega_j)-w_{{m^*},m}X^{m^*}(\omega_j)\right)-\left(\beta_m-\frac{1}{\alpha_{m^*}}\right)\frac{1}{\beta_m}\left(S_m(\omega_j)+d_m\right)
\\
&=-w_{{m^*},m}X^{m^*}(\omega_j)+\frac{1}{\alpha_{m^*}\beta_m}\left(S_m(\omega_j)+d_m \right).
\end{align*}
So for given $m$, for all $k\in I_m$, we have:
\[
Y^{k,m}=-w_{{k},m}X^{k}+\frac{1}{\alpha_{k}\beta_m}\left(S_m+d_m \right)
\]
where $
d_m=\beta_m\log\left(\dfrac{\beta}{-B}\,\mathbb{E}\left(e^{-\frac{S_m}{d_m}}\right)  \right).
$

In addition, the systemic risk measure is given by:
\begin{align*}
{\bm \rho} (\mathbf{X})&=\sum\limits_{m=1}^h d_m=\sum\limits_{m=1}^h\beta_m\log\left(\dfrac{\beta}{-B}\,\mathbb{E}\left(e^{-\frac{S_m}{d_m}}\right)  \right)
\\
&=\beta\log\left(\dfrac{\beta}{-B}\right)+\sum\limits_{m=1}^h\beta_m\log\left(\mathbb{E}\left(e^{-\frac{S_m}{d_m}}\right)  \right).
\end{align*}

\subsection{Proof of Proposition \ref{proposition: sensitivity analysis-main}}
\label{Appendix: sensitivity analysis-main}
First, we prove the marginal risk allocation for individual $i$ in group $j$. \\
By theorem \ref{thm:JointResult}, for $i\in I_j$
\begin{align*}
\mathbb{E}_{\mathbb{Q}^j_{\mathbf{X}+\varepsilon\mathbf{Z}}}\left[Y^{i,j}_{\mathbf{X}+\varepsilon\mathbf{Z}}\right]
&=\mathbb{E}\,\left[Y^{i,j}_{\mathbf{X}+\varepsilon\mathbf{Z}}\cdot \dfrac{\mathrm{d}\mathbb{Q}^j_{\mathbf{X}+\varepsilon\mathbf{Z}}}{\mathrm{d}\mathbb{P}}\right]
\\
&=\mathbb{E}\,\left[\left(-w_{i,j}(X^{i}+\varepsilon Z^i)+\frac{1}{\alpha_i\beta_j}\left(S_j+\varepsilon S_j^\mathbf{Z}\right)+\frac{1}{\alpha_i\beta_j}d_j^{\mathbf{X}+\varepsilon \mathbf{Z}}\right)\cdot\dfrac{\exp\left({-\frac{S_j^{\mathbf{X}+\varepsilon \mathbf{Z}}}{\beta_j}}\right)}{\mathbb{E}\left[\exp\left({-\frac{S_j^{\mathbf{X}+\varepsilon \mathbf{Z}}}{\beta_j}}\right]\right)}\right]
\\
&=-w_{i,j}\,\dfrac{\mathbb{E}\left((X^i+\varepsilon Z^i)\exp\left({-\frac{S_j^{\mathbf{X}+\varepsilon \mathbf{Z}}}{\beta_j}}\right)\right)}{\mathbb{E}\left(\exp\left({-\frac{S_j^{\mathbf{X}+\varepsilon \mathbf{Z}}}{\beta_j}}\right)\right)}
+\frac{1}{\alpha_i\beta_j}\,\dfrac{\mathbb{E}\left((S_j+\varepsilon S_j^\mathbf{Z})\exp\left({-\frac{S_j^{\mathbf{X}+\varepsilon \mathbf{Z}}}{\beta_j}}\right)\right)}{\mathbb{E}\left(\exp\left({-\frac{S_j^{\mathbf{X}+\varepsilon \mathbf{Z}}}{\beta_j}}\right)\right)}
\\
&+\frac{1}{\alpha_i}\left(\log (-\beta/B)+\log \left(\mathbb{E}\left(e^{-\frac{S_j^{\mathbf{X}+\varepsilon\mathbf{Z}}}{\beta_j}}\right)\right)\right)
\\
&=\RomanNumeralCaps{1}+\RomanNumeralCaps{2}+\RomanNumeralCaps{3}.
\end{align*}
Since $S_j^{\mathbf{X}+\varepsilon\mathbf{Z}}=S_j+\varepsilon S_j^\mathbf{Z}$, and assuming everything is well-defined so that we can use Leibniz integral rule, then we have the following results
\begin{align*}
&\dfrac{\partial \, }{\partial \varepsilon}\mathbb{E}\left(e^{-\frac{S_j^{\mathbf{X}+\varepsilon\mathbf{Z}}}{\beta_j}}\right)=\mathbb{E}\left(-\frac{S_j^\mathbf{Z}}{\beta_j}e^{-\frac{S_j^{\mathbf{X}+\varepsilon\mathbf{Z}}}{\beta_j}}\right);
\\
&\dfrac{\partial}{\partial \varepsilon}\mathbb{E}\left((S_j+\varepsilon S_j^\mathbf{Z})\,e^{-\frac{S_j^{\mathbf{X}+\varepsilon\mathbf{Z}}}{\beta_j}}\right)=\mathbb{E}\left[S_j^\mathbf{Z}\,e^{-\frac{S_j^{\mathbf{X}+\varepsilon\mathbf{Z}}}{\beta_j}}-(S_j+\varepsilon S_j^\mathbf{Z})\frac{S_j^\mathbf{Z}}{\beta_j} e^{-\frac{S_j^{\mathbf{X}+\varepsilon\mathbf{Z}}}{\beta_j}}\right].
\end{align*}

\noindent Compute the derivatives,
\begin{align*}
\dfrac{\partial \, \RomanNumeralCaps{1}}{\partial \varepsilon}
&=-w_{i,j}\dfrac{\partial}{\partial \varepsilon}\left(\dfrac{\mathbb{E}\left((X^i+\varepsilon Z^i)\exp\left({-\frac{S_j^{\mathbf{X}+\varepsilon \mathbf{Z}}}{\beta_j}}\right)\right)}{\mathbb{E}\left(\exp\left({-\frac{S_j^{\mathbf{X}+\varepsilon \mathbf{Z}}}{\beta_j}}\right)\right)}\right)
\\
&=-w_{i,j}\,\dfrac{1}{\left(\mathbb{E} e^{-\frac{S_j^{\mathbf{X}+\varepsilon\mathbf{Z}}}{\beta_j}}\right)^2}
\bigg[\mathbb{E}\left(Z^i e^{-\frac{S_j^{\mathbf{X}+\varepsilon\mathbf{Z}}}{\beta_j}}\right)\,\mathbb{E}\left(e^{-\frac{S_j^{\mathbf{X}+\varepsilon\mathbf{Z}}}{\beta_j}}\right)\\
&
+\mathbb{E}\left((X^i+\varepsilon Z^i)(-\frac{S_j^\mathbf{Z}}{\beta_j})\,e^{-\frac{S_j^{\mathbf{X}+\varepsilon\mathbf{Z}}}{\beta_j}}\right) \mathbb{E}\left(e^{-\frac{S_j^{\mathbf{X}+\varepsilon\mathbf{Z}}}{\beta_j}}\right)
-\mathbb{E}\left((X^i+\varepsilon Z^i)\,e^{-\frac{S_j^{\mathbf{X}+\varepsilon\mathbf{Z}}}{\beta_j}}\right) \mathbb{E}\left(-\frac{S_j^\mathbf{Z}}{\beta_j}\,e^{-\frac{S_j^{\mathbf{X}+\varepsilon\mathbf{Z}}}{\beta_j}}\right)\bigg]
\\
&=-w_{i,j}\left[\mathbb{E}_{\mathbb{Q}^j_{\mathbf{X}+\varepsilon\mathbf{Z}}}[Z^i]-\frac{1}{\beta_j}\mathbb{E}_{\mathbb{Q}^j_{\mathbf{X}+\varepsilon\mathbf{Z}}}\left[(X^i+\varepsilon Z^i)S_j^\mathbf{Z}\right]+\frac{1}{\beta_j}\mathbb{E}_{\mathbb{Q}^j_{\mathbf{X}+\varepsilon\mathbf{Z}}}\left[X^i+\varepsilon Z^i\right]\mathbb{E}_{\mathbb{Q}^j_{\mathbf{X}+\varepsilon\mathbf{Z}}}[S_j^\mathbf{Z}]\right]
\\
&=\mathbb{E}_{\mathbb{Q}^j_{\mathbf{X}+\varepsilon\mathbf{Z}}}\left[-w_{i,j}Z^i\right]+\dfrac{w_{i,j}}{\beta_j}\Cov_{\mathbb{Q}^j_{\mathbf{X}+\varepsilon\mathbf{Z}}}\left(X^i+\varepsilon Z^i,S_j^\mathbf{Z}\right);
\end{align*}
\begin{align*}
\dfrac{\partial \, \RomanNumeralCaps{2}}{\partial \varepsilon}
&=\dfrac{1}{\alpha_i\beta_j}\,\dfrac{\partial}{\partial \varepsilon}\left(\dfrac{\mathbb{E}\left((S_j+\varepsilon S_j^\mathbf{Z})\exp\left({-\frac{S_j^{\mathbf{X}+\varepsilon \mathbf{Z}}}{\beta_j}}\right)\right)}{\mathbb{E}\left(\exp\left({-\frac{S_j^{\mathbf{X}+\varepsilon \mathbf{Z}}}{\beta_j}}\right)\right)}\right)
\\
&=\dfrac{1}{\alpha_i\beta_j}\dfrac{1}{\left(\mathbb{E}e^{-\frac{S_j^{\mathbf{X}+\varepsilon\mathbf{Z}}}{\beta_j}}\right)^2}\,
\bigg[
\left(\mathbb{E}\left(S_j^\mathbf{Z}\,e^{-\frac{S_j^{\mathbf{X}+\varepsilon\mathbf{Z}}}{\beta_j}}\right)+\mathbb{E}\left( (S_j+\varepsilon S_j^\mathbf{Z})(-\frac{S_j^\mathbf{Z}}{\beta_j})e^{-\frac{S_j^{\mathbf{X}+\varepsilon\mathbf{Z}}}{\beta_j}}\right)\right) \mathbb{E}\left(e^{-\frac{S_j^{\mathbf{X}+\varepsilon\mathbf{Z}}}{\beta_j}}\right)\\
&-\mathbb{E}\left((S_j+\varepsilon S_j^\mathbf{Z})\,e^{-\frac{S_j^{\mathbf{X}+\varepsilon\mathbf{Z}}}{\beta_j}}\right) \mathbb{E}\left(-\frac{S_j^\mathbf{Z}}{\beta_j} e^{-\frac{S_j^{\mathbf{X}+\varepsilon\mathbf{Z}}}{\beta_j}}\right)
\bigg]
\\
&=\dfrac{1}{\alpha_i\beta_j}\mathbb{E}_{\mathbb{Q}^j_{\mathbf{X}+\varepsilon\mathbf{Z}}}\left[S_j^\mathbf{Z}\right]-\dfrac{1}{\alpha_i\beta_j^2} \Cov_{\mathbb{Q}^j_{\mathbf{X}+\varepsilon\mathbf{Z}}}\left(S_j+\varepsilon S_j^\mathbf{Z}, S_j^\mathbf{Z}\right);
\\
\dfrac{\partial \, \RomanNumeralCaps{3}}{\partial \varepsilon}
&=\dfrac{1}{\alpha_i}\dfrac{1}{\mathbb{E}\left(e^{-\frac{S_j^{\mathbf{X}+\varepsilon\mathbf{Z}}}{\beta_j}}\right)}\cdot \mathbb{E}\left(-\dfrac{S_j^\mathbf{Z}}{\beta_j}\, e^{-\frac{S_j^{\mathbf{X}+\varepsilon\mathbf{Z}}}{\beta_j}}\right)=-\dfrac{1}{\alpha_i\beta_j}\mathbb{E}_{\mathbb{Q}^j_{\mathbf{X}+\varepsilon\mathbf{Z}}}\left[S_j^\mathbf{Z}\right].
\end{align*}

\noindent As a result
\begin{align*}
\dfrac{\partial \,\mathbb{E}_{\mathbb{Q}^j_{\mathbf{X}+\varepsilon\mathbf{Z}}}\left[Y^{i,j}_{\mathbf{X}+\varepsilon\mathbf{Z}}\right]}{\partial \varepsilon} &=\dfrac{\partial}{\partial \varepsilon}\left( \RomanNumeralCaps{1}+ \RomanNumeralCaps{2}+ \RomanNumeralCaps{3}\right)
\\
&=-w_{i,j}\mathbb{E}_{\mathbb{Q}^j_{\mathbf{X}+\varepsilon\mathbf{Z}}}\left[Z^i\right]+\frac{w_{i,j}}{\beta_j}\Cov_{\mathbb{Q}^j_{\mathbf{X}+\varepsilon\mathbf{Z}}}\left(X^i+\varepsilon Z^i,S_j^\mathbf{Z}\right)+
\dfrac{1}{\alpha_i\beta_j}\mathbb{E}_{\mathbb{Q}^j_{\mathbf{X}+\varepsilon\mathbf{Z}}}\left[S_j^\mathbf{Z}\right]\\
&-\dfrac{1}{\alpha_i\beta_j^2} \Cov_{\mathbb{Q}^j_{\mathbf{X}+\varepsilon\mathbf{Z}}}\left(S_j+\varepsilon S_j^\mathbf{Z}, S_j^\mathbf{Z}\right)-\dfrac{1}{\alpha_i\beta_j}\mathbb{E}_{\mathbb{Q}^j_{\mathbf{X}+\varepsilon\mathbf{Z}}}\left[S_j^\mathbf{Z}\right]
\\
&=\mathbb{E}_{\mathbb{Q}^j_{\mathbf{X}+\varepsilon\mathbf{Z}}}\left[-w_{i,j}Z^i\right]+\frac{w_{i,j}}{\beta_j}\Cov_{\mathbb{Q}^j_{\mathbf{X}+\varepsilon\mathbf{Z}}}\left(X^i+\varepsilon Z^i,S_j^\mathbf{Z}\right)-\dfrac{1}{\alpha_i\beta_j^2} \Cov_{\mathbb{Q}^j_{\mathbf{X}+\varepsilon\mathbf{Z}}}\left(S_j+\varepsilon S_j^\mathbf{Z}, S_j^\mathbf{Z}\right)
\\
&=\mathbb{E}_{\mathbb{Q}^j_{\mathbf{X}+\varepsilon\mathbf{Z}}}\left[-w_{i,j}Z^i\right]-\frac{1}{\beta_j}\Cov_{\mathbb{Q}^j_{\mathbf{X}+\varepsilon\mathbf{Z}}}\left(Y^{i,j}_{\mathbf{X}+\varepsilon\mathbf{Z}}, S_j^\mathbf{Z}\right).
\end{align*}
Then when $\varepsilon=0$, we have the formula for marginal risk allocation for individual $i$ in group $j$ in proposition \ref{proposition: sensitivity analysis-main}.

The marginal risk contribution of group $j$ is trivial and for the conclusion  on local causal responsibility, we have:
\begin{equation*}
    \mathbb{E}_{\mathbb{Q}^j_{\mathbf{X}}}\left[Y^{i,j}_{\mathbf{X}+\varepsilon\mathbf{Z}}\right] =\mathbb{E}_{\mathbb{Q}^j_{\mathbf{X}}}\left[ \left(-w_{i,j}(X^{i}+\varepsilon Z^i)+\frac{1}{\alpha_i\beta_j}\left(S_j+\varepsilon S_j^\mathbf{Z}\right)+\frac{1}{\alpha_i\beta_j}d_j^{\mathbf{X}+\varepsilon \mathbf{Z}}\right)\right].
\end{equation*}
Then according to the previous proof,
\begin{align*}
     \frac{\partial}{\partial \varepsilon}\mathbb{E}_{\mathbb{Q}^j_{\mathbf{X}}}\left[Y^{i,j}_{\mathbf{X}+\varepsilon\mathbf{Z}}\right] 
     &= \mathbb{E}_{\mathbb{Q}^j_{\mathbf{X}}}\left(-w_{i,j}Z^i+\frac{1}{\alpha_i\beta_j}S_j^\mathbf{Z}\right)-\frac{1}{\alpha_i\beta_j}\mathbb{E}_{\mathbb{Q}^j_{\mathbf{X}+\varepsilon\mathbf{Z}}}(S_j^\mathbf{Z}),
\end{align*}
and thus,
\[
\frac{\partial}{\partial \varepsilon}\mathbb{E}_{\mathbb{Q}^j_{\mathbf{X}}}\left[Y^{i,j}_{\mathbf{X}+\varepsilon\mathbf{Z}}\right] \bigg\lvert _{\varepsilon = 0}
     = \mathbb{E}_{\mathbb{Q}^j_{\mathbf{X}}}\left[-w_{i,j}Z^i\right].
\]

\subsection{Proof of Proposition \ref{proposition: sensitivity analysis}} 
\label{Appendix: sensitivity analysis}

By theorem \ref{thm:JointResult}, for $i\in I_j$
\begin{align*}
\mathbb{E}_{\mathbb{Q}_\mathbf{X}^j}\,\left[Y^{i,j}\right]
&=\mathbb{E}\,\left[Y^{i,j}\cdot \dfrac{\mathrm{d}\mathbb{Q}_\mathbf{X}^j}{\mathrm{d}\mathbb{P}}\right]
=\mathbb{E}\,\left[\left(-w_{i,j}X^{i}+\frac{1}{\alpha_i\beta_j}\left(S_j+d_j\right)\right)\cdot\dfrac{e^{-\frac{S_j}{\beta_j}}}{\mathbb{E}\left[e^{-\frac{S_j}{\beta_j}}\right]}\right]
\\
&=-w_{i,j}\,\dfrac{\mathbb{E}\left(X^ie^{-\frac{S_j}{\beta_j}}\right)}{\mathbb{E}\left(e^{-\frac{S_j}{\beta_j}}\right)}
+\frac{1}{\alpha_i\beta_j}\,\dfrac{\mathbb{E}\left(S_je^{-\frac{S_j}{\beta_j}}\right)}{\mathbb{E}\left(e^{-\frac{S_j}{\beta_j}}\right)}
+\frac{1}{\alpha_i}\left(\log (-\beta/B)+\log \left(\mathbb{E}\left(e^{-\frac{S_j}{\beta_j}}\right)\right)\right)
\\
&=\RomanNumeralCaps{1}+\RomanNumeralCaps{2}+\RomanNumeralCaps{3}.
\end{align*}
Assuming everything is well-defined so that we can use Leibniz integral rule, then we have the following results
\begin{align*}
\dfrac{\partial \, }{\partial w_{i,j}}\mathbb{E}\left(e^{-\frac{S_j}{\beta_j}}\right)&=\mathbb{E}\left(-\dfrac{X^i}{\beta_j}e^{-\frac{S_j}{\beta_j}}\right)\quad;\quad
\dfrac{\partial}{\partial w_{i,j}}\mathbb{E}\left(S_j\,e^{-\frac{S_j}{\beta_j}}\right)=\mathbb{E}\left[X^i\,e^{-\frac{S_j}{\beta_j}}-S_j\dfrac{X^i}{\beta_j} e^{-\frac{S_j}{\beta_j}}\right].
\end{align*}

\noindent Compute the derivatives,
\begin{align*}
\dfrac{\partial \, \RomanNumeralCaps{1}}{\partial w_{i,j}}
&=\dfrac{\partial}{\partial w_{i,j}}\left(-w_{i,j } \dfrac{\mathbb{E}\left(X^ie^{-\frac{S_j}{\beta_j}}\right)}{\mathbb{E}\left(e^{-\frac{S_j}{\beta_j}}\right)}\right)
\\
&=-\dfrac{\mathbb{E}\left(X^i\,e^{-\frac{S_j}{\beta_j}}\right)}{\mathbb{E}\left(e^{-\frac{S_j}{\beta_j}}\right)}-w_{i,j}
\dfrac{\mathbb{E}\left(-\frac{(X^i)^2}{\beta_j}\,e^{-\frac{S_j}{\beta_j}}\right)\cdot \mathbb{E}\left(e^{-\frac{S_j}{\beta_j}}\right)-\mathbb{E}\left(X^i\,e^{-\frac{S_j}{\beta_j}}\right)\cdot \mathbb{E}\left(-\frac{X^i}{\beta_j}\,e^{-\frac{S_j}{\beta_j}}\right)}{\left(\mathbb{E}\left(e^{-\frac{S_j}{\beta_j}}\right)\right)^2}
\\
&=-\dfrac{\mathbb{E}\left(X^i\,e^{-\frac{S_j}{\beta_j}}\right)}{\mathbb{E}\left(e^{-\frac{S_j}{\beta_j}}\right)}+\dfrac{w_{i,j}}{\beta_j}\dfrac{\mathbb{E}\left((X^i)^2\,e^{-\frac{S_j}{\beta_j}}\right)}{\mathbb{E}\left(e^{-\frac{S_j}{\beta_j}}\right)}-\dfrac{w_{i,j}}{\beta_j}\left(\dfrac{\mathbb{E}\left(X^i\,e^{-\frac{S_j}{\beta_j}}\right)}{\mathbb{E}\left(e^{-\frac{S_j}{\beta_j}}\right)}\right)^2
\\
&=-\mathbb{E}_{\mathbb{Q}_\mathbf{X}^j}\left[X^i\right]+\dfrac{w_{i,j}}{\beta_j}\Var_{\mathbb{Q}_\mathbf{X}^j}\left(X^i\right);
\end{align*}
\begin{align*}
\dfrac{\partial \, \RomanNumeralCaps{2}}{\partial w_{i,j}}
&=\dfrac{1}{\alpha_i\beta_j}\,\dfrac{\partial}{\partial w_{i,j}}\left(\dfrac{\mathbb{E}\left(S_j\,e^{-\frac{S_j}{\beta_j}}\right)}{\mathbb{E}\left(e^{-\frac{S_j}{\beta_j}}\right)}\right)
\\
&=\dfrac{1}{\alpha_i\beta_j}\dfrac{1}{\left(\mathbb{E}\left(e^{-\frac{S_j}{\beta_j}}\right)\right)^2}\,\left(
\mathbb{E}\left[X^i\,e^{-\frac{S_j}{\beta_j}}-S_j\frac{X^i}{\beta_j}e^{-\frac{S_j}{\beta_j}}\right]\cdot \mathbb{E}\left(e^{-\frac{S_j}{\beta_j}}\right)
-\mathbb{E}\left(S_j\,e^{-\frac{S_j}{\beta_j}}\right)\cdot \mathbb{E}\left(-\frac{X^i}{\beta_j} e^{-\frac{S_j}{\beta_j}}\right)
\right)
\\
&=\dfrac{1}{\alpha_i\beta_j}\left(\dfrac{\mathbb{E}\left(X^i\,e^{-\frac{S_j}{\beta_j}}\right)}{\mathbb{E}\left(e^{-\frac{S_j}{\beta_j}}\right)}
-\dfrac{1}{\beta_j}\mathbb{E}\left(X^iS_j\,\dfrac{e^{-\frac{S_j}{\beta_j}}}{\mathbb{E}\left(e^{-\frac{S_j}{\beta_j}}\right)}\right)+\dfrac{1}{\beta_j}\mathbb{E}\left(X^i\,\dfrac{e^{-\frac{S_j}{\beta_j}}}{\mathbb{E}\left(e^{-\frac{S_j}{\beta_j}}\right)}\right)\cdot \mathbb{E}\left(S_j\,\dfrac{e^{-\frac{S_j}{\beta_j}}}{\mathbb{E}\left(e^{-\frac{S_j}{\beta_j}}\right)}\right)
\right)
\\
&=\dfrac{1}{\alpha_i\beta_j}\mathbb{E}_{\mathbb{Q}_\mathbf{X}^j}\left[X^i\right]-\dfrac{1}{\alpha_i\beta_j^2} \Cov_{\mathbb{Q}_\mathbf{X}^j}\left(X^i, S_j\right);
\end{align*}
\begin{align*}
\dfrac{\partial \, \RomanNumeralCaps{3}}{\partial w_{i,j}}
&=\dfrac{\partial}{\partial w_{i,j}}\left( \frac{1}{\alpha_i}\log (-\beta/B)+\frac{1}{\alpha_i}\log \left(\mathbb{E}\left(e^{-\frac{S_j}{\beta_j}}\right)\right)\right)
=\dfrac{1}{\alpha_i}\dfrac{1}{\mathbb{E}\left(e^{-\frac{S_j}{\beta_j}}\right)}\cdot \mathbb{E}\left(-\dfrac{X^i}{\beta_j}\, e^{-\frac{S_j}{\beta_j}}\right)
\\
&=-\dfrac{1}{\alpha_i\beta_j}\mathbb{E}_{\mathbb{Q}_\mathbf{X}^j}\left[X^i\right].
\end{align*}

\noindent As a result
\begin{align*}
\dfrac{\partial \,\mathbb{E}_{\mathbb{Q}_\mathbf{X}^j}\left[Y^{i,j}\right]}{\partial w_{i,j}}&=\dfrac{\partial}{\partial w_{i,j}}\left( \RomanNumeralCaps{1}+ \RomanNumeralCaps{2}+ \RomanNumeralCaps{3}\right)
\\
&=-\mathbb{E}_{\mathbb{Q}_\mathbf{X}^j}\left[X^i\right]+\dfrac{w_{i,j}}{\beta_j}\Var_{\mathbb{Q}_\mathbf{X}^j}\left(X^i\right)+\dfrac{1}{\alpha_i\beta_j}\mathbb{E}_{\mathbb{Q}_\mathbf{X}^j}\left[X^i\right]-\dfrac{1}{\alpha_i\beta_j^2} \Cov_{\mathbb{Q}_\mathbf{X}^j}\left(X^i, S_j\right)-\dfrac{1}{\alpha_i\beta_j}\mathbb{E}_{\mathbb{Q}_\mathbf{X}^j}\left[X^i\right]
\\
&=-\mathbb{E}_{\mathbb{Q}_\mathbf{X}^j}\left[X^i\right]-\dfrac{1}{\alpha_i\beta_j^2} \Cov_{\mathbb{Q}_\mathbf{X}^j}\left(X^i, S_j\right)+\dfrac{w_{i,j}}{\beta_j}\Var_{\mathbb{Q}_\mathbf{X}^j}\left(X^i\right),\quad i\in I_j.
\end{align*}

\subsection{Proof of Proposition \ref{prop: monotonicity-2}}
\label{appendix: monotonicity-2}
 Define 
 \[ \eta_m'=\sum\limits_{k\in I_{m'}}\dfrac{w_{k,m'}}{w_{k,m}}\dfrac{1}{\alpha_k}.
\]
By theorem \ref{thm:JointResult}, for $k\in I_m,$
\begin{align*}
\sum\limits_{k\in I_{m'}}\dfrac{w_{k,m'}}{w_{k,m}}Y^{k,m}
&=\dfrac{S_m+d_m}{\beta_m}\sum\limits_{k\in I_{m'}}\dfrac{w_{k,m'}}{w_{k,m}}\dfrac{1}{\alpha_k}-\sum\limits_{k\in I_{m'}}\dfrac{w_{k,m'}}{w_{k,m}}w_{k,m}X^k\nonumber
\\
&=\dfrac{S_m+d_m}{\beta_m}\eta_m'-\sum\limits_{k\in I_{m'}}w_{k,m'}X^k.
\end{align*}
Then
\begin{align}
\mathbb{E}_{{\mathbb Q}_\mathbf{X}^m}&\left[\sum\limits_{k\in I_{m'}}\dfrac{w_{k,m'}}{w_{k,m}}Y^{k,m}\right]\nonumber\\
&=\mathbb{E}_{{\mathbb Q}_\mathbf{X}^m}\left(\dfrac{\eta_m'}{\beta_m}S_m-\sum\limits_{k\in I_{m'}}w_{k,m'}X^k\right)+\dfrac{\eta_m'}{\beta_m} \beta_m\log\left\{-\dfrac{\beta}{B}\mathbb{E}\left[\exp\left(-\frac{S_m}{\beta_m} \right)\right]\right\}\nonumber
\\
&=\eta_m'\log\left\{\exp\left[\dfrac{1}{\eta_m'} \mathbb{E}_{{\mathbb Q}_\mathbf{X}^m}\left(\dfrac{\eta_m'}{\beta_m}S_m-\sum\limits_{k\in I_{m'}}w_{k,m'}X^k\right)\right]\right\}+\eta_m'\log\left\{-\dfrac{\beta}{B}\mathbb{E}\left[\exp\left(-\frac{S_m}{\beta_m} \right)\right]\right\}\nonumber
\\
&\leq \eta_m'\log \left\{\mathbb{E}_{{\mathbb Q}_\mathbf{X}^m}\left[\exp\left(\dfrac{1}{\beta_m}S_m-\dfrac{1}{\eta_m'}\sum\limits_{k\in I_{m'}}w_{k,m'}X^k\right)\right]\right\}+\eta_m'\log\left\{-\dfrac{\beta}{B}\mathbb{E}\left[\exp\left(-\frac{S_m}{\beta_m} \right)\right]\right\}\nonumber
\\
&=\eta_m'\log \left\{\mathbb{E}\left[
\dfrac{e^{\frac{S_m}{\beta_m}}\,e^{-\frac{S_m}{\beta_m} }\,\exp\left(-\dfrac{1}{\eta_m'}\sum\limits_{k\in I_{m'}}w_{k,m'}X^k\right)}
{\mathbb{E}[e^{-\frac{S_m}{\beta_m} }]}
\right]\right\}
 +\eta_m'\log\left\{-\dfrac{\beta}{B}\mathbb{E}\left[\exp\left(-\frac{S_m}{\beta_m} \right)\right]\right\}\nonumber
\\
&=\eta_m'\log\left\{-\dfrac{\beta}{B}\mathbb{E}\left[\exp\left(-\frac{1}{\eta_m'}\sum\limits_{k\in I_{m'}}w_{k,m'}X^k \right)\right]\right\}\label{eqn: monotonicity-2}
\\
&<\beta_m'\log\left\{-\dfrac{\beta'}{B}\mathbb{E}\left[\exp\left(-\frac{1}{\beta_m'}\sum\limits_{k\in I_{m'}}w_{k,m'}X^k \right)\right]\right\},\quad\text{if }\sum\limits_{k\in I_{m'}}w_{k,m'}X^k\text{ is nonnegative,}\nonumber
\\
&:= d_{m'}.\nonumber
\end{align}

In conclusion, if both $\sum\limits_{k\in I_{m'}}w_{k,m'}X^k$ and $\sum\limits_{k\in I_{m''}}w_{k,m''}X^k$ are nonnegative, the inequality holds for the risk allocations of subgroup $I_{m'}$, as well as $I_{m''}.$ Otherwise, we have the inequality given by \eqref{eqn: monotonicity-2}.

\subsection{Necessary and Sufficient Condition for \texorpdfstring{$B$}{B} in Remark \ref{remark: Necessary and Sufficient Condition for B}}
\label{appendix: Necessary and Sufficient Condition for B}
Here we show a necessary and sufficient condition for $B$ to have trivial Nash through an example. We assume all risk factors are i.i.d. Gaussian random variables where $\sigma_{ij}=0$, $i\neq j$ and $\sigma_{ii}=\sigma$ for all $i,j$.

When all banks in one group, {\it i.e.}, $h=1$, $\beta = \sum\limits_{i=1}^N\frac{1}{\alpha_i}$,
\begin{align}
\mathbb E _{\mathbb Q_{\mathbf X}} [ Y^{1}_{\mathbf X} ] = \mathbb E _{\mathbb Q^{1}_{\mathbf X}} [ Y^{1,1}_{\mathbf X} ]
&=\frac{1}{\alpha_1}\log(\frac{\beta}{-B})-\mu_1+\frac{1}{\beta_1}\sigma -\frac{1}{2\beta_1^2\alpha_1} N\sigma.\label{eqn: trivialNashEx1}
\end{align}
When bank 1 decides to split and put some weights in another group, e.g. $\exists\, w_{1,1},w_{1,2}>0$ and $w_{1,1}+w_{1,2}=1$, then $\beta'=\beta + \frac{1}{\alpha_1}$ and $\beta_1=\beta$, $\beta_2=\frac{1}{\alpha_1}$,
\begin{align}
\mathbb E _{\mathbb Q_{\mathbf X}} [ Y^{1}_{\mathbf X} ] &= \mathbb E _{\mathbb Q^{1}_{\mathbf X}} [ Y^{1,1}_{\mathbf X} ]+\mathbb E _{\mathbb Q^{2}_{\mathbf X}} [ Y^{1,2}_{\mathbf X} ]\nonumber\\
&=\frac{2}{\alpha_1}\log(\frac{\beta'}{-B})-\mu_1+\frac{w_{1,1}^2}{\beta_1}\sigma+\frac{w_{1,2}^2}{\beta_2}\sigma\nonumber
\\
&-\frac{1}{2\beta_1^2\alpha_1} \left((N-1)\sigma +w_{1,1}^2\sigma\right)
-\frac{1}{2\beta_2^2\alpha_1} w_{1,2}^2\sigma.\label{eqn: trivialNashEx2}
\end{align}
To have trivial Nash, for bank 1, it should hold that \eqref{eqn: trivialNashEx2}$\geq$\eqref{eqn: trivialNashEx1}, which gives:
\begin{align}
    \frac{1}{\alpha_1}\log (-B)\leq \frac{1}{\alpha_1}\log\left(\frac{(\beta')^2}{\beta}\right)-\left[\frac{1}{\beta_1}-\left(\frac{w_{1,1}^2}{\beta_1}+\frac{w_{1,2}^2}{\beta_2}\right)\right]\sigma-\frac{1}{2\alpha_1}\left[\left(\frac{w_{1,1}^2}{\beta_1^2}+\frac{w_{1,2}^2}{\beta_2^2}\right)-\frac{1}{\beta_1^2}\right]\sigma.\label{eqn: trivialNashEx3}
\end{align}
Then by extending \eqref{eqn: trivialNashEx3} to all banks,
we can get the necessary and sufficient condition on $B$ to have trivial Nash: for all $i=1,\ldots,N$, $B$ satisfies
\begin{align}
    \frac{1}{\alpha_i}\log (-B)\leq \frac{1}{\alpha_i}\log\left(\frac{(\beta')^2}{\beta}\right)-\left[\frac{1}{\beta_1}-\left(\frac{w_{i,1}^2}{\beta_1}+\frac{w_{i,2}^2}{\beta_2}\right)\right]\sigma-\frac{1}{2\alpha_i}\left[\left(\frac{w_{i,1}^2}{\beta_1^2}+\frac{w_{i,2}^2}{\beta_2^2}\right)-\frac{1}{\beta_1^2}\right]\sigma,
\end{align}
where $\beta_2=\frac{1}{\alpha_i}$, $\beta_1=\sum\limits_{i=1}^N\frac{1}{\alpha_i}$ and $\beta' = \beta_1+\beta_2$.

Recall that $B$ is negative and stands for the minimal level of expected utility. Intuitively, when $B$ is small, $\log (-B)$ is large, then some of inequalities tend to be violated so that there would be no trivial Nash in the system. On the other hand, when $B$ is large (close to 0), $\log (-B)$ will be extremely small so that trivial Nash may exist in the system. %\rh{explain that smaller $B$ mean high tolerance, thus no trivial Nash, people tend to split into different groups. }

\subsection{Remark for \texorpdfstring{$S_m$}{Sm} in equation \texorpdfstring{\eqref{eqn: Sm distribution}}{}}
\label{proof: expResults}
\begin{remark}
Some results about $S_m$:
\begin{itemize}
\item $\mathbb{E}\big(e^{-S_m/\beta_m}\big)=\exp\big(-\frac{1}{\beta_m}\,\mu_m^s\,+\,\frac{1}{2\beta_m^2} (\sigma_m^s)^2 \big);$
\item For $i\in I_m$, (proof see below)
\[
\mathbb{E}\big(X^ie^{-S_m/\beta_m}\big)=\big(\mu_i-\frac{1}{\beta_m}A_m{\bm \Sigma}_{[,i]}  \big)\exp\big(-\frac{1}{\beta_m}\,\mu_m^s\,+\,\frac{1}{2\beta_m^2} (\sigma_m^s)^2 \big);
\]
\item $
\mathbb{E}\big(S_me^{-S_m/\beta_m}\big)=\big(\mu_m^s-\frac{1}{\beta_m}(\sigma_m^s)^2  \big)\exp\big(-\frac{1}{\beta_m}\,\mu_m^s\,+\,\frac{1}{2\beta_m^2} (\sigma_m^s)^2 \big);$
\end{itemize}
\end{remark}
\begin{proof}
Define $\mathbf{t}^T=(t_1,\cdots,t_N)$
\begin{align}
\mathbb{E}\big(X^ie^{\mathbf{t}^T\mathbf{X}}\big)
&=\dfrac{\partial M_\mathbf{x} (\mathbf{t})}{\partial t_i}=
\dfrac{\partial}{\partial t_i}\,\exp\big({\bm \mu}^T\mathbf{t}+\frac{1}{2}\mathbf{t}^T{\bm \Sigma}\mathbf{t}  \big)\\
&=\big(\mu_i+\mathbf{t}^T{\bm \Sigma}_{[,i]}\big)\,\exp\big({\bm \mu}^T\mathbf{t}+\frac{1}{2}\mathbf{t}^T{\bm \Sigma}\mathbf{t}\big)
\end{align}
So 
\[
\mathbb{E}\big(X^ie^{-S_m/\beta_m}\big)=\mathbb{E}\big(X^ie^{-(A_m/\beta_m) \mathbf{X}}\big)=\big(\mu_i-\frac{1}{\beta_m}A_m{\bm \Sigma}_{[,i]}  \big)\exp\big(-\frac{1}{\beta_m}\,\mu_m^s\,+\,\frac{1}{2\beta_m^2} (\sigma_m^s)^2 \big).
\]
\end{proof}

\subsection{Sufficient Condition for Local Optimal Weights}
\label{appendix: Sufficient Condition for Local Optimal Weights}
To investigate the condition \eqref{eqn:w^*condition} further, we make some reasonable assumptions on estimates and introduce some situations when they hold.
\begin{itemize}
\item assuming: all $w_{k,1},w_{k,2}\neq 0$, for all $k\neq i$, then $\beta_1 = \beta_2=\sum\frac{1}{\alpha_k}.$

Then \eqref{eqn:w^*condition} is equivalent to 
\begin{equation}
-(\frac{2}{\beta_1}-\frac{1}{\beta_1^2\alpha_i})\sigma_{ii}<\left(\frac{1}{\beta_1^2\alpha_i}-\frac{1}{\beta_1}\right)\sum\limits_{k=1,k\neq i}^N(w_{k,1}-w_{k,2})\sigma_{ki} < (\frac{2}{\beta_1}-\frac{1}{\beta_1^2\alpha_i})\sigma_{ii}.
\end{equation}
Since 
\[
\frac{1}{\beta_1^2\alpha_i}-\frac{1}{\beta_1}<0,\quad\quad 
|(w_{k,1}-w_{k,2})\sigma_{ki}|\leq |\sigma_{ki}| \quad \text{ for all }k,i,
\]
we can deduce a {sufficient} condition for $w_{(i)}^*\in (0,1)$ for individual $i$:
\begin{equation}\label{eqn:w^*condition-sufficient-1}
\left(1-\frac{1}{\beta_1\alpha_i}\right)\sum\limits_{k=1,k\neq i}^N |\sigma_{ki}| < (2-\frac{1}{\beta_1\alpha_i})\sigma_{ii}.
\end{equation} 
\begin{remark}
From above, we can have a rough estimation: if $\sigma_i\sim \sigma$, $\rho >0$ and $\alpha = [1,1,1,1]$ ({\it i.e.}, {no extremely large $\sigma$ and no extremely small $\alpha$}), the inequality is true when $\rho < 7/9$, according to
\[
(1-\frac{1}{4})\cdot (4-1)\rho\sigma^2 \,<\, (2-\frac{1}{4})\sigma^2.
\]
This explains why in numerical experiments, when we apply reasonable values of parameters, the optimal weights are often located between $(0,1)$. And when $(w_{k,1},w_{k,2})$ are close for most $k\neq i$, the weights for individual $i$ are around $0.5$ because of small $A\sim 0$ and $B_1\sim B_2$ in this case.

If for individual $i$, $\sigma_i$ is small and $\sigma_{i}<<\sigma_{i^0}$ for some $i^0$, then the sufficient condition doesn't hold and by numerical results, we found $w^*_{(i)}$ is not in $(0,1)$ anymore.

%Unless we found strong reference suggesting us using some extreme values, we could always assume reasonable/moderate values as examples and experiments.
\end{remark}

Assuming for some $k$('s), $w_{k,1}, w_{k,2}$ can be $0$ or $1$. Then $\beta_1\neq \beta_2$ but it is still true that $-1\leq w_{k,1}-w_{k,2}\leq 1$ for all $k$.

\eqref{eqn:w^*condition} can be rewritten as 
\begin{equation}
-(\frac{2}{\beta_2}-\frac{1}{\beta_2^2\alpha_i})\sigma_{ii}<\sum\limits_{k=1,k\neq i}^N\left(w_{k,1}(\frac{1}{\beta_1^2\alpha_i}-\frac{1}{\beta_1})-w_{k,2}(\frac{1}{\beta_2^2\alpha_i}-\frac{1}{\beta_2})\right)\sigma_{ki} < (\frac{2}{\beta_1}-\frac{1}{\beta_1^2\alpha_i})\sigma_{ii}.
\end{equation}
A {sufficient} condition for $w_{(i)}^*\in (0,1)$ for individual $i$:
\begin{equation}
\left\{
\begin{array}{ll}
&\left(\frac{1}{\beta_1}-\frac{1}{\beta_1^2\alpha_i}\right)\sum\limits_{k=1,k\neq i}^N |\sigma_{ki}| < (\frac{2}{\beta_2}-\frac{1}{\beta_2^2\alpha_i})\sigma_{ii},
\\
&\left(\frac{1}{\beta_2}-\frac{1}{\beta_2^2\alpha_i}\right)\sum\limits_{k=1,k\neq i}^N |\sigma_{ki}| < (\frac{2}{\beta_1}-\frac{1}{\beta_1^2\alpha_i})\sigma_{ii}.
\end{array}
\right.
\end{equation} 
This is a generalization of \eqref{eqn:w^*condition-sufficient-1}.
\end{itemize}

\subsection{Necessary and Sufficient Condition for Optimal Weights}
\label{appendix: Necessary and Sufficient Condition for Optimal Weights}
First, we compare the minimal risk over non-zero weights \eqref{eqn:non-corner optimal-risk} with the corner case $(w_{i1},w_{i2})=(0,1)$:
\begin{align}
\mathbb{E}_{\mathbb{Q}_\mathbf{X}}\left[Y^{i}\right]\bigg|_{w=w^*}-\eqref{eqn:corner-2} 
&=\frac{2}{\alpha_i}\log(\frac{\beta'}{-B})-\frac{1}{\alpha_i}\log(\frac{\beta}{-B})\\ &+w^* \sum\limits_{k=1,k\neq i}^N \left[\left(\frac{w_{k,1}}{\beta_1}-\frac{w_{k,2}}{\beta_2}\right)-\left(\frac{w_{k,1}}{\beta_1^2\alpha_i}-\frac{w_{k,2}}{\beta_2^2\alpha_i}\right)\right]\sigma_{ki}\nonumber
\\
&+\left(\frac{(w^*)^2}{\beta_1}-\frac{(w^*)^2}{2\beta_1^2\alpha_i}\right)\sigma_{ii}+((1-w^*)^2-1)\left(\frac{1}{\beta_2}-\frac{1}{2\beta_2^2\alpha_i}\right)\sigma_{ii}\nonumber
\\
&-\frac{1}{2\beta_1^2\alpha_i} 
\sum\limits_{m,k=1,\neq i}^N w_{k,1}w_{m,1}\sigma_{km}\nonumber
\\
(\text{use notation $A,B_1,B_2$})\quad
&=\frac{2}{\alpha_i}\log(\frac{\beta'}{-B})-\frac{1}{\alpha_i}\log(\frac{\beta}{-B})+w^*\cdot (-A)+\frac{(w^*)^2}{2}B_1+\frac{w^*(w^*-2)}{2}B_2\nonumber
\\
&-\frac{1}{2\beta_1^2\alpha_i} 
\sum\limits_{m,k=1,\neq i}^N w_{k,1}w_{m,1}\sigma_{km}\nonumber
\\
&=\frac{2}{\alpha_i}\log(\frac{\beta'}{-B})-\frac{1}{\alpha_i}\log(\frac{\beta}{-B})-w^*\cdot \underbrace{(A-\frac{w^*}{2}B_1-\frac{w^*-2}{2}B_2)}_{=\Delta}\nonumber
\\
&-\frac{1}{2\beta_1^2\alpha_i} 
\sum\limits_{m,k=1,\neq i}^N w_{k,1}w_{m,1}\sigma_{km}\nonumber
\\
&=\frac{2}{\alpha_i}\log(\frac{\beta'}{-B})-\frac{1}{\alpha_i}\log(\frac{\beta}{-B})-\dfrac{(A+B_2)^2}{2(B_1+B_2)}
-\frac{1}{2\beta_1^2\alpha_i} 
\sum\limits_{m,k=1,\neq i}^N w_{k,1}w_{m,1}\sigma_{km}.\label{eqn:condition-2}
\end{align}
Since $w^*=\frac{A+B_2}{B_1+B_2}$, we get $\Delta=A-\frac{w^*}{2}(B_1+B_2)+B_2=\frac{A+B_2}{2}$.

Then we compare the minimal risk of non-boundary case \eqref{eqn:non-corner optimal-risk} with the boundary case $(w_{i1},w_{i2})=(1,0)$:
\begin{align}
\mathbb{E}_{\mathbb{Q}_\mathbf{X}}\left[Y^{i}\right]\bigg|_{w=w^*}-\eqref{eqn:corner-1} 
&=\frac{2}{\alpha_i}\log(\frac{\beta'}{-B})-\frac{1}{\alpha_i}\log(\frac{\beta}{-B})\nonumber
\\
&+(w^*-1) \sum\limits_{k=1,k\neq i}^N \left[\left(\frac{w_{k,1}}{\beta_1}-\frac{w_{k,2}}{\beta_2}\right)-\left(\frac{w_{k,1}}{\beta_1^2\alpha_i}-\frac{w_{k,2}}{\beta_2^2\alpha_i}\right)\right]\sigma_{ki}\nonumber
\\
&+((w^*)^2-1)\left(\frac{1}{\beta_1}-\frac{1}{2\beta_1^2\alpha_i}\right)\sigma_{ii}+(1-w^*)^2\left(\frac{1}{\beta_2}-\frac{1}{2\beta_2^2\alpha_i}\right)\sigma_{ii}\nonumber
\\
&-\frac{1}{2\beta_2^2\alpha_i} 
\sum\limits_{m,k=1,\neq i}^N w_{k,2}w_{m,2}\sigma_{km}\nonumber
\\
&=\frac{2}{\alpha_i}\log(\frac{\beta'}{-B})-\frac{1}{\alpha_i}\log(\frac{\beta}{-B})-(w^*-1)\left(A-\frac{w^*+1}{2}B_1-\frac{w^*-1}{2}B_2\right)\nonumber
\\
&-\frac{1}{2\beta_2^2\alpha_i} 
\sum\limits_{m,k=1,\neq i}^N w_{k,2}w_{m,2}\sigma_{km}\nonumber
\\
&=\frac{2}{\alpha_i}\log(\frac{\beta'}{-B})-\frac{1}{\alpha_i}\log(\frac{\beta}{-B})-\dfrac{(A-B_1)^2}{2(B_1+B_2)}
-\frac{1}{2\beta_2^2\alpha_i} 
\sum\limits_{m,k=1,\neq i}^N w_{k,2}w_{m,2}\sigma_{km}.\label{eqn:condition-1}
\end{align}

If both \eqref{eqn:condition-1} and \eqref{eqn:condition-2} are less than $0$, we get conditions \eqref{eqn:condition}, which are the necessary and sufficient conditions to conclude non-zero weights $(w^*,1-w^*)$ are the optimal weights to minimize the total risk.

\end{document}